\newcommand{\derv}[1]{\frac{\partial}{\partial #1}}
\newcommand{\deriv}[2]{\frac{\partial #1}{\partial #2}}
\newcommand{\beqn}{\begin{equation}}
\newcommand{\eeqn}{\end{equation}}
\newcommand{\beqnar}{\begin{eqnarray}}
\newcommand{\eeqnar}{\end{eqnarray}}
\newtheorem{theorem}{Theorem}[section]
\newtheorem{proposition}[theorem]{Proposition}
\newcommand{\contr}{\,\lrcorner\,}
\shorttitle{Godbillon-Vey Helicity and Magnetic Helicity in MHD}
\title{Godbillon-Vey Helicity and Magnetic Helicity in Magnetohydrodynamics}
\author{G. M. Webb\aff{1}
  \corresp{\email{gmw0002@uah.edu}},
  A. Prasad\aff{1},
  S.  C. Anco\aff{2},
\& \ Q. Hu\aff{1,3}} 
\affiliation{\aff{1}Center for Space Plasma and Aeronomic Research, the University of Alabama in Huntsville, Huntsville AL 35805, USA.
\aff{2}Department of Mathematics, Brock University, St. Catharines, ON L2S 3A1 Canada.
\aff{3} Department of Space Science, The University of Alabama in Huntsville, 
Huntsville AL35899, USA.}
\begin{document}
\maketitle
\begin{abstract}

The Godbillon-Vey invariant occurs in homology theory,
and algebraic topology, 
when conditions for a co-dimension 1, foliation of a 3D manifold
are satisfied. 
 The magnetic Godbillon-Vey helicity invariant 
in magnetohydrodynamics (MHD) is a higher order helicity invariant 
that occurs for flows, in which the magnetic helicity density $h_m={\bf A}{\bf\cdot}{\bf B}=
{\bf A}{\bf\cdot}(\nabla\times{\bf A})=0$, where ${\bf A}$ is the magnetic vector potential and ${\bf B}$ is the magnetic induction. 
This paper obtains evolution equations for the magnetic Godbillon-Vey 
field $\boldsymbol{\eta}={\bf A}\times{\bf B}/|{\bf A}|^2$ and the Godbillon-Vey 
helicity density $h_{gv}=\boldsymbol{\eta}{\bf\cdot}(\nabla\times{\boldsymbol\eta})$
in general MHD flows in which either $h_m=0$ or $h_m\neq 0$. 
A conservation law for $h_{gv}$ occurs in flows for which $h_m=0$. For 
$h_m\neq 0$  the evolution equation for $h_{gv}$
contains a source term 
 in which $h_m$ is coupled to $h_{gv}$ via the shear tensor 
of the background flow. The transport equation for 
$h_{gv}$ also depends on the electric field potential $\psi$, 
which is related to the gauge for ${\bf A}$, 
which takes its simplest form for the advected ${\bf A}$ gauge 
in which $\psi={\bf A\cdot u}$ 
where ${\bf u}$ is the fluid velocity. An application of the Godbillon-Vey 
magnetic helicity to nonlinear force-free magnetic fields used in 
solar physics is investigated. The possible uses  
of the Godbillon-Vey helicity in zero helicity flows in ideal fluid mechanics,
and in zero helicity Lagrangian kinematics of three-dimensional advection 
are discussed.
\end{abstract}
\section{Introduction}
In ideal fluid dynamics and magnetohydrodynamics (MHD), there is a class of 
invariants that are Lie dragged by the flow (e.g. \cite{Moiseev82}; 
\cite{Tur93}; \cite{Kats03}; \cite{Moffatt69, Moffatt78}; 
\cite{Salmon82, Salmon88};
\cite{Moffatt92}; \cite{Cotter07}; 
\cite{Holm98}; \cite{Padhye96a, Padhye96b}; 
\cite{Yahalom13, Yahalom17a, Yahalom17b}; 
\cite{Webb14a, Webb14b}). These Lie dragged invariants 
in many cases are 
related to fluid relabelling symmetries and Casimirs for non-canonical Hamiltonian brackets
(e.g. \cite{Morrison82}; \cite{Holm83a, Holm83b}; 
\cite{Padhye96a, Padhye96b}; 
\cite{Holm85}; \cite{Morrison98}; \cite{Hameiri04}; \cite{Tanehashi15}; 
\cite{Besse17}). \cite{AncoDar09} have classified conservation laws for compressible isentropic ideal fluids in $n>1$ 
spatial dimensions, and for the case of non-isentropic flows 
in \cite{AncoDar10}. \cite{AncoWebb18} describe heirarchies of vorticity invariants related to conserved helicity and cross helicity integrals for ideal 
fluids, using familiar vector calculus  operations (and their extension 
to tensor calculus). 

Magnetic helicity is an important quantity in MHD describing the magnetic field topology
(e.g. \cite{Elsasser56}; \cite{Woltjer58}; \cite{Kruskal58}; \cite{Berger84}; 
\cite{Finn85, Finn88}; \cite{Moffatt78, Moffatt92};
\cite{Low06, Low11}; \cite{Longcope08}; \cite{Webb10a};  \cite{Bieber87};
\cite{Webb14a, Webb14b}; \cite{Prior14}; \cite{Tanehashi15}; 
\cite{Blackman15}, \cite{Akhmetev17}). 

\cite{Calkin63} and \cite{Webb17} derived the conservation 
law for the magnetic helicity density $h_m={\bf A}{\bf\cdot}{\bf B}$ 
via gauge field theory. The symmetry responsible for the magnetic helicity conservation law, 
for an electric potential $\psi$, where ${\bf E}=-\nabla\psi-\partial{\bf A}/\partial t$ 
and ${\bf B}=\nabla\times {\bf A}$ is not a fluid relabelling symmetry. It is due 
to a gauge symmetry, involving the Lagrange multipliers that enforce Faraday's equation 
and Gauss's equation ($\nabla{\bf\cdot}{\bf B}=0$) in the variational principle 
(\cite{Webb17}). 

In fluid dynamics, the kinetic fluid helicity density 
$h_k={\bf u}{\bf\cdot}(\nabla\times {\bf u})={\bf u}{\bf\cdot}\boldsymbol{\omega}$ 
for a barotropic flow (i.e. the gas pressure: $p=p(\rho)$), satisfies the 
local conservation law:
\begin{equation}
\derv{t}({\bf u}{\bf\cdot}\boldsymbol{\omega})
+\nabla{\bf\cdot}\left[({\bf u}{\bf\cdot}\boldsymbol{\omega}) {\bf u}
+\boldsymbol{\omega}\left(h+\Phi-\frac{1}{2} u^2\right)\right]=0, 
\label{eq:god1.1}
\end{equation}
where $h$ is the gas enthalpy, ${\bf u}$ is the fluid velocity and $\Phi({\bf x})$ 
is an external gravitational potential (e.g. the gravitational 
potential of the Sun for the Solar Wind flow). The conserved integral:
\begin{equation}
H_f=\int_{V_m} {\bf u\cdot\boldsymbol{\omega}}\ d^3x, \label{eq:god1.1a}
\end{equation}
for a volume $V_m$ moving with the fluid is known as the fluid helicity
(e.g.\cite{Moffatt69}).
 If ${\boldsymbol{\omega}\bf \cdot n}=0$ on the 
boundary $\partial V_m$ moving with the flow, then $H_f$ is conserved 
following the flow (e,g, \cite{Moffatt69}), i.e. $dH_f/dt=0$ where 
$d/dt=\partial/\partial t+{\bf u}{\bf\cdot}\nabla$ is the Lagrangian 
time derivative following the flow. The volume integral $H_f$ 
describes the linking of the poloidal and toroidal 
vorticity fluxes. It is used to describe topological 
features of the vortex tubes 
(e.g. whether they are knotted or otherwise).

In ideal MHD, the magnetic helicity conservation law for a 
non-dissipative fluid is given by:
\begin{equation}
\derv{t}\left({\bf A\cdot B}\right)
+\nabla{\bf\cdot}\left[({\bf A\cdot B}){\bf u}
+{\bf B}\left(\psi-{\bf A}{\bf\cdot}{\bf u}\right)\right]=0, \label{eq:god1.2}
\end{equation}
where ${\bf E}=-\nabla\psi-\partial A/\partial t=-({\bf u}\times{\bf B})$ 
is the electric field in the MHD approximation 
and $\psi$ is electric field potential (e.g. \cite{Berger84}). The 
magnetic helicity for a volume $V_m$ moving with the fluid is defined as:
\begin{equation}
H_m({\bf A},{\bf B})=\int_{V_m} {\bf A}{\bf\cdot} {\bf B}\ d^3x, 
\label{eq:god1.2a}
\end{equation}
If ${\bf B\cdot n}=0$ on the boundary $\partial V_m$ then $H_m$ is conserved
moving with the flow, i.e. $dH_m/dt=0$.  
 The helicity integral (\ref{eq:god1.2a})
is independent of the gauge of ${\bf A}$, i.e. 
$H_m({\bf A}+\nabla\psi,{\bf B})=H_m({\bf A},{\bf B})$ 
provided that $\psi$ is smooth and single valued within the volume $V_m$, 
and provided ${\bf B\cdot n}=0$ on the boundary 
$\partial V_m$.   

For magnetic fields in which ${\bf B\cdot n}\neq 0$ on the boundary surface 
$\partial V$, a gauge independent definition of relative helicity 
(\cite{Finn85, Finn88}) is defined as:
\begin{equation}
H_r=\int_V d^3x ({\bf A}+{\bf A}_p){\bf\cdot}({\bf B}-{\bf B}_p), 
\label{eq:god1.2b}  
\end{equation}
(see also \cite{Berger84} for an equivalent definition) 
where ${\bf B}=\nabla\times {\bf A}$ describes the magnetic field of 
interest and ${\bf B}_p=\nabla\times{\bf A}_p$ describes a comparison 
magnetic field, with the same normal flux as ${\bf B}$ on the boundary $\partial V$ 
(in many instances it is useful to choose ${\bf B}_p$ to be a potential 
magnetic field, with the same normal magnetic flux as ${\bf B}$ 
on $\partial V$).

More recent efforts  by \cite{Low06,Low11} and 
\cite{Berger18} discuss  the concept of absolute magnetic helicity  
 which is analogous to the 
 the linkage of the toroidal and poloidal magnetic fluxes. 
 \cite{Kruskal58}  obtained a similar interpretation 
of magnetic helicity for Tokamak fusion devices. The work 
by \cite{Berger18} invokes the Gauss-Bonnet theorem as 
part of the discussion and does not at the outset assume 
that the field splits cleanly into toroidal and poloidal 
components. 
 
 There are other conservation laws in 
MHD. In particular, the cross helicity  
density $h_c={\bf u}{\bf\cdot}{\bf B}$  conservation law for barotropic flows 
 is important in MHD turbulence theory (e.g. \cite{Zhou90a, Zhou90b}; 
\cite{Zank12}) and in MHD (e.g. \cite{Webb14a, Webb14b}). 
The cross helicity integral is defined as $H_c=\int_{V_m} {\bf u\cdot B}\ d^3x$
where ${\bf B\cdot n}=0$ on $\partial V_m$. In ideal barotropic MHD 
$dH_c/dt=0$. A generalized, nonlocal  cross helicity applies for non-barotropic 
MHD (e.g. \cite{Webb14a, Webb14b}, \cite{Yahalom17a, Yahalom17b}).
 Cross helicity describes the linkage 
of the vortex tubes and magnetic flux tubes. This definition of cross helicity
is that conventionally used in plasma physics, but it has a wider definition 
in terms of the cross helicity 
density ${\bf V}{\bf\cdot}(\nabla\times{\bf W})$ for two 
vector fields ${\bf V}$ and ${\bf W}$. 
\cite{Yahalom13, Yahalom17a, Yahalom17b}
has described magnetic helicity, barotropic cross helicity 
and nonlocal (non-barotropic) cross helicity
in terms of MHD Aharonov-Bohm effects. 

\cite{Tur93}, \cite{Webb14a}, \cite{Webb18}, 
 and \cite{AncoWebb18} give discussions of Lie dragged 
vector fields $\mathbf{b}\contr \partial/\partial\mathbf{x}$, one forms 
$\boldsymbol{\omega}= {\bf C}{\bf\cdot} d{\bf x}$, two forms ${\bf J}{\bf\cdot}d{\bf S}$ and three forms $D d^3x$
and scalars, $R$. 
An example of a Lie dragged two-form in MHD is the magnetic flux 
$\beta={\bf B}{\bf\cdot}d{\bf S}$. Faraday's equation can then be expressed 
in terms of the Lie dragging of the two-form $\beta$ with the flow (i.e. 
Faraday's law is equivalent to the statement that the magnetic flux $\beta$
is conserved moving with the flow). Entropy $S$ is an advected scalar, 
and $[\mathbf{B}/\rho]\contr\partial/\partial\mathbf{x}$  
is an invariant  vector 
field which is Lie dragged with the flow. 

\cite{Tur93} in their study of Lie dragged invariants in MHD flows, 
asked the question: Given ${\bf A}{\bf\cdot}{\bf B}=0$, is there 
a higher order magnetic,  Lie dragged 
integral invariant (i.e. volume integral analogous to $H_m$,  
e.g. \cite{Tur93} and \cite{Webb18}). 
The answer to this question is that in general, 
there is a higher order topological invariant
known as the Godbillon-Vey invariant for 
flows with ${\bf A}{\bf\cdot}{\bf B}=0$.
The condition ${\bf A}{\bf\cdot}{\bf B}\equiv {\bf A}{\bf\cdot}(\nabla\times{\bf A})=0$ 
is the condition that the Pfaffian equation ${\bf A}{\bf\cdot}d {\bf x}=0$ is integrable 
(e.g. \cite{Sneddon57}, Ch. 1). The Pfaffian is integrable means that there exists an integrating 
factor $\mu$ such that $\mu {\bf A} {\bf\cdot}d{\bf x}=\nabla 
\lambda{\bf\cdot} d{\bf x}=d\lambda$ 
for some potential function 
$\lambda$. In this case, the family of surfaces $\lambda({\bf x})=c_1=const.$ 
forms a solution family of 2D 
surfaces (a foliation) with normal 
$\hat{\bf A}={\bf A}/|{\bf A}|\propto \nabla\lambda$ 
which fill up 3D space 
(see  \cite{Sneddon57} for a proof of both the necessity and sufficiency of the 
condition for integrability). This same idea was used by \cite{Godbillon71}  
to describe foliations of co-dimension 1, in 3D space, in homology theory. Homology theory 
has wide applications in algebraic topology, which is concerned with the genus (number of holes 
in a surface) and other topological invariants in the geometry of manifolds 
(e.g. \cite{Thurston72},
\cite{Arnold98}, \cite{Fulton95}, \cite{Lee97}).  

The Godbillon-Vey one-form $\boldsymbol{\eta}{\bf\cdot}d{\bf x}$ 
and the Godbillon Vey helicity 3-form: 
$\boldsymbol{\eta}{\bf\cdot} (\nabla\times\boldsymbol{\eta})\ d^3x$, are 
also defined for flows with 
${\bf A}{\bf\cdot}{\bf B}\neq 0$ (but in that case the space 
does not consist of a family 
of 2D surfaces filling up 3D space). The integral form of the Godbillon 
Vey helicity for a finite volume $V_m$ moving with the flow is defined as
\begin{equation}
H_{gv}=\int_{V_m}\boldsymbol{\eta}{\bf\cdot}(\nabla\times\boldsymbol{\eta})
\ d^3x  
\quad\hbox{where}\quad \boldsymbol{\eta}=\frac{{\bf A}\times {\bf B}}{|{\bf A}|^2},
\label{eq:1.2c}
\end{equation} 
is  the Godbillon-Vey vector field 
(\cite{Godbillon71, Reinhart73}).  
If ${\bf B}{\bf\cdot n}=0$ on $\partial V_m$ 
and if ${\bf A\cdot B}=0$, the Godbillon-Vey helicity $H_{gv}$ is 
conserved following the flow, i.e. $dH_{gv}/dt=0$. This result is not true 
if ${\bf A\cdot B}\neq 0$.

The Godbillon-Vey helicity studied by \cite{Reinhart73} corresponds 
to using a unit vector for ${\bf A}$, $\hat{\bf A}=\bf{A}/|{\bf A}|$, 
and the Godbillon-Vey field is given by $\hat{\boldsymbol\eta}=\hat{\bf A}
\times(\nabla\times{\hat{\bf A}})$ and the Godbillon-Vey helicity density 
is given by 
$\hat{\boldsymbol\eta}{\bf\cdot}\nabla\times\hat{\boldsymbol\eta}$ 
(see also discussion in Appendix E). The \cite{Reinhart73} 
meaning of $\hat{\bf A}$ is just the unit normal to the foliation, 
and does not have any connection to MHD.
  
For the one-form $\alpha={\bf A}{\bf\cdot}d{\bf x}$, the Reeb vector field 
${\bf R}$
 satisfies ${\bf R}\lrcorner\alpha=1$ 
and ${\bf R}\lrcorner (d\alpha)=0$. Because $d\alpha={\bf B}{\bf\cdot}d{\bf S}
=B_x dy\wedge dz+B_y dz\wedge dx+B_z dx\wedge dy$ we require that 
${\bf R}\lrcorner ({\bf B}{\bf\cdot}d{\bf S})=
-{\bf R}\times{\bf B}{\bf\cdot}d{\bf x}=0$. Thus, the two 
conditions for the Reeb vector are that ${\bf R}{\bf\cdot}{\bf A}=1$ 
and ${\bf R}\times {\bf B}=0$. One solution of the above equations is 
${\bf B}=\beta {\bf A}$ and ${\bf R}=\lambda {\bf B}=\lambda\beta {\bf A}$. 
These conditions lead to the equation $\nabla\times{\bf A}=\beta {\bf A}$ where 
$\beta={\bf A\cdot B}/A^2$ and $\lambda =1/{\bf A\cdot B}$. The equation 
for ${\bf A}$ is that for a Beltrami flow, i. e. the Reeb vector 
${\bf R}$ corresponds to a Beltrami flow. 
 The MHD topological soliton (\cite{Kamchatnov81} 
and \cite{Semenov02}) satisfies $\nabla\times {\bf A}=\beta {\bf A}$ 
where $\beta=k A$ and $k$ is a constant. Similarly, the well known ABC flow
(Arnold, Beltrami, Childress flows) studied by \cite{Dombre86} are examples
of Beltrami flows, which exhibit both chaotic and integrable flows. 

Force free magnetic fields satisfying $\nabla\times{\bf B}=\alpha {\bf B}$
are Beltrami fields which are used to model solar 
magnetic field structures in highly conducting, low beta photospheric 
 plasmas (e.g. \cite{Chandrasekhar57}, \cite{Low90}, \cite{Prasad14}).
\cite{Prasad14} have shown that the \cite{Low90}  force free magnetic fields 
have zero magnetic helicity $h_m={\bf A\cdot B}$ in an appropriate gauge. 
This class of fields are clearly examples of magnetic fields 
that can in principle have a non-zero Godbillon-Vey helicity, 
but have zero helicity in the gauge used by \cite{Prasad14}. 
\cite{Prasad14} show that the \cite{Low90} solutions 
have non-trivial relative magnetic helicity. 

The aim of the present paper is to  derive 
an evolution equation for the Godbillon-Vey helicity density
$h_{gv}= \boldsymbol{\eta}{\bf\cdot}(\nabla\times\boldsymbol{\eta})$, 
for general MHD flows, 
in which $h_m={\bf A}{\bf\cdot}{\bf B}\neq 0$.  We show, that there is 
an intimate connection 
between the Godbillon-Vey helicity $h_{gv}$ evolution and the 
magnetic helicity density 
$h_m={\bf A}{\bf\cdot}{\bf B}$ in which $h_m$ acts as a source in the $h_{gv}$ equation, 
in which the shear tensor of the flow, acts as a coupling agent between the two types of 
helicity. 

In Section 2 we introduce the usual MHD equations and the magnetic helicity transport equation 
derived by \cite{Berger84} and others. 
In Section 3 we derive (a)\ the magnetic helicity transport equation 
and (b)\ describe the 
Godbillon-Vey one-form and helicity. In Section 4, we derive the transport equation for the 
Godbillon-Vey helicity $h_{gv}$ based on a decomposition of the magnetic field induction 
${\bf B}$ in the form:
\begin{equation}
{\bf B}={\bf B}_{\parallel}+{\bf B}_\perp=\beta{\bf A}+ \boldsymbol{\eta}\times {\bf A}, 
\label{eq:god1.3}
\end{equation}
where
\begin{equation}
\boldsymbol{\eta}=\frac{{\bf A}\times{\bf B}}{|{\bf A}|^2},\quad 
\beta=\frac{h_m}{|{\bf A}|^2}, \quad\hbox{and}\quad h_m={\bf A}{\bf\cdot}{\bf B}. 
\label{eq:god1.4}
\end{equation}
Equation (\ref{eq:god1.3}) can also be written in the form:
\begin{equation}
{\bf B}_\parallel={\bf B}{\bf\cdot}\hat{\bf A}\hat{\bf A}\equiv\beta {\bf A},
\quad {\bf B}_\perp={\bf B}-{\bf B}{\bf\cdot}\hat{\bf A}\hat{\bf A}\equiv
\boldsymbol{\eta}\times{\bf A}, \label{eq:god1.5}
\end{equation}
are the components of ${\bf B}$ parallel and perpendicular to ${\bf A}$, and 
$\hat{\bf A}={\bf A}/|{\bf A}|$ is the unit vector parallel to ${\bf A}$.

Section 5 determines the Godbilllon-Vey magnetic helicity density 
for the \cite{Low90} nonlinear, force-free magnetic fields used to describe 
photospheric magnetic fields in solar physics. 

Section 6 concludes with a summary and discussion.

In appendix A, we provide a detailed derivation of the conservation law  for 
the Godbillon helicity density 
$h_{gv}=\boldsymbol{\eta}{\bf\cdot}(\nabla\times\boldsymbol{\eta})$ 
for the case ${\bf A\cdot B}=0$ using the Lie dragging of differential 
forms (see also \cite{Tur93}, \cite{Webb14a}, \cite{Webb18}). 
Appendix B, provides a vector Calculus derivation of the Godbillon-Vey 
helicity evolution equation for general MHD flows, 
both for the case ${\bf A\cdot B}=0$ and  for the case
${\bf A\cdot B}\neq 0$ (We also discuss  
the gauge potential used for ${\bf A}$). In Appendix C, 
we explore the use of Clebsch potential representations for ${\bf A}$
which are related to the integrability of ${\bf A}{\bf\cdot}d{\bf x}$
in the case ${\bf A\cdot B}=0$. 
We obtain the form of $h_{gv}$ in terms
of Clebsch variables or Euler potentials, which are advected with the flow.  
The analysis in Appendix C can be further developed using               
the differential geometry  of surfaces in three space dimensions
 (e.g. \cite{Lipschutz69}, \cite{Boozer83, Boozer04}, \cite{Kobayashi63}, 
\cite{Lee97}). 
Appendix D discusses gauge transformations for the magnetic vector 
potential ${\bf A}$ which are compatible with 
the condition ${\bf A}{\bf\cdot} {\bf B}=0$ and co-dimension one foliations. 
Appendix E derives the \cite{Reinhart73} 
 formula for the Godbillon-Vey invariant for a co-dimension 1 foliation in three-dimensional geometry (i.e. 
a family of two dimensional surfaces or foliation), in terms of the 
curvature and torsion of the curves normal to the foliation, 
and in terms of the 
second fundamental form for the surface. The connection between  
the differential geometry formulation of the Godbillon-Vey invariant 
by \cite{Reinhart73} 
and the Godbillon-Vey invariant used in this paper is described. 
Appendix F describes Clebsch potential representations for the \cite{Low90} 
nonlinear force free magnetic fields. Appendix G describes 
the \cite{Reinhart73} form of the Godbillon-Vey invariant for the 
\cite{Low90} force-free magnetic field using spherical polar coordinates.
\section{The MHD Equations}

The ideal MHD equations, consist of the mass continuity equations:
\begin{equation}
\deriv{\rho}{t}+\nabla{\bf\cdot}(\rho {\bf u})=0; \label{eq:god2.1}
\end{equation}
the momentum equation:
\begin{equation}
\derv{t}\left(\rho {\bf u}\right)+\nabla{\bf\cdot}
\biggl[\rho {\bf u}{\bf u}
+\left(p+\frac{B^2}{2\mu_0}\right){\sf I}-\frac{{\bf B}{\bf B}}
{\mu_0}\biggr]=0; \label{eq:god2.2}
\end{equation}
the entropy advection equation:
\begin{equation}
\deriv{S}{t}+{\bf u}{\bf\cdot}\nabla S=0; \label{eq:god2.3}
\end{equation}
Faraday's equation:
\begin{equation}
\deriv{\bf B}{t}-\nabla\times({\bf u}\times{\bf B})=0; 
\label{eq:god2.4}
\end{equation}
and Gauss's equation:
\begin{equation}
\nabla{\bf\cdot}{\bf B}=0; \label{eq:god2.5}
\end{equation}
supplemented by the first law of thermodynamics, 
which is related to the equation 
of state for the gas in ideal MHD (e.g. $p=p(\rho,S)$). Here $\rho$, ${\bf u}$, $p$, $S$, 
and ${\bf B}$  are the gas density, fluid velocity, 
pressure, entropy and magnetic field 
induction respectively.  
 Faraday's equation (\ref{eq:god2.4}) is sometimes written with the 
addition of an  
extra term of ${\bf u}\nabla{\bf\cdot}{\bf B}$ 
on the left-hand side. This is useful in numerical MHD, 
where numerically generated $\nabla{\bf\cdot}{\bf B}\neq 0$ 
can cause numerical errors and instabilities in the MHD system.  
The problem of the effects of $\nabla{\bf\cdot}{\bf B}\neq 0$, and the methods
used to reduce numerically generated $\nabla{\bf\cdot}{\bf B}$ have been 
extensively discussed in the numerical MHD literature (e.g. \cite{Evans88},
\cite{Powell99},
\cite{Janhunen00},  \cite{Dedner02},\cite{Balsara04}, \cite{Stone09}, 
\cite{Webb10}).

Because $\nabla{\bf\cdot B}=0$ (Gauss's equation), we can express ${\bf B}$
in terms of the magnetic vector potential ${\bf A}$ as: 
\begin{equation}
{\bf B}=\nabla\times{\bf A}. \label{eq:god2.6}
\end{equation}
Faraday's equation (with $\nabla{\bf\cdot}{\bf B}=0$) in ideal MHD 
is given by:
\begin{equation}
\deriv{\bf B}{t}+\nabla\times{\bf E}=0
\quad\hbox{where}\quad {\bf E}=-{\bf u}\times{\bf B}, 
\label{eq:god2.9}
\end{equation}
is the electric field in the fixed inertial frame.   
From (\ref{eq:god2.6})-(\ref{eq:god2.9}),  
\begin{equation}
\nabla\times({\bf A}_t+{\bf E})=0, \label{eq:god2.9a}
\end{equation}
 implying:
\begin{equation}
{\bf E}=-\nabla\psi-\deriv{\bf A}{t}\quad \hbox{or}\quad \deriv{\bf A}{t}+{\bf E}+\nabla\psi=0. 
\label{eq:god2.10}
\end{equation}
Here $\psi$ is an arbitrary gauge potential obtained by 
solving (\ref{eq:god2.9a}) for ${\bf E}$. 
Equations (\ref{eq:god2.9})-(\ref{eq:god2.10}) 
and Gauss's equation $\nabla{\bf\cdot B}=0$ are used below to derive 
the local conservation law for the 
magnetic helicity density $h_m={\bf A\cdot B}$.

\section{Magnetic helicity and Godbillon-Vey invariant}
In this section we derive the magnetic helicity transport equation, 
and the Godbillon-Vey helicity transport equation.

\subsection{\bf Magnetic helicity}
Using the two forms of Faraday's equation (\ref{eq:god2.9}) 
and (\ref{eq:god2.10}) 
in the combination:
\begin{equation}
{\bf A}{\bf\cdot}\left({\bf B}_t+\nabla\times {\bf E}\right) 
+{\bf B\cdot}\left({\bf A}_t+{\bf E}+\nabla\psi\right)=0, \label{god3.1}
\end{equation}
results in the magnetic helicity transport equation:
\begin{equation}
\derv{t}\left({\bf A\cdot B}\right) 
+\nabla{\bf\cdot}\left({\bf E}\times {\bf A}+\psi {\bf B}\right)=-2 {\bf E}{\bf\cdot B}. 
\label{eq:god3.2}
\end{equation}
In ideal MHD, ${\bf E}{\bf\cdot}{\bf B}=-({\bf u}\times{\bf B}){\bf\cdot}{\bf B}=0$, 
and in this limit, (\ref{eq:god3.2}) reduces to the magnetic helicity conservation 
equation:
\begin{equation} 
\derv{t}\left({\bf A}{\bf\cdot}{\bf B}\right)
+\nabla{\bf\cdot}\left[{\bf u}\left({\bf A}{\bf\cdot}{\bf B}\right) 
+\left(\psi-{\bf A}{\bf\cdot}{\bf u}\right) {\bf B}\right]=0. \label{eq:god3.3}
\end{equation}

For the case of a non-ideal plasma, with finite conductivity $\sigma$, 
the simplest form of Ohm's  law for the plasma has the form:
\begin{equation}
{\bf E}=-{\bf u}\times{\bf B}+\frac{\bf J}{\sigma}\quad \hbox{or}\quad 
{\bf E}'={\bf E}+{\bf u}\times{\bf B}= \frac{\bf J}{\sigma}, 
\label{eq:god3.4}
\end{equation}
in which ${\bf E}'$ is the electric field in the fluid frame (e.g. Boyd and Sanderson (1969), 
equation (3.61)). The magnetic helicity transport equation (\ref{eq:god3.2}) 
reduces to the equation:
\begin{equation}
\derv{t}\left({\bf A\cdot B}\right)
+\nabla{\bf\cdot}\left[{\bf u}\left({\bf A\cdot B}\right)
+\left(\psi-{\bf A\cdot u}\right) {\bf B}
+\frac{{\bf J}\times {\bf A}}{\sigma}\right]
= -\frac{2({\bf J\cdot B})}{\sigma}. 
\label{eq:god3.5}
\end{equation}
By integrating (\ref{eq:god3.5}) over a volume $V_m$ moving with the flow
gives the equation:
\begin{equation}
\frac{dH_m}{dt}=\int _{\partial V_m} {\bf B\cdot n}\left({\bf A}{\bf\cdot u}
-\psi\right)\ dS
-\int_{\partial V_m} \frac{{\bf n\bf\cdot}({\bf J}\times {\bf A})}{\sigma}\ dS
-\int_{V_m} 2\frac{{\bf J\cdot B}}{\sigma}\ d^3 x.  \label{eq:god3.5a}
\end{equation}
The surface term involving ${\bf B\cdot n}$ vanishes as ${\bf B\cdot n}=0$
is assumed on $\partial V_m$. The second term represents the transport of 
helicity flux across $\partial V_m$ and the volume integral represents 
 dissipation of the helicity  due to the finite conductivity 
of the plasma. 

 \cite {Taylor86} developed a theory 
for the decay of magnetic helicity in a high conductivity plasma 
 by hypothesizing that at lowest order  
the magnetic helicity for the whole volume $V_m$  
is conserved, but locally there could be magnetic reconnection of the field 
${\bf B}$. Taylor's theory leads to a much faster decay rate for the magnetic 
energy density of the field in a weakly dissipative plasma than 
for the magnetic helicity. 
-

 The total magnetic helicity for a volume $V_m$ 
moving with the flow is defined as the integral:
\begin{equation}
H_m=\int_{V_m} {\bf A\cdot B}\ d^3x. 
\label{eq:god3.6}
\end{equation}
In the ideal MHD limit ($\sigma\to\infty$) $H_m$ 
is conserved following the flow,  
i.e. $dH_m/dt=0$, provided ${\bf B}{\bf\cdot}{\bf n}=0$
on the boundary surface $\partial V_m$ of the volume $V_m$. 

The magnetic helicity integral (\ref{eq:god3.6}) can be written as:
\begin{equation}
H_m= \int_{V_m} \boldsymbol{\omega}_A^1\wedge d\boldsymbol{\omega}_A^1,
\label{eq:god3.6a}
\end{equation}
where
\begin{equation}
\boldsymbol{\omega}_A^1={\bf A}{\bf\cdot} d{\bf x}\quad \hbox{and} 
\quad \boldsymbol{\omega}_A^2=d \boldsymbol{\omega}_A^1
={\bf B}{\bf\cdot} d{\bf S} \label{eq:god3.7}
\end{equation}
are the magnetic vector potential one-form $\boldsymbol{\omega}_A^1$ and 
the magnetic flux two-form 
$\boldsymbol{\omega}_A^2=d\boldsymbol{\omega}_A$.  
The symbol $\wedge$ denotes the 
wedge product 
used in the algebra of exterior differential forms (e.g. \cite{Flanders63}). 
The integral form (\ref{eq:god3.6a}) is known as the Hopf invariant
which was developed in topological field theory by Hopf in the 1930's.

The proof that $dH_m/dt=0$ for the case where the volume consists 
of flux tubes, in which ${\bf B\cdot n}=0$ on $\partial V_m$ was derived by 
\cite{Moffatt78} 
(see also \cite{Woltjer58} and \cite{Elsasser56} for more discussion). 

\subsection{\bf The Godbillon-Vey invariant}
\begin{figure}[!htb]
\vspace*{1cm}
\centering\includegraphics[width=4cm, angle=0]{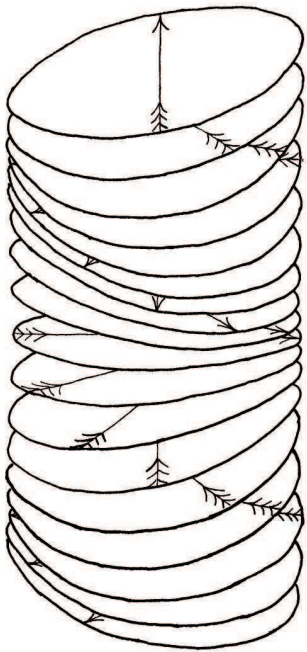}
\caption{Illustrating the wobble of the foliations described
by the Godbillon-Vey invariant (from Thurston (1972)
AMS Vol. 76 (4), July 1972, 511-514.)}
\label{fig:gv-wobble}
\end{figure}

The Godbillon-Vey invariant was introduced by \cite{Godbillon71} 
and later studied 
by \cite{Reinhart73}, \cite{Hurder02} and others. In 3D geometry, 
one can imagine the space as being filled with a family of 2 dimensional 
surfaces in which the surfaces are solutions of the Pfaffian 
equation $\boldsymbol{\omega}_A^1={\bf A}{\bf\cdot}d{\bf x}=0$. 
In the present paper the Godbillon-Vey field is defined as:
\begin{equation}
\boldsymbol{\eta}={\bf A\times {\bf B}/|{\bf A}|^2}. \label{eq:god3.7a}
\end{equation}
The reason for this choice for $\boldsymbol{\eta}$ is outlined below.
 The Godbillon-Vey 3-form is  
the three-form:
\begin{equation}
\boldsymbol{\omega}_\eta^3=
\boldsymbol{\omega}_\eta^1 \wedge d\boldsymbol{\omega}_\eta^1
\equiv \boldsymbol{\eta}{\bf\cdot}(\nabla\times \boldsymbol{\eta}) 
\ d^3x, \label{eq:god3.8}
\end{equation}
where $\boldsymbol{\eta}$ is the Godbillon-Vey field. 
The Godbillon -Vey invariant for the volume $V_m$ is the helicity integral
\begin{equation}
H_{gv}=\int_{V_m}\boldsymbol{\omega}^1_\eta\wedge d\boldsymbol{\omega}^1_\eta, 
\label{eq:god3.8a}
\end{equation}
\cite{Thurston72} described the Godbillon-Vey invariant as 
the wobble of a foliation or a pyramid of discs
lying on top of each other (see Figure \ref{fig:gv-wobble}). 
A similar wobble 
can be seen in the sculpture illustrated in  
Figure \ref{fig:gv-wobble2} (from \cite{Ghys14} lecture on `Foliations: 
What's next 
after Thurston'). 

The  meaning of `wobble' used above is 
presumably  related to the wobble of spinning objects,  
due to the tilt between the axis of symmetry and its angular momentum 
(this in solid body dynamics involves the moments of inertia of the spinning
body and the rotation axis of spin). A description of this phenomenon
for rigid bodies is quite complicated 
(see e.g. \cite{Goldstein80}; \cite{Holm08}, \cite{Marsden94}, Chapter 15, and
also the webpage http:/www/mathpages.com/home/kmath/kmath116.htm). 

\subsection{\bf \cite{Reinhart73} formula for Godbillon-Vey invariant}

\cite{Reinhart73} (see also appendix E) show that the Godbillon-Vey invariant
can be written in the form:
\begin{equation}
H_{gv}^{RW}=\int_{V_g} \hat{\eta}\wedge d \hat{\eta} 
=\int_{V_g} \kappa^2 \left(\tau-h_{BN}\right) d^3x, 
\label{eq:rw1}
\end{equation}
where $\kappa$ and $\tau$ are the curvature and torsion of a curve 
(or family of curves) with tangent vector ${\bf T}=\hat{\bf A}$ normal to the 
foliation (here $\hat{\boldsymbol\eta}=\hat{\bf A}\times(\nabla\times
\hat{\bf A})=
-\hat{\bf A}{\bf\cdot}\nabla\hat{\bf A}=-{\bf k}$ 
where ${\bf k}$ is the curvature vector of the curve.
 $h_{ij}$ ($i,j=1,2$) defines the second fundamental form for 
the surface, which describes the curvature of the foliation surface 
$\Phi=const.$.  Here ${\bf T},{\bf N}, {\bf B}$ is the moving tri-hedron 
for the curve normal to the foliation, with tangent vector ${\bf T}$, 
principal normal ${\bf N}$ and bi-normal ${\bf B}$. These  vector fields 
are governed by the Serret-Frenet formulae 
(or suitable equivalent formulae), and we use the notation ${\bf T}={\bf e}_3$,
${\bf N}={\bf e}_1$ and ${\bf B}={\bf e}_2$ for the orthonormal moving 
tri-hedron ${\bf T},{\bf N}, {\bf B}$,  so that $h_{21}=h_{BN}$. 
The main point of the \cite{Reinhart73} formula is that the 
curve with tangent vector ${\bf T}$ normal to the surface is a non-planar curve
as it has both non-zero curvature ($\kappa$) and torsion ($\tau$), 
and it in 
general wobbles out of the original plane of the curve (for example if $h_{BN}$
is small then both $\kappa$ and $\tau$ must be non-zero  
in order for the differential invariant $\hat{\eta}\wedge d\hat{\eta}$
to be non-zero).  
\begin{figure}
\centering\includegraphics[width=6cm, angle=0]
{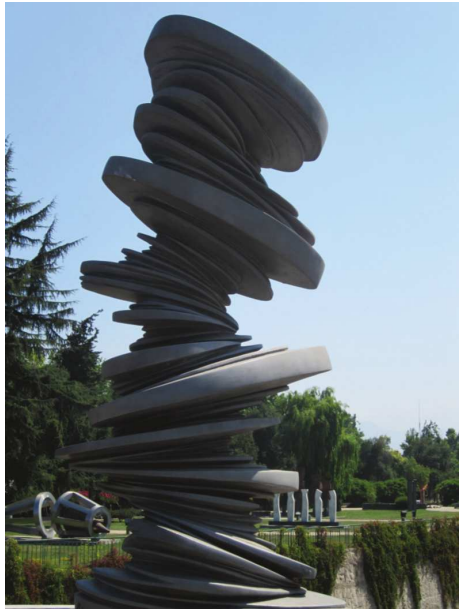}
\caption{Illustrating the wobble of the foliations 
 for a sculpture 
by Alejandra Ruddoff: (Diacronia, 2005), used by Ghys (2014)
in his lecture 
 on foliations ( \cite{Ghys14})}\label{fig:gv-wobble2}
\end{figure}
\medskip

   \cite{Godbillon71} 
and \cite{Reinhart73}, describe  
the Godbillon-Vey invariant 
 for a  co-dimension 1 foliation as a co-homology class defined 
by the 3-form (\ref{eq:god3.8}).  
 This theory is important in algebraic topology
in describing the topology of the distinct classes of closed 
curves that can be drawn on 
 hypersurfaces in terms of the so-called Betti numbers and other 
topological invariants 
(see e.g. \cite{Fulton95}, \cite{Hatcher02}, 
\cite{Pontryagin52} describes simplexes and combinatorial topology). 
\subsection{\bf The MHD Godbillon-Vey Field and Invariant}

The reason for the 
choice of $\boldsymbol{\eta}$ in (\ref{eq:god3.7a}) 
is given below. 
The condition for the Pfaffian 
equation $\boldsymbol{\omega}_A^1={\bf A}{\bf\cdot}d{\bf x}=0$ to be integrable 
 defines a co-dimension 1 foliation, is that:
\begin{equation}
{\bf A}{\bf\cdot}(\nabla\times{\bf A})\equiv {\bf A}{\bf\cdot}{\bf B}=0. \label{eq:god3.9}
\end{equation}
In this case, the Pfaffian equation 
$\boldsymbol{\omega}_A^1={\bf A}{\bf\cdot}d{\bf x}=0$ has an integrating factor $\mu$
such that $\mu {\bf A}=\nabla \lambda$ in which the foliation is described by the family 
of surfaces $\lambda(x,y,z)=c_1$. Each member of the family has unit normal 
$\hat{\bf A}={\bf A}/|{\bf A}|^2$ (i.e. the normal to the surfaces are 
parallel to ${\bf A}$). The integrability condition (\ref{eq:god3.9})
can be expressed as:
\begin{equation}
\boldsymbol{\omega}_A^1 \wedge d \boldsymbol{\omega}_A^1\equiv {\bf A\cdot B}\ d^3x=0. 
\label{eq:god3.10}
\end{equation}
The condition (\ref{eq:god3.10}) is satisfied if there exists a 
1-form:
\begin{equation}
 \boldsymbol{\omega}_\eta^1=\boldsymbol{\eta}{\bf\cdot} d{\bf x}\quad\hbox{such that}\quad 
d\boldsymbol{\omega}_A^1 = \boldsymbol{\omega}_\eta^1\wedge \boldsymbol{\omega}_A^1. 
\label{eq:god3.11}
\end{equation}
In that case,
\begin{equation}
\boldsymbol{\omega}_A^1\wedge d\boldsymbol{\omega}_A^1=\boldsymbol{\omega}_A^1\wedge 
(\omega_\eta^1\wedge\boldsymbol{\omega}_A^1)=0. \label{eq:god3.12}
\end{equation}
Condition (\ref{eq:god3.11}) can be written as:
\begin{equation}
{\bf B}{\bf\cdot} d{\bf S}=(\nabla\times{\bf A}){\bf\cdot} d{\bf S}=(\boldsymbol{\eta}{\bf\cdot}d{\bf x})
\wedge ({\bf A\cdot} d{\bf x})
=\left(\boldsymbol{\eta}\times{\bf A}\right){\bf\cdot}d{\bf S}. 
\label{eq:god3.13}
\end{equation}
Equation (\ref{eq:god3.13}) is satisfied if 
\begin{equation}
{\bf B}_\perp=\boldsymbol{\eta}\times{\bf A}\quad\hbox{where}\quad {\bf B}_\parallel= {\bf B}{\bf\cdot}\hat{\bf A}\hat{\bf A}\quad\hbox{and}\quad {\bf B}_\perp
={\bf B}-{\bf B}{\bf\cdot} \hat{\bf A}\hat{\bf A}, \label{eq:god3.14}
\end{equation}
where the subscripts $\parallel$ and $\perp$ denote components of ${\bf B}$ 
parallel and perpendicular to ${\bf A}$. Taking the cross product of ${\bf A}$ 
on the left with (\ref{eq:god3.14}) gives:
\begin{equation}
{\bf A}\times {\bf B}_\perp={\bf A}\times{\bf B}={\bf A}\times(\boldsymbol{\eta}\times{\bf A})
=({\bf A\cdot A})\boldsymbol{\eta}-({\bf A\cdot}\boldsymbol{\eta}) {\bf A}, 
\label{eq:god3.15}
\end{equation}

Choosing $\boldsymbol{\eta}$ such that $\boldsymbol{\eta}{\bf\cdot}{\bf A}=0$, 
(\ref{eq:god3.15}) gives:
\begin{equation}
\boldsymbol{\eta}=\frac{{\bf A}\times{\bf B}}{|\boldsymbol{A}|^2}. \label{eq:god3.16}
\end{equation}
This is the form of the Godbillon-Vey field that was adopted by \cite{Tur93}
and \cite{Webb14a}. From (\ref{eq:god3.14}) we obtain:
\begin{equation}
{\bf B}=\beta {\bf A}+\eta\times{\bf A}={\bf B}_\parallel+{\bf B}_\perp, \label{eq:god3.17}
\end{equation}
where
\begin{equation}
\beta=\frac{\bf A\cdot B}{|\boldsymbol{A}|^2}=\frac{h_m}{|\boldsymbol{A}|^2}, 
\quad h_m={\bf A\cdot B}. \label{eq:god3.18}
\end{equation} 
The formulas (\ref{eq:god3.16})-(\ref{eq:god3.18}) play an essential role in the formulation 
of a transport equation for the Godbillon-Vey magnetic helicity for both the cases $h_m=0$ and
$h_m\neq 0$.  

It is interesting to note that:
\begin{equation}
{\bf A}\times{\bf B}={\bf A}\times(\nabla\times {\bf A}) 
=\nabla\left(\frac{1}{2}|{\bf A}|^2\right)-{\bf A}{\bf\cdot}\nabla {\bf A}. 
\label{eq:god3.19}
\end{equation}
This result is analogous to the ${\bf J}\times {\bf B}$ force
on the plasma, except that ${\bf B}$ has been replaced by ${\bf A}$
and there is a sign change. The first term is analogous to the gradient of 
a uniform pressure gradient of $A^2$ and the second term is analogous 
to the tension force of the magnetic field in the ${\bf J}\times{\bf B}$
force.  

However, if we use normalized base vectors (i.e. unit vectors 
$\hat{\bf A}={\bf A}/A$)
 to describe the field, then we obtain:
\begin{equation}
\boldsymbol{\eta}=\frac{{\bf A}\times{\bf B}}{|{\bf A}|^2}=\frac{A\nabla A}{A^2}
-\frac{A\hat{\bf A}{\bf\cdot}\nabla(A\hat{\bf A})}{A^2}
=({\sf I}-\hat{\bf A}\hat{\bf A}){\bf\cdot}\nabla[\ln(A)]-
\hat{\bf A}{\bf\cdot}\nabla\hat{\bf A}, \label{eq:god3.20}
\end{equation}
The first term in (\ref{eq:god3.20}) is the gradient of 
$\ln(A)$ perpendicular to ${\bf A}$ and the second term is minus 
the curvature vector of $\hat{\bf A}$. 
The ${\bf A}$ field line curvature term 
in (\ref{eq:god3.20}) can be written in the form:
\begin{equation}
-\hat{\bf A}{\bf\cdot}\nabla\hat{\bf A}=-\kappa^{(A)} {\bf n}^{(A)}, 
\label{eq:god3.21}
\end{equation}
where ${\bf n}^{(A)}$ is the principal normal to the ${\bf A}$ field lines, 
and $\kappa^{(A)}$ is the curvature of the ${\bf A}$ field lines. 
 $\hat{\bf A}$ can be thought of as the tangent vector to a curve with 
principal normal ${\bf n}^{(A)}$ pointed towards the center of 
curvature of the $\hat{\bf A}$ field 
(e.g. \cite{Lipschutz69}). 


\subsection{\bf Gauge Transformations}

For the case ${\bf A\cdot B}=0$, (\ref{eq:god3.20}) for $\boldsymbol{\eta}$ 
implies $\boldsymbol{\eta}{\bf\cdot A}=\boldsymbol{\eta}{\bf\cdot B}=0$. Thus, 
${\bf A}$, $\bf {B}$ and $\boldsymbol{\eta}$ are  
mutually orthogonal vectors, in which ${\bf A}$ is normal to the foliation 
$\lambda=const.$. Note that  
\begin{equation} 
\mu {\bf A}{\bf\cdot}d{\bf x}=d\lambda=\nabla\lambda{\bf\cdot}d{\bf x}. 
\label{eq:god3.23}
\end{equation}
where $\mu$ is an integrating factor.  
It is necessary 
to keep in mind that the use of 
${\bf A}$ in (\ref{eq:god3.23}) depends on the gauge for ${\bf A}$. 
If for example, ${\tilde{\bf A}}={\bf A}+\nabla\phi$, 
this will induce a change in the function $\lambda$. In 
other words, (\ref{eq:god3.23}) in the new gauge leads to the 
equation $\tilde{\mu}{\tilde{\bf A}}{\bf\cdot} d{\bf x}=d{\tilde\lambda}$. 
If we fix the gauge of ${\bf A}$, then the solution of (\ref{eq:god3.23}) is:
\begin{equation}
{\bf A}=\nu\nabla\lambda, \label{eq:god3.24}
\end{equation}
where $\nu=1/\mu$ and $\mu$ is the integrating factor for the Pfaffian 
equation ${\bf A}{\bf\cdot}d{\bf x}=0$. 
In this case the vectors 
$\boldsymbol{\eta}$, ${\bf B}$ lie in the $\lambda=const.$ surface 
and ${\bf A}$ is normal to the surface (i.e. 
${\bf N}\equiv \hat{\bf A}=\nabla\lambda/|\nabla\lambda|$ is the unit normal to
the surface). The geometrical configuration of ${\bf A}$, ${\bf B}$ and 
$\boldsymbol{\eta}$ and the surface $\lambda=const.$ are 
depicted schematically in Figure \ref{fig:gv-ab-eta}.
\begin{figure}
\hspace*{1cm}
\centering\includegraphics[width=10cm, angle=270]{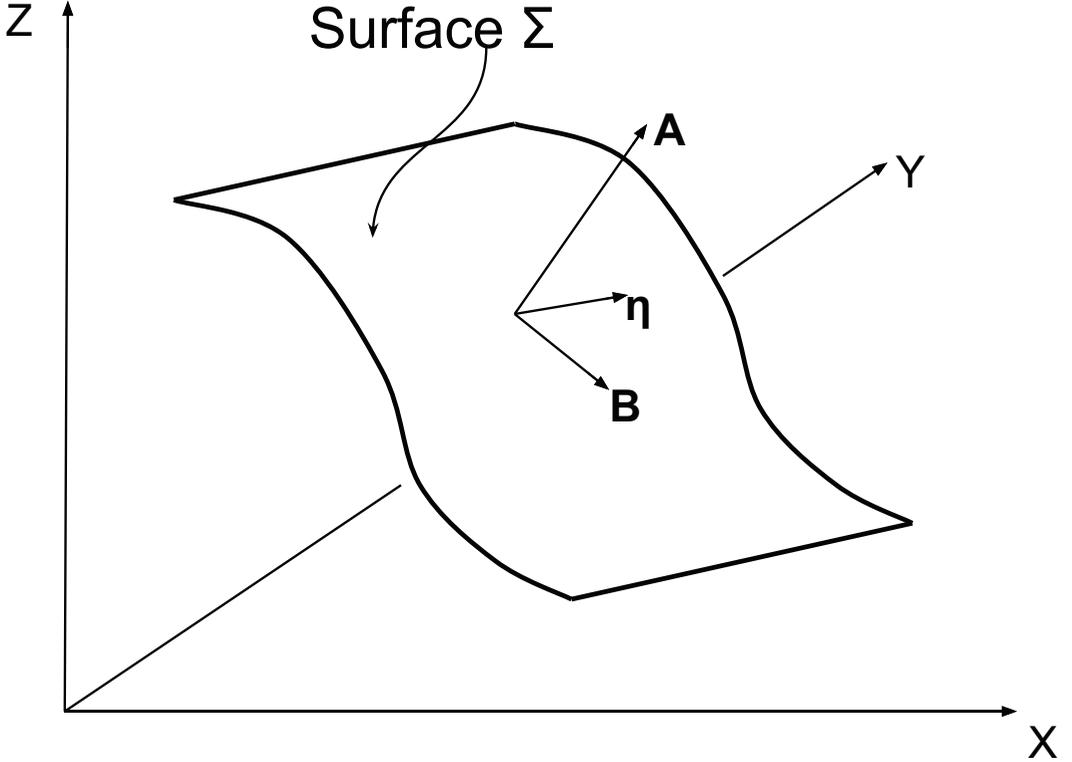}
\caption{The ${\bf B}$ surface $\Sigma$ with normal ${\bf A}$
and Godbillon-Vey field $\boldsymbol{\eta}$, where ${\bf A\cdot B}=0$
and $\boldsymbol{\eta}={\bf A}\times{\bf B}/A^2$.
Both ${\bf B}$ and $\boldsymbol{\eta}$ are in the surface $\Sigma$}.\label{fig:gv-ab-eta}

\end{figure}
\medskip
\medskip
Note that the vectors ${\bf A}$, ${\bf B}$ and $\boldsymbol{\eta}$ are mutually 
orthogonal, and with ${\bf B}$ and $\boldsymbol{\eta}$ lying within the surface 
$\lambda=const.$

The magnetic vector potential ${\bf A}$ can admit a gauge 
potential transformation $\tilde{\bf A}=\bf{A}+\nabla\phi$, i.e.
\begin{equation}
\tilde{\bf A}=\tilde{\nu}\nabla\tilde{\lambda}
={\bf A}+\nabla\phi=\nu\nabla\lambda+\nabla\phi\quad\hbox{where}
\quad {\bf B}=\nabla\times{\bf A}. 
\label{eq:god3.25}
\end{equation} 
In this case  we obtain:
\begin{equation}
\tilde{\bf A}{\bf\cdot B}=\left(\nu\nabla\lambda+\nabla\phi\right){\bf\cdot}
\left(\nabla\nu\times\nabla\lambda\right)=
\nabla\phi{\bf\cdot}(\nabla\nu\times\nabla\lambda)=J
=\frac{\partial(\phi,\nu,\lambda)}{\partial(x,y,z)}. \label{eq:god3.26}
\end{equation}
The zero Jacobian case $J=0$ or ${\bf A\cdot B}=0$  
implies that
\begin{equation}
\phi=\phi(\nu,\lambda)\quad\hbox{for}\quad J=0. \label{eq:god3.27}
\end{equation} 
The gauge transformations (\ref{eq:god3.25}) 
are discussed in Appendix D. 

If $\phi$ has discontinuous jumps 
across some surface, the integral of ${\bf A\cdot B}$ 
over a volume containing the discontinuity surface leads to a non-zero magnetic helicity 
integral over the volume. This implies that there is not a global 
single valued, smooth magnetic vector potential for ${\bf A}$, and that 
a complicated magnetic field topology can arise due to the discontinuity
surface for ${\bf A}$. This possibility is used 
by \cite{Semenov02} to describe the MHD topological soliton
using Euler potentials. 

In the case ${\bf A\cdot B}\neq 0$, 
the  space is not foliated 
into a family of surfaces. One could use  
 Boozer coordinates (\cite{Boozer83, Boozer04}) 
to describe the magnetic field, 
in which case we write:
\begin{equation}
{\bf A}=\nu\nabla\lambda+\psi\nabla\phi\quad\hbox{and}
\quad {\bf B}=\nabla\nu\times\nabla\lambda+\nabla\psi\times\nabla\phi. 
\label{eq:god3.28}
\end{equation}
The Clebsch representations 
(\ref{eq:god3.24})-(\ref{eq:god3.28}) for the Godbillon Vey 
helicity density $H_{gv}$ are discussed 
in Appendix C. 

\subsection{\bf Godbillon-Vey Conservation Law}
 
\begin{proposition}
Using the advected ${\bf A}$ gauge in which the electric field potential $\psi={\bf A\cdot u}$ 
(\cite{Gordin87, Gordin89}), the Godbillon-Vey 
helicity density
$h_{gv}=\boldsymbol{\eta}{\bf\cdot}(\nabla\times \boldsymbol{\eta})$
for MHD flows in which ${\bf A}{\bf\cdot B}=0$ 
satisfies the conservation law:
\begin{equation}
\deriv{h_{gv}}{t}+\nabla{\bf\cdot}\left[ {\bf u} h_{gv}
 +\alpha {\bf B}\right]=0, \label{eq:god3.29}
\end{equation}
where
\begin{align}
\boldsymbol{\eta}=&\frac{{\bf A}\times{\bf B}}{|{\bf A}|^2}, \quad
 h_{gv}=\boldsymbol{\eta}{\bf\cdot}(\nabla\times \boldsymbol{\eta}), \nonumber\\
\alpha=&\frac{2{\bf A\cdot}{\sf\sigma}{\bf\cdot}\boldsymbol{\eta}}{|{\bf A}|^2}
\equiv \frac{{\bf A\cdot}{\sf\sigma}{\bf\cdot}\boldsymbol{\eta}
+\boldsymbol{\eta}{\bf\cdot}{\sf\sigma}{\bf\cdot}{\bf A}}{|{\bf A}|^2}, \nonumber\\
\sf{\sigma}=&\frac{1}{2}\left[\nabla {\bf u}+(\nabla {\bf u})^T
-\frac{2}{3} \left(\nabla{\bf\cdot}{\bf u}\right){\sf I}\right]. 
\label{eq:god3.30}
\end{align}
Here, $\alpha$ depends on the shear tensor of the fluid $\sf{\sigma}$. 
It describes the coupling of the fields ${\bf A}$ and $\boldsymbol{\eta}$ 
due to shear in the flow. For shear-free flows, $\alpha=0$. The 
Godbillon-Vey helicity $H_{gv}=\int_{V_m} h_{gv}d^3x$ for a volume $V_m$ 
moving with the flow is conserved, i.e. $dH_{gv}/dt=0$ where it is assumed
that ${\bf B\cdot n}=0$ on the boundary $\partial V_m$ of $V_m$. 
\end{proposition}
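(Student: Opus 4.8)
The plan is to recognise that, in the advected gauge, the magnetic potential one-form is Lie-dragged by the flow, and then to propagate this structural fact down to the Godbillon-Vey three-form. First I would establish the transport law for ${\bf A}$. Substituting $\psi={\bf A\cdot u}$ and ${\bf E}=-{\bf u}\times{\bf B}$ into Faraday's law in the form (\ref{eq:god2.10}) and expanding ${\bf u}\times(\nabla\times{\bf A})$ with the standard identity for $\nabla({\bf u\cdot A})$, the two gradient terms cancel and one is left with $\partial_t A_i+u_j\partial_j A_i+A_j\partial_i u_j=0$, which is precisely $(\partial_t+\mathcal{L}_{\bf u})\boldsymbol{\omega}_A^1=0$; equivalently the material derivative of the vector obeys $dA_i/dt=-A_j\partial_i u_j$. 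Hence the one-form $\boldsymbol{\omega}_A^1={\bf A\cdot}d{\bf x}$ is Lie-dragged.

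Because the exterior derivative commutes with the transport operator $D:=\partial_t+\mathcal{L}_{\bf u}$, the flux two-form $d\boldsymbol{\omega}_A^1={\bf B\cdot}d{\bf S}$ is Lie-dragged as well, so $D(d\boldsymbol{\omega}_A^1)=0$. The hypothesis ${\bf A\cdot B}=0$ is maintained by the flow, since in the advected gauge the helicity transport equation (\ref{eq:god3.3}) reduces to $\partial_t h_m+\nabla{\bf\cdot}(h_m{\bf u})=0$ so that $h_m=0$ persists; thus the integrability relation (\ref{eq:god3.11}), $d\boldsymbol{\omega}_A^1=\boldsymbol{\omega}_\eta^1\wedge\boldsymbol{\omega}_A^1$ with $\boldsymbol{\omega}_\eta^1=\boldsymbol{\eta}{\bf\cdot}d{\bf x}$, holds for all time. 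Applying $D$ and using $D\boldsymbol{\omega}_A^1=0$ gives $(D\boldsymbol{\omega}_\eta^1)\wedge\boldsymbol{\omega}_A^1=0$, so in three dimensions (with ${\bf A}\neq0$) the one-form $D\boldsymbol{\omega}_\eta^1$ is proportional to $\boldsymbol{\omega}_A^1$, say $D\boldsymbol{\omega}_\eta^1=g\,\boldsymbol{\omega}_A^1$ for some scalar $g$. I would then compute $D$ of the Godbillon-Vey three-form $\boldsymbol{\omega}_\eta^3=\boldsymbol{\omega}_\eta^1\wedge d\boldsymbol{\omega}_\eta^1=h_{gv}\,d^3x$. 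Inserting $D\boldsymbol{\omega}_\eta^1=g\,\boldsymbol{\omega}_A^1$, using the identity $d\boldsymbol{\omega}_\eta^1\wedge\boldsymbol{\omega}_A^1=0$ (from differentiating the integrability relation) together with $\nabla{\bf\cdot B}=0$, all remaining terms collapse to $D\boldsymbol{\omega}_\eta^3=-dg\wedge d\boldsymbol{\omega}_A^1=-(\nabla g{\bf\cdot B})\,d^3x=-\nabla{\bf\cdot}(g{\bf B})\,d^3x$. Since $D(h_{gv}\,d^3x)=[\partial_t h_{gv}+\nabla{\bf\cdot}(h_{gv}{\bf u})]\,d^3x$, this already produces the conservation law (\ref{eq:god3.29}) with flux $g{\bf B}$, provided $g=\alpha$.

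The main obstacle is precisely this identification of $g$ with the shear-tensor expression in (\ref{eq:god3.30}). I would extract $g$ by dotting $D\boldsymbol{\omega}_\eta^1=g\,\boldsymbol{\omega}_A^1$ with ${\bf A}$, obtaining $g|{\bf A}|^2={\bf A}{\bf\cdot}(d\boldsymbol{\eta}/dt)+\eta_j A_i\partial_i u_j$, where $d/dt=\partial_t+{\bf u\cdot}\nabla$. The orthogonality ${\bf A}{\bf\cdot}\boldsymbol{\eta}=0$, which holds identically because $\boldsymbol{\eta}\propto{\bf A}\times{\bf B}$, gives $\frac{d}{dt}({\bf A}{\bf\cdot}\boldsymbol{\eta})=0$ and hence lets me rewrite ${\bf A}{\bf\cdot}(d\boldsymbol{\eta}/dt)=-\boldsymbol{\eta}{\bf\cdot}(d{\bf A}/dt)=\eta_i A_j\partial_i u_j$ via the transport law $dA_i/dt=-A_j\partial_i u_j$ found above. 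Adding the two contributions symmetrises the velocity gradient, $g|{\bf A}|^2=A_i\eta_j(\partial_i u_j+\partial_j u_i)=2\,{\bf A}{\bf\cdot}{\sf e}{\bf\cdot}\boldsymbol{\eta}$ with ${\sf e}$ the strain-rate tensor; and since ${\bf A}{\bf\cdot}\boldsymbol{\eta}=0$ the isotropic part of ${\sf e}$ contributes nothing, so ${\sf e}$ may be replaced by the trace-free shear tensor ${\sf\sigma}$, giving $g=2{\bf A}{\bf\cdot}{\sf\sigma}{\bf\cdot}\boldsymbol{\eta}/|{\bf A}|^2=\alpha$. Finally the integral statement $dH_{gv}/dt=0$ follows by integrating (\ref{eq:god3.29}) over the comoving volume $V_m$ and applying the Reynolds transport theorem: the advective flux is absorbed into the material derivative of the moving volume integral, leaving $dH_{gv}/dt=-\oint_{\partial V_m}\alpha\,{\bf B\cdot n}\,dS$, which vanishes under the hypothesis ${\bf B\cdot n}=0$ on $\partial V_m$.
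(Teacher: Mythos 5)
Your proof is correct and follows essentially the same route as the paper's Appendix A: Lie-dragging of $\boldsymbol{\omega}_A^1$ in the advected gauge, deducing $(\partial_t+{\cal L}_{\bf u})\boldsymbol{\omega}_\eta^1=\alpha\,\boldsymbol{\omega}_A^1$, identifying $\alpha$ via the orthogonality ${\bf A}{\bf\cdot}\boldsymbol{\eta}=0$ and the Cauchy--Stokes decomposition, and then collapsing $(\partial_t+{\cal L}_{\bf u})\boldsymbol{\omega}_\eta^3$ to $-\nabla{\bf\cdot}(\alpha{\bf B})\,d^3x$. Your explicit remark that $h_m=0$ persists under the flow (so the integrability relation holds for all time) is a small but welcome addition that the paper leaves implicit.
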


\begin{proof}
The detailed proof follows as a consequence of the analysis of 
\cite{Tur93} and \cite{Webb14a}. A proof is given in appendix A. 
\end{proof}

\section{Godbillon-Vey helicity for ${\bf A\cdot B}\neq 0$}
In this section, we generalize the Godbillon-Vey helicity transport equation in two ways, 
namely (a)\ we determine the form of the transport equation for the case where the 
magnetic helicity $h_m={\bf A\cdot B}\neq 0$ and (b)\ we allow for a general 
electric field potential $\psi$ (i.e we allow for more general gauges for ${\bf A}$, other 
than the advected ${\bf A}$ gauge for which $\psi={\bf A\cdot u}$). 
The underlying idea is that Faraday's equation for ${\bf B}$ 
can be split up into components parallel and perpendicular to 
${\bf A}$ as in (\ref{eq:god3.17}), i.e. ${\bf B}={\bf B}_\parallel+ {\bf B}_\perp$, 
in which ${\bf B}_\parallel=h_m {\bf A}/|{\bf A}|^2$, where $h_m={\bf A\cdot B}$, and 
${\bf B}_\perp$ is related to the Godbillon-Vey field 
$\boldsymbol{\eta}={\bf A}\times{\bf B}/|{\bf A}|^2$ by the 
formula ${\bf B}_\perp=\boldsymbol{\eta}\times {\bf A}$. 

\begin{proposition}\label{prop4.1}
The transport equation for the Godbillon-Vey helicity 
$h_{gv}=\boldsymbol{\eta}{\bf\cdot}(\nabla\times\boldsymbol{\eta})$ where 
$\boldsymbol{\eta}={\bf A}\times{\bf B}/|{\bf A}|^2$, for the general case where 
$h_m={\bf A}{\bf\cdot}{\bf B}\neq 0$, and for a general gauge for ${\bf A}$ 
has the form:
\begin{equation}
\deriv{h_{gv}}{t}+\nabla{\bf\cdot}(h_{gv} {\bf u})=Q, \label{eq:god4.1}
\end{equation}
where the source term $Q$ in (\ref{eq:god4.1}) is given by:
\begin{align}
Q=&{\bf S}{\bf\cdot}\nabla\times\boldsymbol{\eta}
+\boldsymbol{\eta}{\bf\cdot}\nabla\times{\bf S}, \nonumber\\
{\bf S}=&\frac{h_m}{|{\bf A}|^4} {\bf A}\times\left(2{\sf\sigma}{\bf\cdot}{\bf A} 
+\nabla\zeta\right)
+\frac{{\bf A}{\bf\cdot}\nabla\zeta}{|{\bf A}|^2} \boldsymbol{\eta} 
+\alpha {\bf A}, \nonumber\\
\alpha=&\frac{\left(2 {\bf A}{\bf\cdot}{\sf\sigma}{\bf\cdot}\boldsymbol{\eta}
+\boldsymbol{\eta}{\bf\cdot}\nabla\zeta\right)}
{|{\bf A}|^2}, 
\quad \zeta=\psi-{\bf A}{\bf\cdot}{\bf u}, \label{eq:god4.2}
\end{align}
and ${\sf\sigma}$ is the fluid velocity shear tensor in (\ref{eq:god3.30}).
 Here, ${\bf S}$ 
is the source term in the Godbillon-Vey field evolution equation: 
\begin{equation}
\boldsymbol{\eta}_t-{\bf u}\times(\nabla\times\boldsymbol{\eta}) 
+\nabla({\bf u}{\bf\cdot}\boldsymbol{\eta})={\bf S}. \label{eq:god4.3}
\end{equation}
In the special case ${\bf A}{\bf\cdot}{\bf B}=0$ and $\zeta=0$ ($\psi={\bf A}{\bf\cdot}{\bf u}$)
the Godbillon-Vey transport equation (\ref{eq:god4.1}) reduces to the 
conservation law (\ref{eq:god3.29}). In the advected ${\bf A}$ gauge 
($\zeta=0$ and $\psi={\bf A\cdot u}$), (\ref{eq:god4.2}) gives the simplified formulae:
\begin{equation}
{\bf S}={\bf S}^0=\frac{2 h_m}{|{\bf A}|^4} {\bf A}\times({\sf\sigma}{\bf\cdot}{\bf A})
+\alpha {\bf A}, 
\quad\alpha\equiv\alpha_0=\frac{2({\bf A}{\bf\cdot}{\sf\sigma}{\bf\cdot}\boldsymbol{\eta})}
{|{\bf A}|^2}. \label{eq:god4.4}
\end{equation}
\end{proposition}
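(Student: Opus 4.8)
The plan is to prove Proposition \ref{prop4.1} in two stages: first establish the evolution equation (\ref{eq:god4.3}) for the Godbillon-Vey field $\boldsymbol{\eta}$ together with the explicit source $\mathbf{S}$ of (\ref{eq:god4.2}), and then deduce the helicity transport equation (\ref{eq:god4.1}) from it by a purely structural computation. For the first stage I would start from $\boldsymbol{\eta}=\mathbf{A}\times\mathbf{B}/|\mathbf{A}|^2$ and differentiate in time, using Faraday's equation (\ref{eq:god2.4}) as $\mathbf{B}_t=\nabla\times(\mathbf{u}\times\mathbf{B})$ together with the vector-potential evolution $\mathbf{A}_t=\mathbf{u}\times\mathbf{B}-\nabla\psi$ that follows from (\ref{eq:god2.10}) with $\mathbf{E}=-\mathbf{u}\times\mathbf{B}$. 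Writing $\psi=\zeta+\mathbf{A}\cdot\mathbf{u}$ so that $\zeta=\psi-\mathbf{A}\cdot\mathbf{u}$, the potential obeys $\mathbf{A}_t-\mathbf{u}\times(\nabla\times\mathbf{A})+\nabla(\mathbf{A}\cdot\mathbf{u})=-\nabla\zeta$, i.e. the one-form $\mathbf{A}\cdot d\mathbf{x}$ is Lie dragged up to the gauge source $-\nabla\zeta$.

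I would then expand $\boldsymbol{\eta}_t=(\mathbf{A}_t\times\mathbf{B}+\mathbf{A}\times\mathbf{B}_t)/|\mathbf{A}|^2-(|\mathbf{A}|^2)_t(\mathbf{A}\times\mathbf{B})/|\mathbf{A}|^4$, substitute the two evolution laws, and reorganise everything into the advective combination $\boldsymbol{\eta}_t-\mathbf{u}\times(\nabla\times\boldsymbol{\eta})+\nabla(\mathbf{u}\cdot\boldsymbol{\eta})$, the residual being $\mathbf{S}$. The decomposition $\mathbf{B}=\beta\mathbf{A}+\boldsymbol{\eta}\times\mathbf{A}$ of (\ref{eq:god3.17}) and the orthogonality $\boldsymbol{\eta}\cdot\mathbf{A}=0$ are used throughout to re-express the cross products. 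The hard part is the bookkeeping that produces the shear tensor: in the course of simplification one meets the full velocity gradient $\nabla\mathbf{u}$ acting on $\mathbf{A}$; its antisymmetric part combines with the $\mathbf{u}\times(\nabla\times\boldsymbol{\eta})$ and $\nabla(\mathbf{u}\cdot\boldsymbol{\eta})$ terms being moved to the left-hand side, while the symmetric part must be rewritten as ${\sf\sigma}+\tfrac{1}{3}(\nabla\cdot\mathbf{u}){\sf I}$ via (\ref{eq:god3.30}), with the trace piece cancelling against the $(|\mathbf{A}|^2)_t$ contribution so that only ${\sf\sigma}$ survives. Grouping the $\nabla\zeta$ terms and the $h_m=\mathbf{A}\cdot\mathbf{B}$ terms so that they land in the stated positions $h_m\mathbf{A}\times(2{\sf\sigma}\cdot\mathbf{A}+\nabla\zeta)/|\mathbf{A}|^4$, $(\mathbf{A}\cdot\nabla\zeta)\boldsymbol{\eta}/|\mathbf{A}|^2$ and $\alpha\mathbf{A}$ is the step most likely to demand care; this is the computation performed in Appendix B.

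For the second stage, which is clean, I would treat (\ref{eq:god4.3}) as the defining property of $\boldsymbol{\eta}$ and prove the general identity that, for any $\boldsymbol{\eta}$ satisfying $\boldsymbol{\eta}_t-\mathbf{u}\times(\nabla\times\boldsymbol{\eta})+\nabla(\mathbf{u}\cdot\boldsymbol{\eta})=\mathbf{S}$, the density $h_{gv}=\boldsymbol{\eta}\cdot(\nabla\times\boldsymbol{\eta})$ obeys (\ref{eq:god4.1}) with $Q$ as in (\ref{eq:god4.2}). Setting $\boldsymbol{\zeta}=\nabla\times\boldsymbol{\eta}$ and taking the curl of (\ref{eq:god4.3}) kills the gradient and gives $\boldsymbol{\zeta}_t-\nabla\times(\mathbf{u}\times\boldsymbol{\zeta})=\nabla\times\mathbf{S}$. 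Then
\begin{equation}
\frac{\partial h_{gv}}{\partial t}=\boldsymbol{\eta}_t\cdot\boldsymbol{\zeta}+\boldsymbol{\eta}\cdot\boldsymbol{\zeta}_t
=\mathbf{S}\cdot\boldsymbol{\zeta}-\boldsymbol{\zeta}\cdot\nabla(\mathbf{u}\cdot\boldsymbol{\eta})+\boldsymbol{\eta}\cdot\nabla\times(\mathbf{u}\times\boldsymbol{\zeta})+\boldsymbol{\eta}\cdot\nabla\times\mathbf{S},
\end{equation}
where $(\mathbf{u}\times\boldsymbol{\zeta})\cdot\boldsymbol{\zeta}=0$ has been dropped. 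The two source contributions already give $Q=\mathbf{S}\cdot(\nabla\times\boldsymbol{\eta})+\boldsymbol{\eta}\cdot(\nabla\times\mathbf{S})$. It remains to show the advective terms assemble into $-\nabla\cdot(h_{gv}\mathbf{u})$: expanding $\nabla\times(\mathbf{u}\times\boldsymbol{\zeta})$ with $\nabla\cdot\boldsymbol{\zeta}=0$ in index notation, the terms containing $\partial_j u_i$ cancel in pairs, and the remaining convective terms reduce to $-\mathbf{u}\cdot\nabla h_{gv}$ after an index manipulation in which contractions of the Levi-Civita symbol with two copies of $\boldsymbol{\zeta}$ vanish; adding the $-h_{gv}(\nabla\cdot\mathbf{u})$ term yields $-\nabla\cdot(h_{gv}\mathbf{u})$.

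Finally I would check the two specialisations. Setting $h_m=0$ and $\zeta=0$ reduces $\mathbf{S}$ to $\alpha\mathbf{A}$, and using $\mathbf{A}\cdot(\nabla\times\boldsymbol{\eta})=0$ and $\mathbf{A}\times\boldsymbol{\eta}=-\mathbf{B}$ (both valid when $\mathbf{A}\cdot\mathbf{B}=0$) one finds $Q=-\nabla\cdot(\alpha\mathbf{B})$, recovering the conservation law (\ref{eq:god3.29}); setting $\zeta=0$ alone gives the advected-gauge formulae (\ref{eq:god4.4}). The main obstacle is the first stage — the explicit derivation of $\mathbf{S}$ and, in particular, the correct emergence of the traceless shear tensor ${\sf\sigma}$ from the symmetric part of $\nabla\mathbf{u}$; the second stage is a general and robust identity valid for any source $\mathbf{S}$.
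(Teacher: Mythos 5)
Your proposal is correct in outline and its second stage coincides with the paper's: Appendix~B likewise obtains (\ref{eq:god4.1}) by taking the curl of (\ref{eq:god4.3}), dotting the curled equation with $\boldsymbol{\eta}$ and the uncurled one with $\nabla\times\boldsymbol{\eta}$, and adding; your verification of the $h_m=0$, $\zeta=0$ specialisation via ${\bf A}{\bf\cdot}(\nabla\times\boldsymbol{\eta})=0$ and ${\bf A}\times\boldsymbol{\eta}=-{\bf B}$ is also sound. Where you genuinely diverge is the first stage. You propose to differentiate $\boldsymbol{\eta}={\bf A}\times{\bf B}/|{\bf A}|^2$ directly in time and grind the result into advective form. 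The paper instead never differentiates $\boldsymbol{\eta}$ explicitly: it substitutes the decomposition ${\bf B}=\beta{\bf A}+\boldsymbol{\eta}\times{\bf A}$ into Faraday's equation, eliminates the $\beta{\bf A}$ piece using the magnetic-helicity transport equation (\ref{eq:god4.11}), and then splits the residual vector equation into components parallel and perpendicular to ${\bf A}$. The perpendicular projection only determines ${\bf F}=\boldsymbol{\eta}_t-{\bf u}\times(\nabla\times\boldsymbol{\eta})+\nabla({\bf u}{\bf\cdot}\boldsymbol{\eta})$ up to a term $\alpha{\bf A}$, and $\alpha$ is then pinned down by combining the projection of ${\bf F}$ onto ${\bf A}$ with the Lie-dragging equation for ${\bf A}$ dotted with $\boldsymbol{\eta}$, using ${\bf A}{\bf\cdot}\boldsymbol{\eta}=0$. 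The paper's route buys a structured calculation in which the $h_m$ source, the $\nabla\zeta$ gauge terms and the $\alpha{\bf A}$ term each arise from an identifiable projection; your direct route avoids the indeterminacy in the parallel component but pays for it with heavier unstructured algebra, including the $\partial_t|{\bf A}|^2$ term.

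One detail of your sketch is not how the traceless shear actually emerges, at least along the paper's route: there the trace part $\tfrac13(\nabla{\bf\cdot}{\bf u}){\bf A}$ and the vorticity part of ${\bf A}{\bf\cdot}\nabla{\bf u}$ are killed because they appear only inside ${\bf A}\times\bigl[{\bf A}\times(\cdot)\bigr]$ acting on vectors parallel to ${\bf A}$, and through the exact cancellation $2\cdot\tfrac12\,\boldsymbol{\omega}\times{\bf A}+{\bf A}\times\boldsymbol{\omega}=0$ --- not by cancellation against the $\partial_t|{\bf A}|^2$ contribution. Since you defer this bookkeeping, it is not a gap in the argument, but if you carry out the direct differentiation you should expect the traceless property of ${\sf\sigma}$ to arise from the double cross products with ${\bf A}$ (and, for the $\alpha$ formula, from ${\bf A}{\bf\cdot}\boldsymbol{\eta}=0$) rather than from the normalisation term.
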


\begin{proof}
The proof is given in Appendix B. 
\end{proof}

\section{The \cite{Low90} Force Free Magnetic Fields}
In this section we investigate the Godbillon-Vey helicity of the \cite{Low90}
force free magnetic fields. Both \cite{Low90} and \cite{Prasad14} 
used these fields to discuss solar photospheric magnetic fields. 
\cite{Prasad14} investigated models of force free magnetic fields in order  
to describe solar magnetic fields  observed by the Hinode spectro-polarimeter.
They studied the relative magnetic helicity and magnetic free energy of 
magnetically active regions (AR's) on the Sun, both before and after 
solar flares. 

The force-free magnetic fields arise in low beta magnetic fields 
in a highly conducting plasma when the dominant force in the 
magneto-static force balance is the ${\bf J\times B}$ force. 
In this case, the approximate force balance equation reduces to 
${\bf J\times B}=(\nabla\times{\bf B})\times{\bf B}/\mu_0=0$. For force-free
 fields, ${\bf B}$ is to lowest order parallel
to the current ${\bf J}$ so that: 
\begin{equation}
\nabla\times{\bf B}=\alpha {\bf B}, \quad \nabla{\bf\cdot}{\bf B}=0, 
\quad {\bf B}{\bf\cdot}\nabla\alpha=0. \label{eq:ll1}
\end{equation}
For linear force free fields $\alpha$ is taken to 
be constant (e.g. \cite{Chandrasekhar56}). For nonlinear force free  
fields, the function $\alpha$ can be a nonlinear function of the magnetic 
vector potential ${\bf A}$, or of a component of ${\bf A}$. 
From (\ref{eq:ll1})
we obtain the condition:
\begin{equation}
\nabla{\bf\cdot}(\nabla\times{\bf B})=\nabla{\bf\cdot}(\alpha {\bf B})
={\bf B}{\cdot}\nabla\alpha=0. \label{eq:ll2}
\end{equation}
Thus $\alpha$ is constant along a field line.  

The nonlinear force-free magnetic fields of \cite{Low90} have the form:
\begin{equation}
{\bf B}=\nabla \psi\times\nabla\phi+\frac{Q{\bf e}_\phi}{r\sin\theta}
\equiv \frac{\nabla\psi\times{\bf e}_\phi+Q{\bf e}_\phi}{r\sin\theta},  
\label{eq:ll3}
\end{equation}
where $(r,\theta,\phi)$ are spherical polar coordinates, and
\begin{equation}
{\bf e}_r=\deriv{\bf r}{r},
\quad {\bf e}_\theta=\frac{1}{r}\deriv{\bf r}{\theta},
\quad {\bf e}_\phi=\frac{1}{r\sin\theta}\deriv{\bf r}{\phi}, \label{eq:ll4}
\end{equation}
are orthonormal unit vectors in the $r$, $\theta$, and $\phi$ directions. 
In (\ref{eq:ll3})  $A=A(r,\theta)$ and $Q=Q(r,\theta)$. 

Using the magnetic field representation (\ref{eq:ll3}), 
(\ref{eq:ll1}) give the equations:
\begin{align}
\frac{1}{r^2\sin\theta} 
\left(\deriv{Q}{\theta}-\alpha \deriv{\psi}{\theta}\right)=&0, \label{eq:ll5}\\
\frac{1}{r\sin\theta}\left(-\deriv{Q}{r}+\alpha \deriv{\psi}{r}\right)=&0, 
\label{eq:ll6}\\
\frac{1}{r\sin\theta}\frac{\partial^2 \psi}{\partial r^2} 
+\frac{1}{r^3}\derv{\theta}
\left(\frac{1}{\sin\theta}\deriv{\psi}{\theta}\right)
+\frac{\alpha Q}{r\sin\theta}=&0, \label{eq:ll7}
\end{align}
as the components of the force balance equation in the $r$, 
$\theta$ and $\phi$ directions respectively. 
 
From (\ref{eq:ll5}) and (\ref{eq:ll6}) the compatibility condition:
\begin{equation}
\deriv{\psi}{r}\left(\deriv{Q}{\theta}-\alpha\deriv{\psi}{\theta}\right)
-\deriv{\psi}{\theta}\left(\deriv{Q}{r}-\alpha \deriv{\psi}{r}\right)
\equiv \frac{\partial(Q,\psi)}{\partial (\theta,r)}=0, \label{eq:ll8}
\end{equation}
implies $Q=Q(\psi)$. Similarly, the condition ${\bf B\cdot}\nabla\alpha=0$ 
in (\ref{eq:ll1}) requires:
\begin{equation}
B_r\deriv{\alpha}{r}+\frac{B_{\theta}}{r}\deriv{\alpha}{\theta}
\equiv\frac{1}{r^2\sin\theta} 
\frac{\partial(\psi,\alpha)}{\partial(\theta,r)}=0.
\quad \label{eq:ll9}
\end{equation}
Equation(\ref{eq:ll9}) requires that $\alpha=\alpha(\psi)$. 
(\ref{eq:ll5}) 
and (\ref{eq:ll6}) 
gives the equations:
\begin{equation}
\deriv{\psi}{\theta}\left(\frac{dQ}{d\psi}-\alpha\right)=0, \quad \deriv{\psi}{r}
\left(-\frac{dQ}{d\psi}+ \alpha\right)=0, 
\quad \hbox{and}\quad \alpha=\frac{dQ}{d\psi}.
\label{eq:ll9a}
\end{equation}
Using $\alpha=dQ/d\psi$ and $Q=Q(\psi)$ in (\ref{eq:ll7}) gives the equation:
\begin{equation}
\frac{\partial^2 \psi}{\partial r^2}+\frac{(1-\mu^2)}{r^2} 
\frac{\partial^2 \psi}{\partial\mu^2} +\frac{dQ(\psi)}{d\psi} Q(\psi)=0, 
\label{eq:ll10}
\end{equation}
where $\mu=\cos\theta$. 
Thus, the nonlinear force-free magnetic field equation (\ref{eq:ll1}) reduces 
to the second order partial differential equation (\ref{eq:ll10})  for 
$\psi$, with ignorable spherical polar coordinate $\phi$. 
Equation (\ref{eq:ll10}) is analogous to the Grad-Shafranov 
equation for MHD equilibria with ignorable coordinate $\phi$. 

From \cite{Low90} and \cite{Prasad14} (\ref{eq:ll10}) admits 
separable solutions for $\psi$ of the form:
\begin{equation}
\psi=\frac{P(\mu)}{r^n},\quad Q=a \psi^{1+1/n}, \label{eq:ll11}
\end{equation}
where $a$ is a constant and $P(\mu)$ satisfies the nonlinear second order 
differential equation:
\begin{equation}
(1-\mu^2) \frac{d^2P}{d\mu^2}+n(n+1) P+ a^2 
\left(\frac{n+1}{n}\right) P^{1+2/n}=0. \label{eq:ll12}
\end{equation}
For $a=0$, the solution of (\ref{eq:ll12}), which is regular at $\mu=\pm1$ is:
\begin{equation}
P(\mu)=(1-\mu^2)^{1/2} P_n^1(\mu), \label{eq:ll13}
\end{equation}
where
$P_n^m(\mu)$ is Legendre's associated function.

Here $a=0$ implies $Q=0$ and $\alpha = 0$, which represent the potential field (untwisted) 
solutions. For the non-potential cases ($a \neq 0$), the above nonlinear equation has to be 
solved numerically as an eigenvalue problem (for different values of $a$) subject to the 
boundary conditions $P=0$ at $\mu = \pm 1$. The solutions of (\ref{eq:ll12}) for the restrictive 
case of $n=1$ were presented in \cite{Low90}, which were constrained to due to an inherent 
singularity in $P$ at $\mu = 0$ for higher values of $n$. These solutions were later extended 
for higher odd values of $n$ by \cite{Prasad14} through the transformation 
$P(\mu) = \sqrt(1-\mu^2)F(\mu)$ and then solving the \ref{eq:ll12}) in terms of $F$ where
\begin{equation}
(1-\mu^2)\frac{d^2F}{d\mu^2} - 2\mu \frac{dF}{d\mu}+\left[n(n+1)-\frac{1}{(1-\mu^2)}\right]F(\mu)
+a^2\frac{(n+1)}{n}F^{(n+2)/n}(1-\mu^2)^{1/n} = 0. \label{eq:ll13a}
\end{equation}
with the boundary conditions $F(\mu) = 0$ at $\mu=\pm 1$.

Following \cite{Prasad14} we search for a two-dimensional magnetic vector 
potential of the form:
\begin{equation}
{\bf A}=A_{\theta} {\bf e}_{\theta}+A_\phi {\bf e}_{\phi}, \label{eq:ll14}
\end{equation}
where ${\bf B}=\nabla\times{\bf A}$.  Using (\ref{eq:ll3}), 
(\ref{eq:ll11}), (\ref{eq:ll12}), and (\ref{eq:ll14}) we obtain the equations:
\begin{align}
B_r=&\frac{1}{r\sin\theta}\left[\derv{\theta}(A_\phi \sin\theta) 
-\deriv{A_\theta}{\phi}\right]=-\frac{dP/d\mu}{r^{n+2}}, \nonumber\\
B_\theta=&-\frac{1}{r}\derv{r}(r A_\phi)= \frac{n P(\mu)}{r^{n+2}\sin\theta} 
, \nonumber\\
B_\phi=&\frac{1}{r}\derv{r}(rA_\theta)= \frac{a [P(\mu)]^{(n+1)/n}}
{r^{n+2}\sin\theta}. \label{eq:ll15}
\end{align}
Integration of (\ref{eq:ll15}) gives the solutions:
\begin{equation}
A_\theta=-\frac{1}{n} \frac{a[P(\mu)]^{(n+1)/n}} {r^{n+1}\sin\theta},
\quad A_\phi=\frac{P(\mu)}{r^{n+1}\sin\theta}. \label{eq:ll16}
\end{equation}
for $A_\theta$ and $A_\phi$. 

Using (\ref{eq:ll15}) and (\ref{eq:ll16}) it follows that:
\begin{equation}
{\bf A\cdot B}=A_\theta B_\theta+A_\phi B_\phi=0. \label{eq:ll17}
\end{equation}

Thus ${\bf A}{\bf\cdot}\nabla\times{\bf A}=0$,  which implies that 
the Pfaffian ${\bf A}{\bf\cdot}d{\bf x}=0$ admits an integrating factor,
$\mu$ such that $\mu {\bf A}{\bf\cdot}d{\bf x}=d\Phi$ 
where $\Phi({\bf x})=const.$ is a foliation with normal ${\bf A}$. 
The magnetic field  ${\bf B}$ and the Godbillon-Vey 
field $\boldsymbol{\eta}={\bf A}\times{\bf B}/|{\bf A}|^2$ lie on the foliation.
Using (\ref{eq:ll15})-(\ref{eq:ll16}), the components of 
$\boldsymbol{\eta}$ are given by the equations:
\begin{equation}
\eta_r=-\frac{n}{r},\quad 
\eta_\theta=-\frac{(dP/d\mu)\sin\theta}{rP[1+(a^2/n^2)P^{2/n}]}, 
\quad 
\eta_\phi=-\frac{a P^{(1/n-1)} (dP/d\mu)\sin\theta} 
{n r P[1+(a^2/n^2)P^{2/n}]}. \label{eq:ll18}
\end{equation}
In the evaluation of (\ref{eq:ll18}) we used the formula:
\begin{equation}
|{\bf A}|^2=\frac{[1+(a^2/n^2)P^{2/n}] P^2}{r^{2n+2}\sin^2\theta}. 
\label{eq:ll19}
\end{equation}
Calculating $\nabla\times\boldsymbol{\eta}$ we obtain:
\begin{equation}
(\nabla\times\boldsymbol{\eta})_r=\frac{a}{nr^2}
\frac{d}{d\mu}\left(\frac{P^{(1/n-1)} (dP/d\mu) (1-\mu^2)}
{[1+(a^2/n^2) P^{2/n}]}\right), 
\quad (\nabla\times\boldsymbol{\eta})_\theta
=(\nabla\times\boldsymbol{\eta})_\phi=0. \label{eq:ll20}
\end{equation}

Using (\ref{eq:ll18})-(\ref{eq:ll20}) we obtain the Godbillon-Vey helicity 
density $h_\eta$ as:
\begin{equation}
h_\eta=\boldsymbol{\eta}{\bf\cdot}\nabla\times\boldsymbol{\eta}
=-\frac{a}{r^3} \frac{d}{d\mu} 
\left(\frac{P^{(1/n-1)} dP/d\mu (1-\mu^2)}
{[1+(a^2/n^2) P^{2/n}]}\right). \label{eq:ll21}
\end{equation}
Note that for the potential field with $a=0$, the Godbillon-Vey 
helicity density $h_{gv}\equiv h_{\eta}$ is zero. Only for $a\neq 0$
is there a non-zero Godbillon Vey helicity.  
The parameter $a$ is an eigen-value in the nonlinear 
force free fields obtained by \cite{Low90} and \cite{Prasad14}. 
The eigenfunctions $P(\mu)$ are characterized by the label $n$ 
(see \ref{eq:ll12}), which determines the radial dependence of the solution for
$\psi$. A similar classification applies to the $F(\mu)$, where $P(\mu)=\sqrt{(1-\mu^2)}$. The eigen-functions are also labelled by the index $m$ 
where $m=1$ corresponds to the lowest possible value of the eigenvalue 
$a$ that fits the boundary conditions $F(\mu)=0$ at $\mu=\pm 1$. 

\begin{figure}
\centering\includegraphics[width=12cm, angle=0]{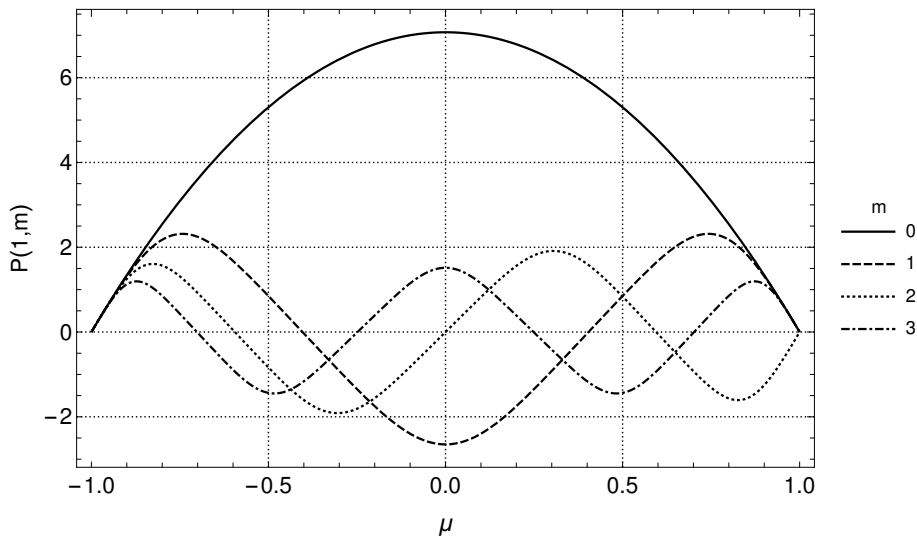}
\caption{\cite{Low90} force free magnetic field eigen-functions $P(n,m)$ 
for $n=1$ versus $\mu$ where $\mu=\cos\theta$ and $(r,\theta,\phi)$ 
are spherical polar coordinates. $m=0,1,2,3$.}  
\label{fig:p1_demo2a}
\end{figure}
Figure 4 shows the eigen-functions $P(n,m;\mu)$ versus 
$\mu$ for $-1<\mu<1$, for $n=1$ and  
$m=0,1,2,3$. The case $m=0$ corresponds to the potential field case
where $P(\mu)=1-\mu^2$. For $m=1$ and $m=3$ $P(\mu)$ is even in $\mu$,
 but for $m=2$ $P(\mu)$  is odd in $\mu$ (see also \cite{Low90} and
\cite{Prasad14}).

\begin{figure}
\centering\includegraphics[width=8cm, angle=0]{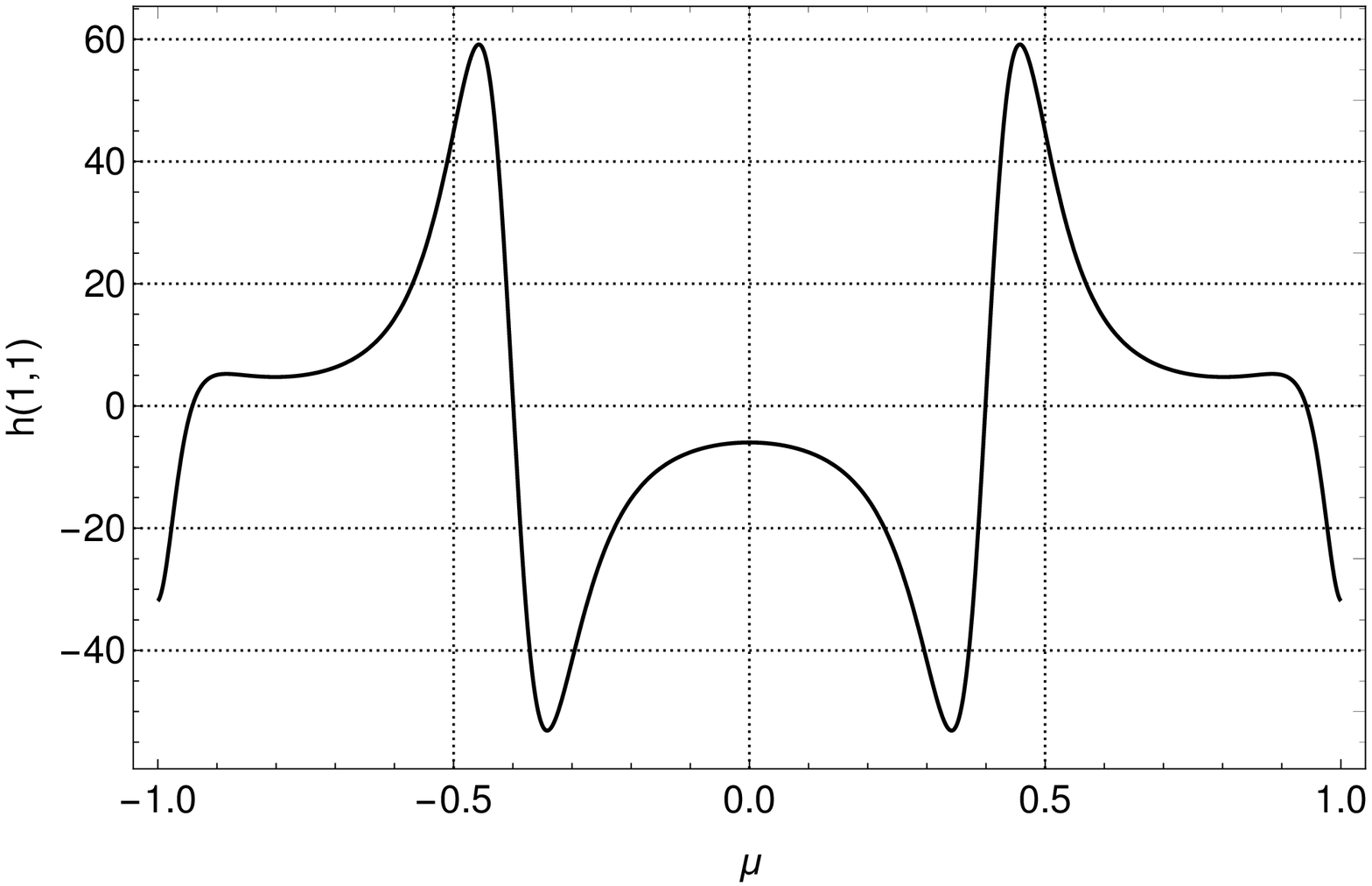}\\
\centering\includegraphics[width=8cm, angle=0]{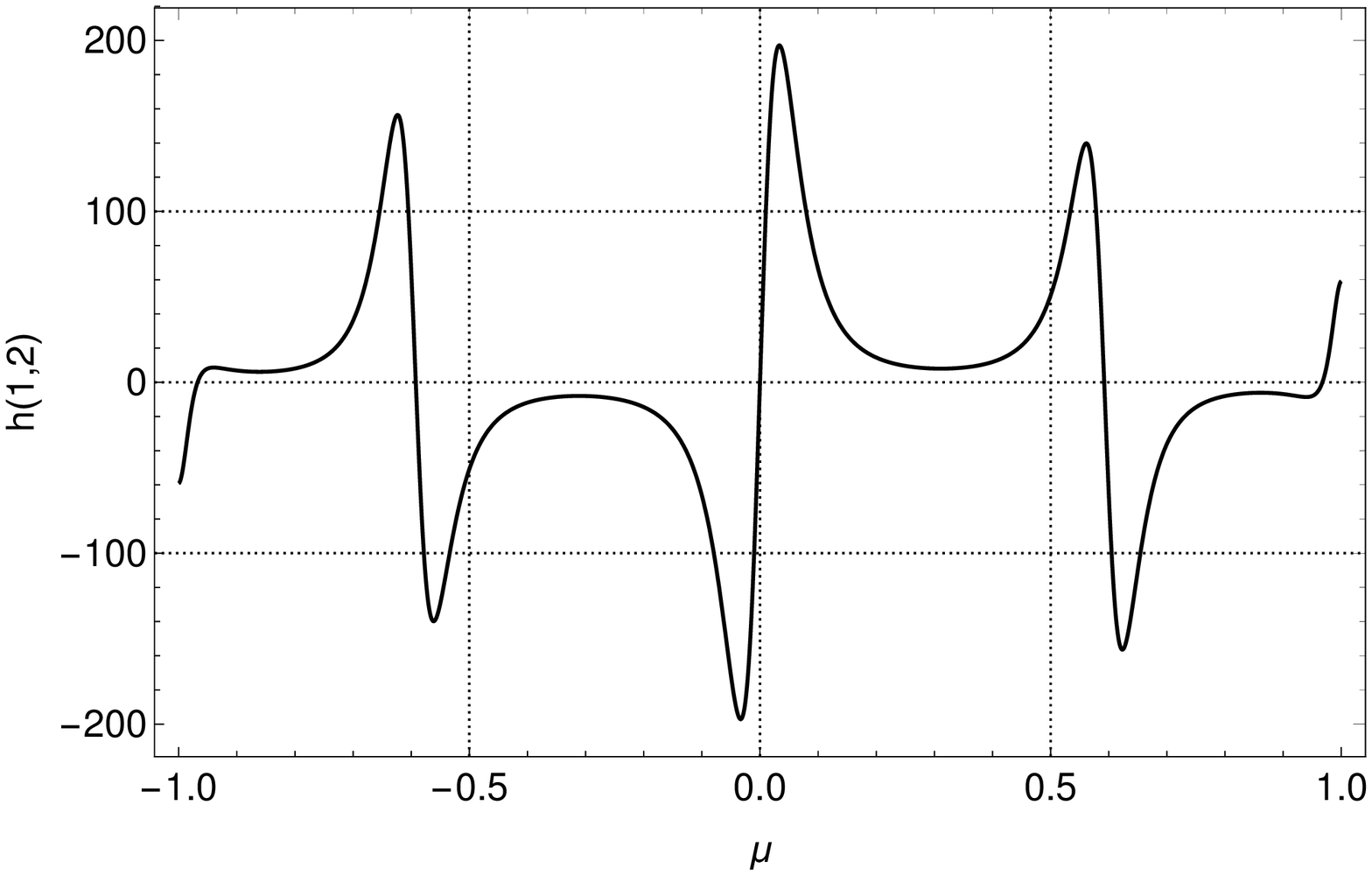}\\
\centering\includegraphics[width=8cm, angle=0]{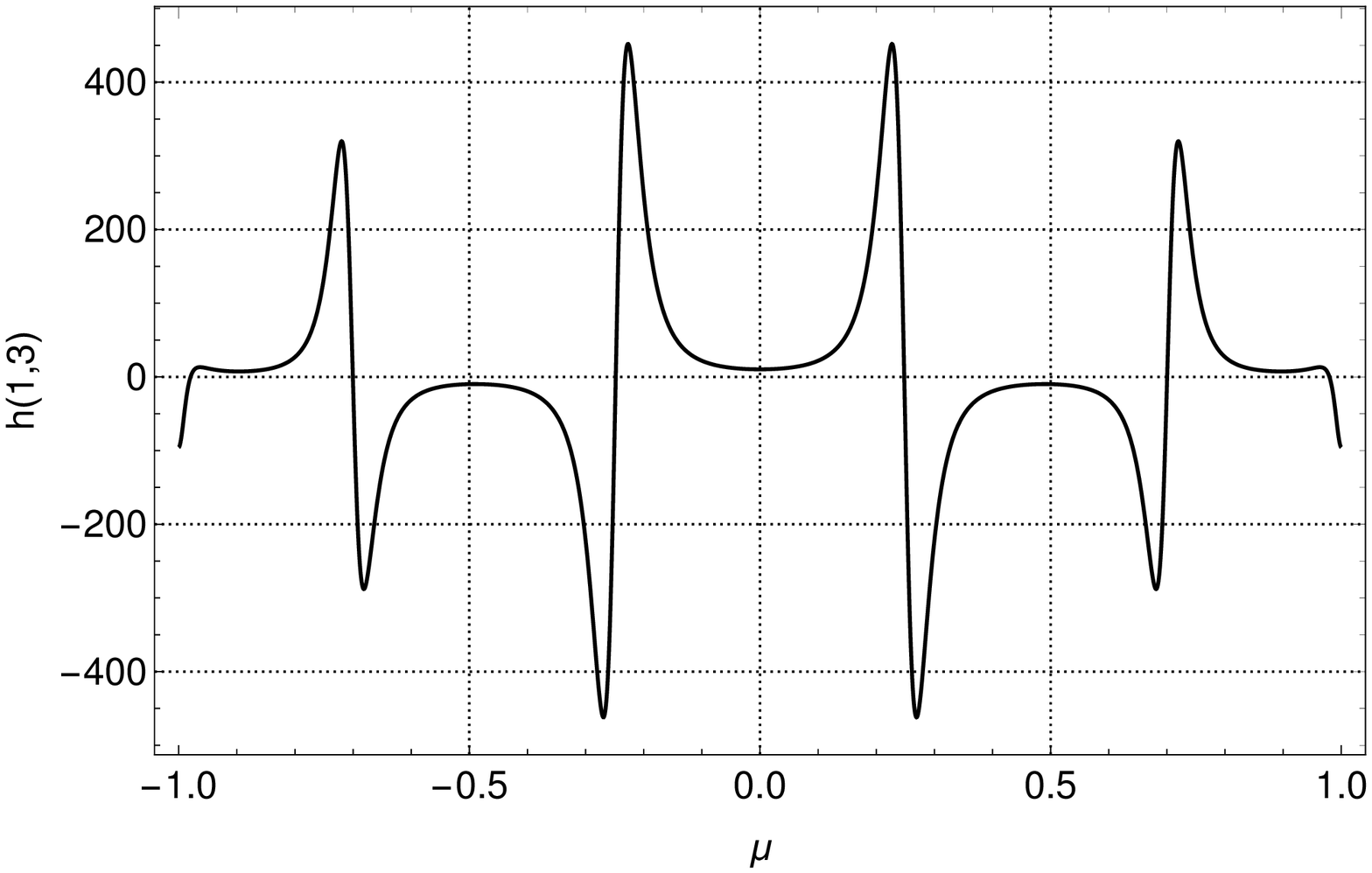}
\caption{Godbillon-Vey helicity density $h(n,m)$ versus $\mu$ for the 
\cite{Low90} force free magnetic field (\ref{eq:ll15}). 
$h(n,m)\equiv h_\eta$ is given by (\ref{eq:ll21}). The parameter $n=1$ 
and $m=1,2,3$ (top to bottom). Note $h(1,1)$ and $h(1,3)$ are even in $\mu$
but $h(1,2)$ (middle panel) is odd in $\mu$.}
\label{fig:h-godbillon}
\end{figure}
Figure 5 shows the Godbillon-Vey helicity $h\equiv h_\eta$ 
(equation (\ref{eq:ll21}) versus 
$\mu$ ($-1\le \mu\le 1$) for $r=1$,  
 for $n=1$ and for $m=1,2,3$. 
The  panels corresponds to $m=1$ (top panel) $m=2$ (middle panel) and $m=3$
(bottom panel). The helicity densities $h(1,m)$ versus $\mu$ 
are shown for $m=1,2,3$ from top to bottom. For $m=1$ and $m=3$ the  helicity 
densities $h(1,m)$ are even in $\mu$ but $h(1,2)$ for $m=2$ is odd in $\mu$.  
Note the existence of positive and negative values of $h\equiv h_{gv}$ 
as a function of $\mu$. The maximum and minimum values of $h$ 
increases with $m$. 

\begin{figure}
\centering\includegraphics[width=8cm, angle=0]{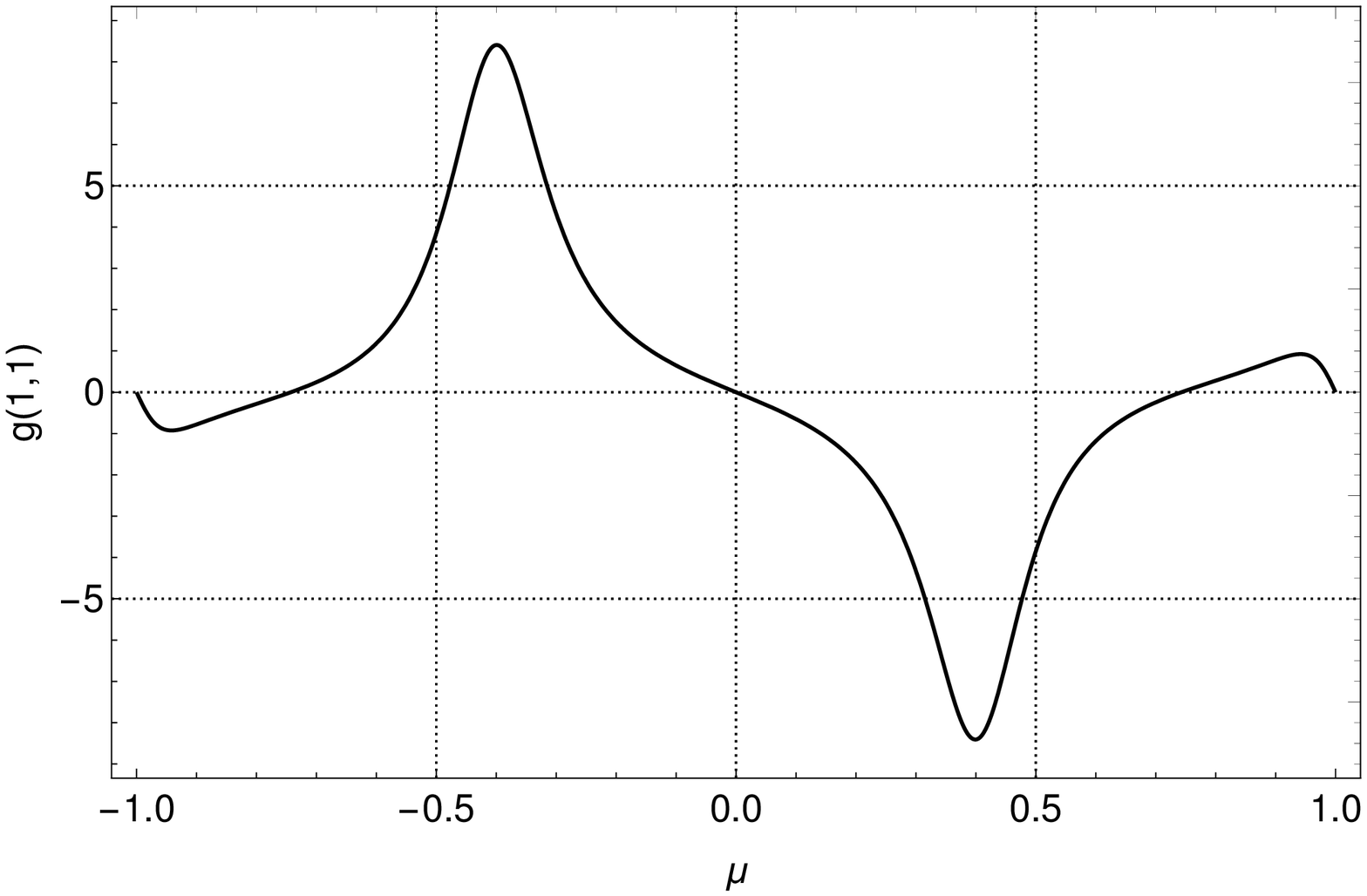}\\
\centering\includegraphics[width=8cm, angle=0]{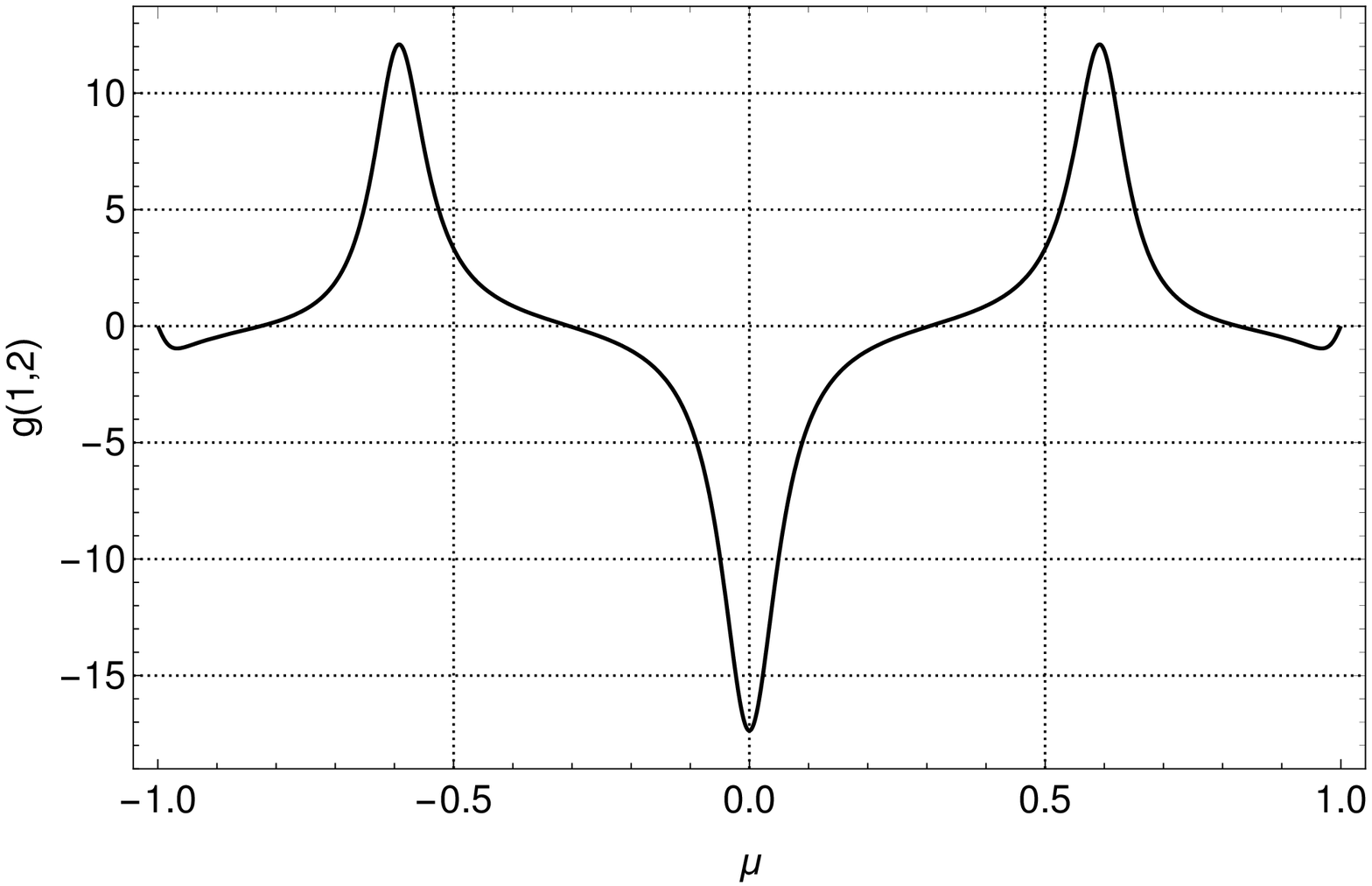}\\
\centering\includegraphics[width=8cm, angle=0]{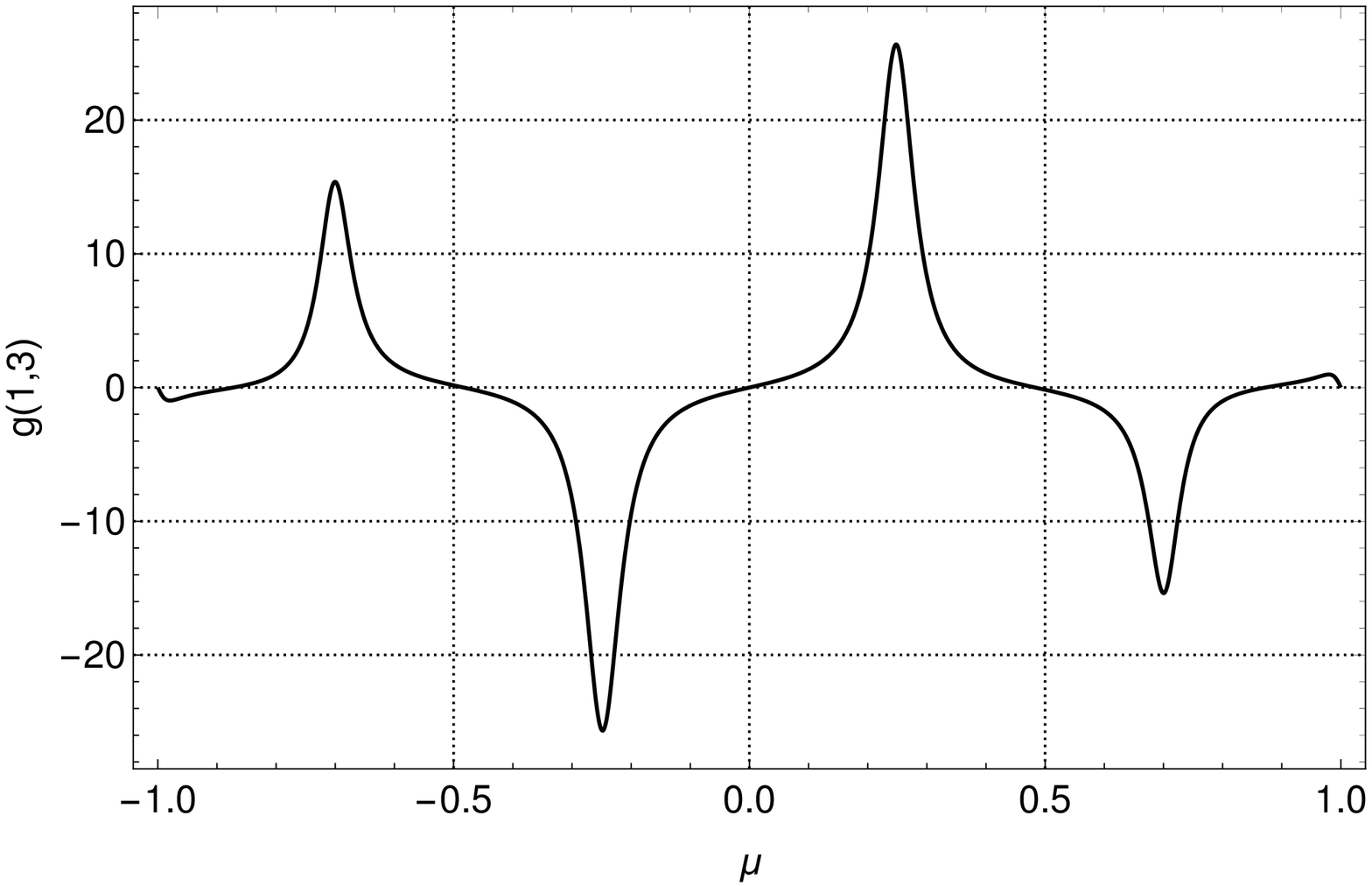}\\
\caption{Cumulative Godbillon-Vey helicity $g(n,m)$ versus $\mu$ for the
\cite{Low90} force free magnetic field (\ref{eq:ll15}).
$g(n,m)\equiv g(\mu)$ is given by (\ref{eq:ll21a}). The parameter $n=1$
and $m=1,2,3$ (top to bottom). Note $g(1,1)$ and $g(1,3)$ are odd in $\mu$
but $g(1,2)$ (middle panel) is even in $\mu$.}
\label{fig:g-godbillon}
\end{figure}
Perhaps of more physical interest is the net Godbillon-Vey 
helicity integral for $r=1$, namely:
\begin{equation}
g(\mu)=\int_{\mu=-1}^\mu d\mu' h_\eta(\mu')= -a\left[
\left(\frac{P^{(1/n-1)}(\mu') dP(\mu')/d\mu' (1-\mu^{'2})}
{[1+(a^2/n^2) P(\mu')^{2/n}]}\right)\right]_{\mu'=-1}^{\mu'=\mu}.
 \label{eq:ll21a}
\end{equation}
The plots of $g(\mu)$ in Figure 6 show that $g(\mu)=0$ at $\mu=1$, 
i.e. the net Godbillon Vey helicity 
integral is zero. Note that $g(\mu)$ is odd in $\mu$ for $m=1$ and $m=3$
but is even in $\mu$ for $m=2$. 

In general $h_\eta\neq 0$, but its integral over the spherical shell
$r_1<r<r_2$ is zero. The integral of $h_\eta$ over the northern 
hemisphere $0<\theta<\pi/2$ is minus that over the southern hemisphere 
$-\pi/2<\theta<0$.  

\subsection{\bf Clebsch Potential Representations}

Because ${\bf A}{\bf\cdot}\nabla\times{\bf A}=0$, it follows
that the Pfaffian ${\bf A}{\bf\cdot}d{\bf x}=0$ admits an integrating factor
$\mu$, such that $\mu {\bf A}=\nabla\Phi$ where $\Phi=const.$ is a foliation.
This means that ${\bf A}$ and 
${\bf B}$ have the Clebsch representation:
\begin{equation}
{\bf A}=\chi\nabla\Phi,\quad {\bf B}=\nabla\chi\times\nabla\Phi, 
\label{eq:ll22}
\end{equation}
where
\begin{equation}
\Phi=\int^\xi \frac{d\xi'}{G(\xi')}, \quad  
\xi=\phi+\frac{a}{n} \int^\mu \frac{P(\mu')^{1/n}}{(1-\mu^{'2})}\ d\mu', 
\quad \chi=P(\mu)\frac{G(\xi)}{r^n}. \label{eq:ll23}
\end{equation}
A derivation of the formulas
(\ref{eq:ll22})-(\ref{eq:ll23}) are given in Appendix F.
Note that   
 ${\bf B}$
 is independent of the choice of the arbitrary 
function $G(\xi)$ in the Clebsch representation.   
 Also note that $A_r=0$ for the 
solution (\ref{eq:ll22})-(\ref{eq:ll23}). 
Both ${\bf B}$ and 
$\boldsymbol{\eta}$ lie on the foliation $\Phi=const.$.  
The magnetic field lines are located on the intersection of the 
$\Phi=const.$ surfaces and the $\chi=const.$ surfaces. Note that: 
\begin{equation}
{\bf A}=\chi\nabla\Phi= \nabla(\chi\Phi)-\Phi\nabla\chi\equiv -\Phi\nabla\chi, 
\label{eq:ll24}
\end{equation}
In this latter representation, ${\bf A}$ is normal 
to the $\chi=const.$ surface and $\psi=\chi\Phi$ 
is a gauge potential.  
It is clear that the latter form of ${\bf A}$ in (\ref{eq:ll24}) 
is also a valid repesentation for ${\bf A}$, that gives the \cite{Low90}
nonlinear force free magnetic field ${\bf B}$ (\ref{eq:ll15}) 
in the form (\ref{eq:ll22}). 

The simplest form for $G(\xi)$ in (\ref{eq:ll23}) is $G(\xi)=1$. In this case
\begin{equation}
\Phi=\xi\quad\hbox{and}\quad \chi=\frac{P(\mu)}{r^n}. \label{eq:ll26}
\end{equation}
which gives:
\begin{equation}
{\bf A}=\chi\nabla\xi=\frac{P(\mu)}{r^{n+1}\sin\theta}\left[{\bf e}_\phi-\frac{a}{n}P(\mu)^{1/n} {\bf e}_\theta\right], \label{eq:ll27}
\end{equation}
which is the \cite{Prasad14} form of ${\bf A}$ 
for the \cite{Low90}  field. 

From (\ref{eq:ll24}) the magnetic vector potential:
\begin{equation}
{\bf A}^{(2)}=-\xi\nabla\chi
=\frac{\xi}{r^{n+1}}
\left[ nP(\mu) {\bf e}_r +\frac{dP}{d\mu} 
\sin\theta {\bf e}_\theta\right], \label{eq:ll28}
\end{equation}
 also gives rise to the \cite{Low90} 
force free field. 

From (\ref{eq:ll28}) it follows that ${\bf A}^{(2)}$ is normal to 
the $\chi=const.$
foliation:
\begin{equation}
r=\left[\frac{P(\mu)}{\chi}\right]^{1/n},
\label{eq:ll29}
\end{equation}
where
\begin{equation}
{\bf x}=r\left(\sin\theta\cos\phi,\sin\theta\sin\phi,
\cos\theta\right). \label{eq:ll30}
\end{equation}

\begin{figure}
\centering\includegraphics[width=12cm, angle=0]{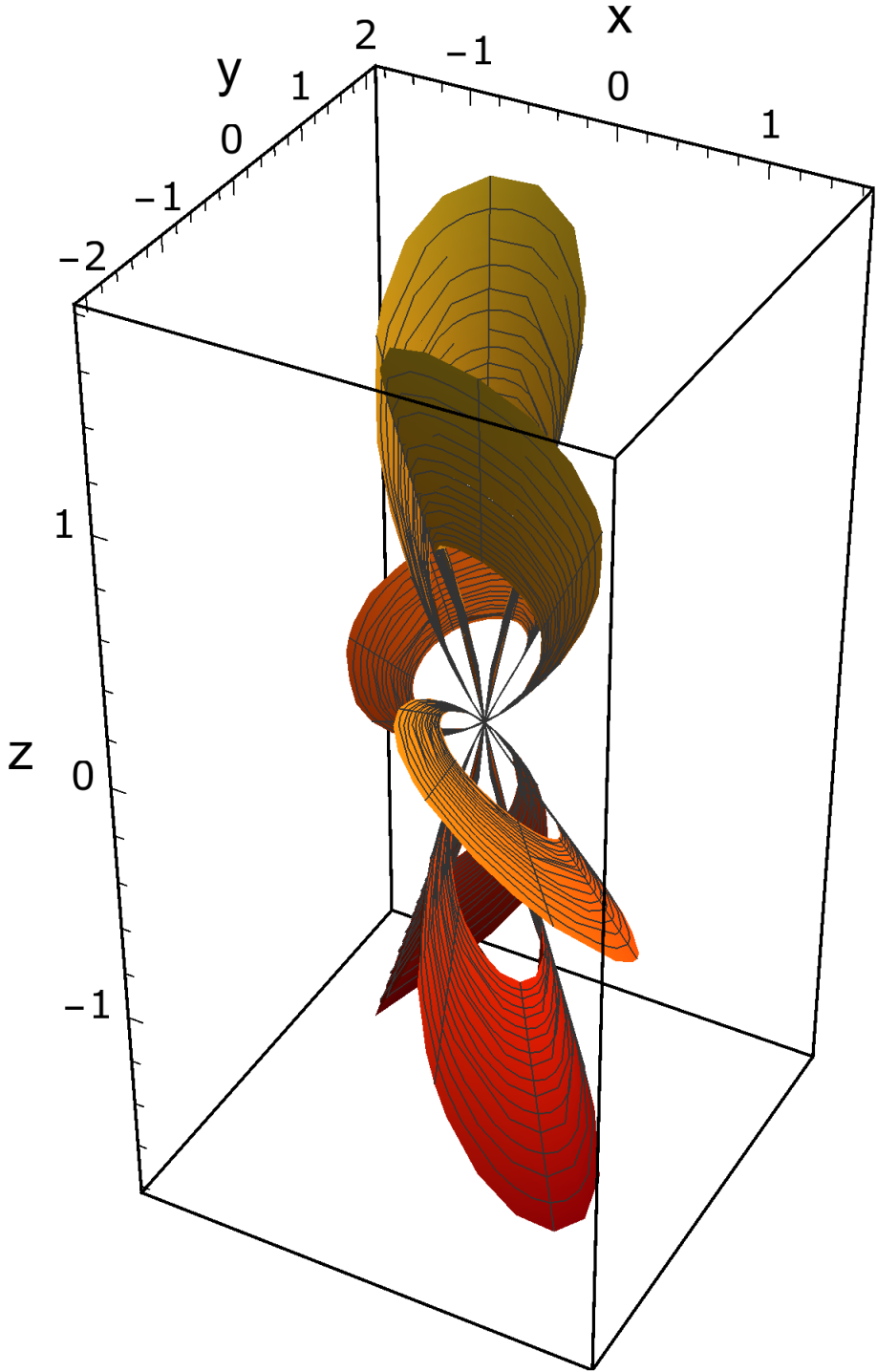}
\caption{The surface $\xi=1$ for the \cite{Low90} solution, where 
$\xi$ and $\chi$ are the Euler potentials 
defining ${\bf A}$ in (\ref{eq:ll27}). 
The surface is described by (\ref{eq:ll30a}) et seq. 
where ${\bf x}={\bf X}(\mu,\chi)$. The parameters $n=m=1$.}
\label{fig:xi-1-surface}
\end{figure}

To plot the foliation $\xi=const.$, with normal ${\bf n}=\nabla\xi/|
\nabla\xi|$, note that 
\begin{equation}
{\bf x}=\bf{X}(\theta,\chi)=r\left(\sin\theta\cos\phi,\sin\theta\sin\phi,
\cos\theta\right). \label{eq:ll30a}
\end{equation}
is a parametric form of the surface, where:
\begin{align}
\phi=&\xi-\int^\mu\frac{\gamma(\mu)}{1-\mu^2}\ d\mu,\quad 
 \gamma(\mu)=\frac{a}{n}
P(\mu)^{1/n}, \nonumber\\
r=&\left(\frac{P(\mu)}{\chi}\right)^{1/n}, \quad \mu=\cos\theta. 
\label{eq:ll30b}
\end{align}
Thus, the $\xi=const.$ surface can be described by the two independent 
parameters $(\theta,\chi)$, where $r=r(\mu,\chi)$ and $\phi=\phi(\xi,\mu)$ 
are given by (\ref{eq:ll30b}). The parametric representation of 
the $\xi=const.$ surface is a standard approach in differential geometry
(e.g. \cite{Lipschutz69}), from which one can extract the metric, or first fundamental form $I$:
\begin{equation}
I=g_{11}(dq^1)^2+2 g_{12}dq^1 dq^2+g_{22}(dq^2)^2, \label{eq:ll30c}
\end{equation}
where
\begin{equation}
q^1=\theta,\quad q^2=\chi,\quad g_{ij}={\bf x}_{q^i}{\bf \cdot}{\bf x}_{q^j}, 
\quad i,j=1,2, \label{eq:ll30d}
\end{equation}
is the metric for the surface. 

Figure 7 shows the surface $\xi=1$ generated by varying $\mu=\cos\theta$, 
($-1<\mu<1$), and by varying the parameter $\chi$ in the range 
$1<\chi<2$, where ${\bf A}=\chi\nabla\xi$ in (\ref{eq:ll27}). The surface
from (\ref{eq:ll30a}) has the form ${\bf x}={\bf X}(\mu,\chi)$. The parameters
$n=m=1$. The surface apparently several branches  which all pass through the 
origin and which fan out at larger $(x,y,z)$. The magnetic  field in fact 
diverges as $r\to 0$ at the origin. This means that for a realistic 
field, it is necessary to exclude the origin (e.g. limit the field 
to a region $r>r_1>0$ away from the origin). 

\begin{figure}
\centering\includegraphics[width=12cm, angle=0]{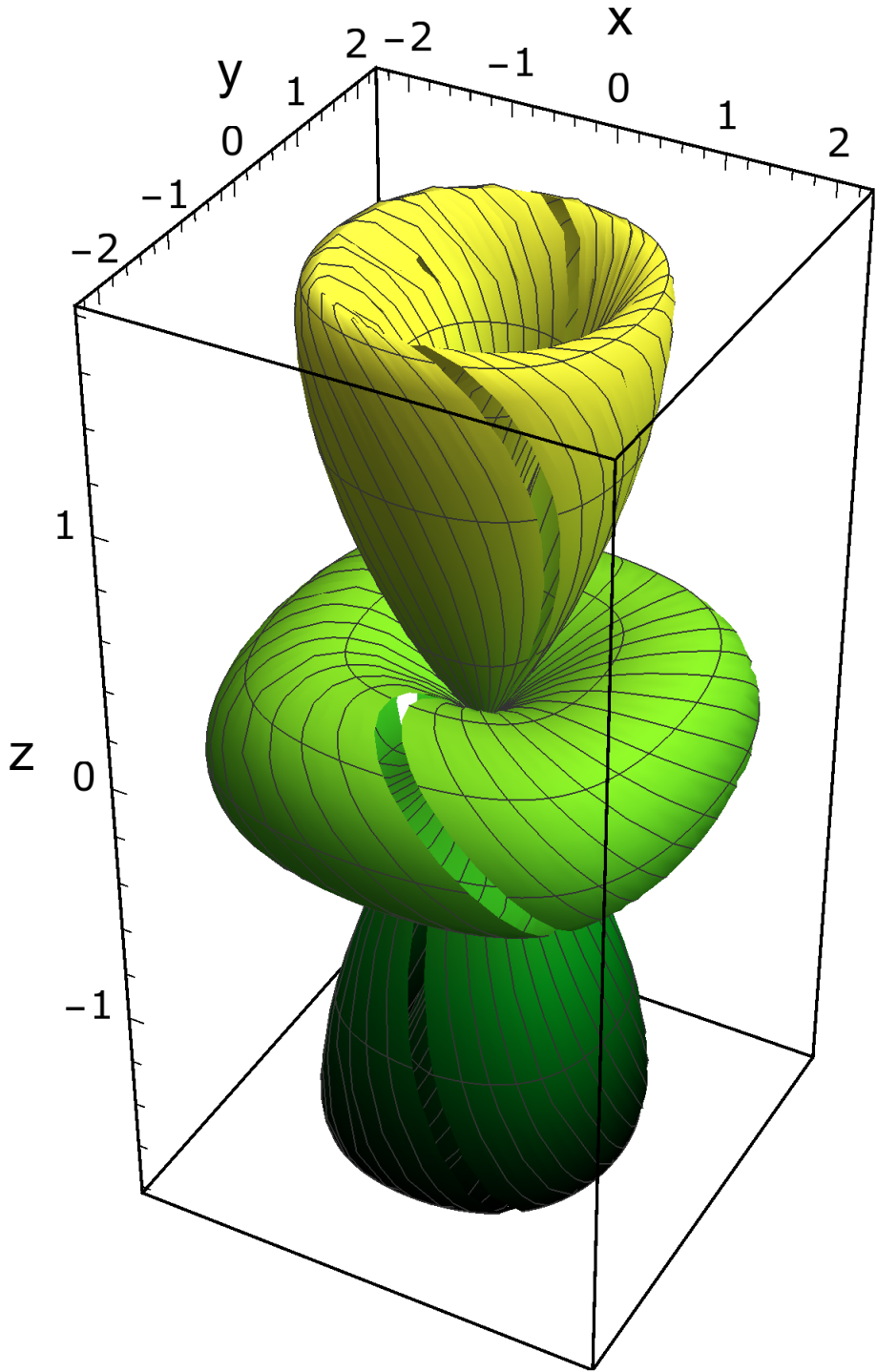}
\caption{The surface $\chi=1$ for the \cite{Low90} solution, where
$\xi$ and $\chi$ are Euler potentials for ${\bf A}$ in (\ref{eq:ll27})
and (\ref{eq:ll28}).
The surface is described by (\ref{eq:ll30a}) et seq. where 
${\bf x}={\bf X}(\mu,\xi)$.$-1<\mu<1$ and $1<\xi<4$. 
The parameters $n=m=1$.}
\label{fig:chi-1-surface}
\end{figure}

A similar strategy can be used to plot the $\chi=const.$ surfaces. 
In the latter case, $\phi=\phi(\theta,\xi)$ and $r=r(\theta)$ 
(note $\chi=const.$) and 
the natural parameters to describe the surface are $(\theta,\xi)$, 
i.e. ${\bf x}={\bf X}(\theta,\xi)$. Note that $r$ is not constant 
on the $\chi=const.$ surface. The magnetic field ${\bf B}=\nabla\chi\times\nabla\xi$ lines lie along the intersections of the $\chi=const.$ and $\xi=const.$ 
surfaces. 

Figure 8 shows the surface $\chi=1$, generated by varying 
$\mu$ and $\xi$ as independent variables in (\ref{eq:ll30}) 
to give the surface in the form ${\bf x}={\bf X}(\mu,\xi)$
where $\mu=\cos\theta$, $-1<\mu<1$ and $1<\xi<4$. The parametes
$n=m=1$. The surface consists of a toroidal doughnut surface
for small $|z|$ and cup like structures which extend along the $z$-axis
both for $z>0$ and for $z<0$.

\begin{figure}
\centering\includegraphics[width=12cm, angle=0]{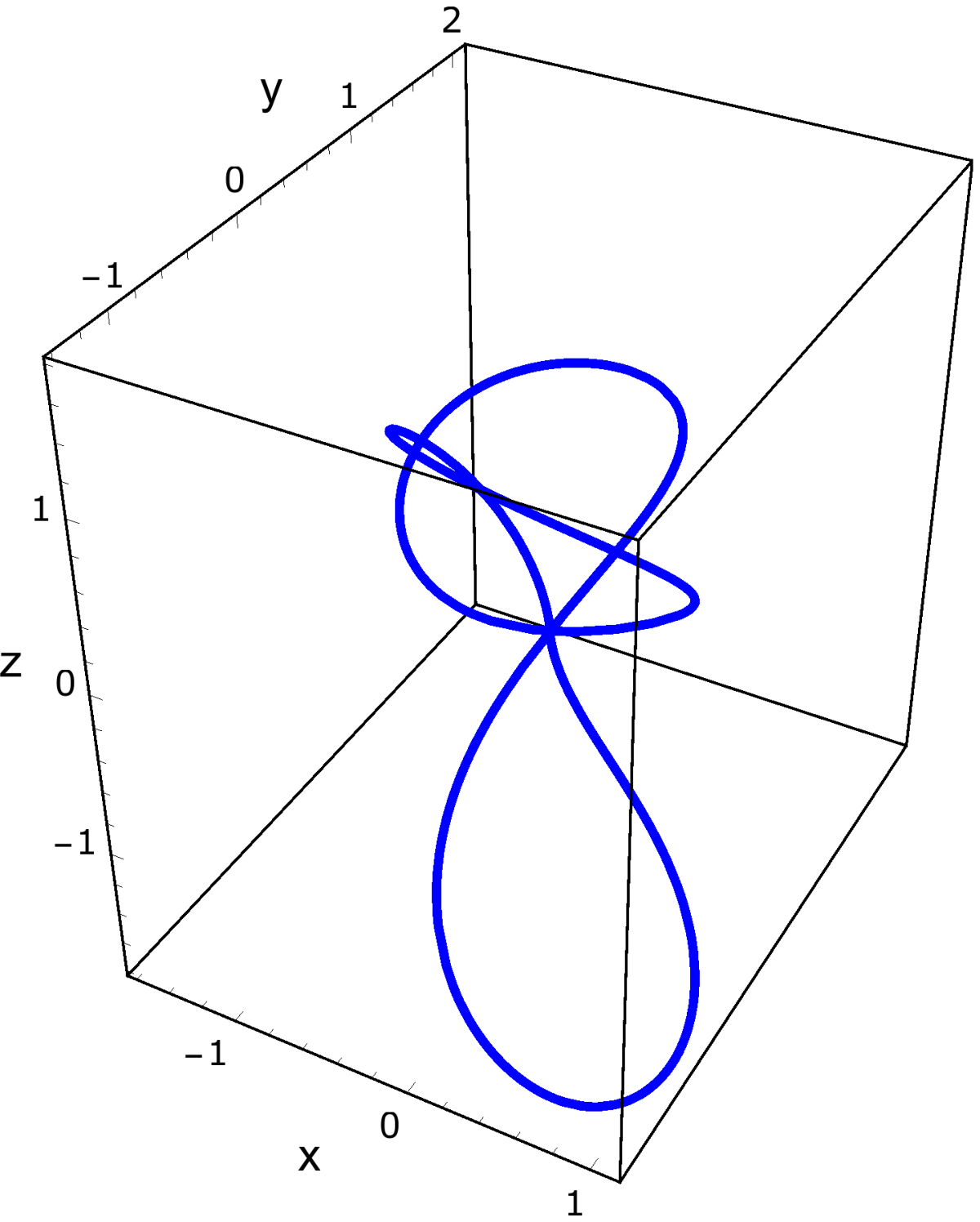}
\caption{The magnetic field line which is described by the intersection 
of the $\xi=1$ and $\chi=1$ Euler potential surfaces, where
$\xi$ and $\chi$ are Euler potentials for ${\bf A}$ for the \cite{Low90} 
force free magnetic field in (\ref{eq:ll27})
and (\ref{eq:ll28}). The parameters $n=m=1$.}
\label{fig:chi1xi1field-line}
\end{figure} 
Figure 9 shows the magnetic field line formed by the intersection 
of the $\xi=1$ and $\chi=1$ Euler potential 
surfaces displayed in Figures 7 and 8. 

\begin{figure}
\centering\includegraphics[width=12cm, angle=0]{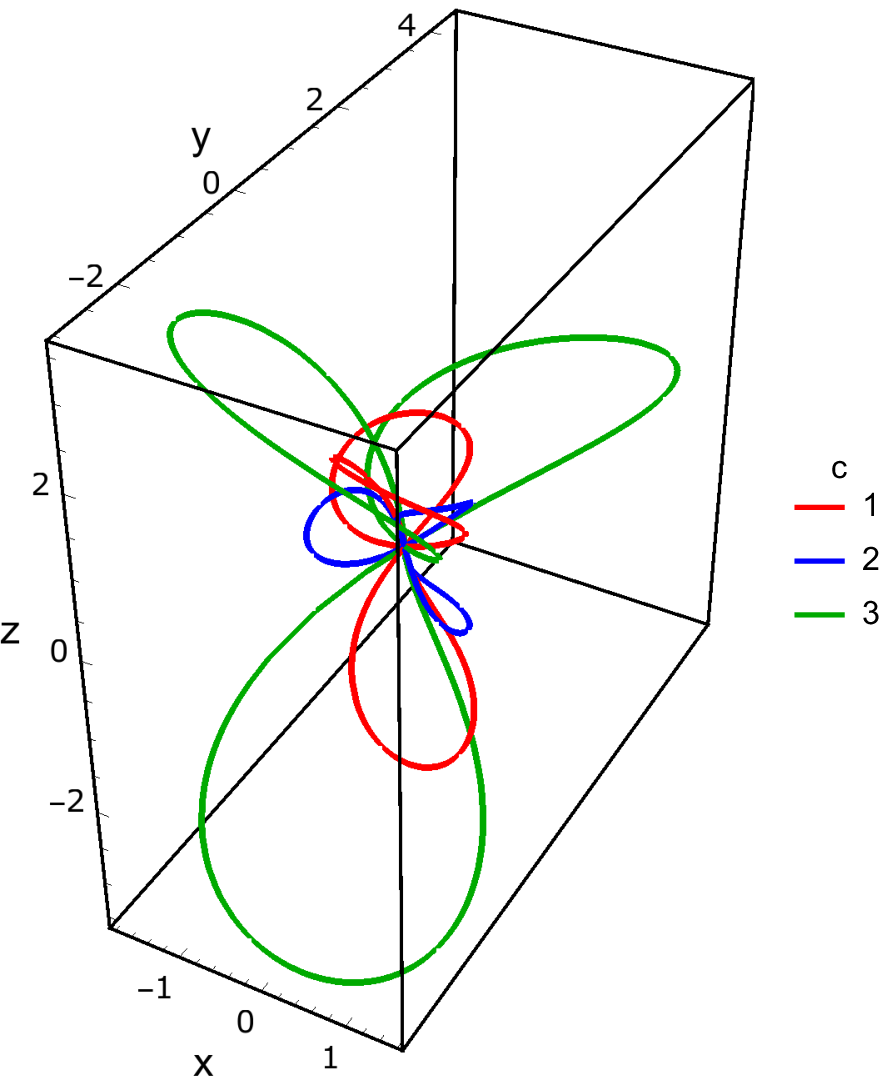}
\caption{The magnetic field lines which are described by the intersection
of the $\xi=c$ and $\chi=c$ Euler potential surfaces, where $c=1,2,3$. 
Here $\xi$ and $\chi$ are Euler potentials for ${\bf A}$ for the \cite{Low90}
force free magnetic field in (\ref{eq:ll27})
and (\ref{eq:ll28}). The parameters $n=m=1$.}
\label{fig:chi-cxi-c field-lines}
\end{figure}
 
Figure 10 shows further examples of magnetic field lines 
formed at the intersection of the $\xi=c$ and $\chi=c$ 
Euler poetential surfaces, for the cases $c=1,2,3$. This is a complicated 
complex of field lines. For ${\bf B}$ to be finite, the origin $r=0$ 
should be excluded, because $B\to\infty$ as $r\to 0$.



\section{Conclusions and Discussion}
In this paper, we studied the Godbillon-Vey invariant which arises in 
magnetohydrodynamics in the case where ${\bf A\bf\cdot B}=0$ where
${\bf B}=\nabla\times{\bf A}$ is the magnetic induction and ${\bf A}$ is the magnetic 
vector potential (\cite{Tur93}, \cite{Webb14a, Webb18}).
 The condition ${\bf A\cdot}(\nabla\times{\bf A})=0$, 
is a necessary and sufficient condition for the Pfaffian ${\bf A\cdot}d{\bf x}=0$
to be integrable (\cite{Sneddon57}), which implies that there exists an 
integrating factor $\mu$ such that $\mu{\bf A\cdot}d{\bf x}=\nabla 
\lambda{\bf\cdot}d{\bf x}=d\lambda$,
for some potential function $\lambda$. This means that the 
 ${\bf B}$
lies on the foliation $\lambda=const.$ and 
the normal to each leaf of the foliation 
is parallel to ${\bf A}$.  The Godbillon-Vey one~-form 
$\boldsymbol{\omega}^1_{\boldsymbol{\eta}}
=\boldsymbol{\eta}{\bf\cdot} d{\bf x}$ arises from the 
requirement that $\boldsymbol{\omega}^1_A={\bf A}{\bf\cdot}d{\bf x}$ satisfies 
$\omega^1_A\wedge d\boldsymbol{\omega}^1_A={\bf A\cdot B}\ d^3x=0$. 
This will be satisfied if there exists a one form $\boldsymbol{\omega}^1_{\boldsymbol{\eta}}$
such that $d\boldsymbol{\omega}_A^1=\boldsymbol{\omega}^1_{\boldsymbol{\eta}}
\wedge \boldsymbol{\omega}^1_A$. Setting 
\begin{equation}
\boldsymbol{\omega}^3_{\boldsymbol{\eta}}=
\boldsymbol{\omega}^1_{\boldsymbol{\eta}} 
 \wedge d\boldsymbol{\omega}^1_{\boldsymbol{\eta}}
=\boldsymbol{\eta}{\bf\cdot}\left(\nabla\times\boldsymbol{\eta}\right)\ d^3x, \label{eq:5.1}
\end{equation}
the above integrability conditions lead to the formulas:
\begin{equation}
\boldsymbol{\eta}=\frac{{\bf A}\times{\bf B}}{|{\bf A}|^2}, 
\quad h_{gv}=\boldsymbol{\eta}{\bf\cdot}(\nabla\times\boldsymbol{\eta}), \quad  
\quad H_{gv}=\int_{V} \boldsymbol{\eta}{\bf\cdot}
(\nabla\times\boldsymbol{\eta})\ d^3x, 
\label{eq:5.2}
\end{equation}
for the Godbillon-Vey field $\boldsymbol{\eta}$ and Godbillon-Vey helicity $H_{gv}$. 

It was shown (proposition 3.1) that if one chooses the electric field 
gauge potential $\psi$ such that $\psi={\bf A\cdot u}$ for which 
the one-form $\boldsymbol{\omega}_A^1={\bf A\cdot}d{\bf x}$ is Lie dragged with the fluid, 
then the Godbillon-Vey helicity density 
$h_{gv}=\boldsymbol{\eta}{\bf\cdot}(\nabla\times\boldsymbol{\eta})$ 
satisfies the conservation law:
\begin{equation}
\deriv{h_{gv}}{t}+\nabla{\bf\cdot}\left({\bf u}h_{gv}+\alpha {\bf B}\right)=0, 
\label{eq:5.3}
\end{equation}
where the scalar parameter $\alpha$ depends on the fluid shear tensor $\sf{\sigma}$ 
via the equations:
\begin{equation}
\alpha=\frac{2 {\bf A}{\bf\cdot}{\sf\sigma}{\bf\cdot}\boldsymbol{\eta}}{|{\bf A}|^2}, 
\quad {\sf\sigma}=\frac{1}{2}\left[\nabla{\bf u}+\left(\nabla{\bf u}\right)^T
-\frac{2}{3}{\sf I}\ \nabla{\bf\cdot u}\right], \label{eq:5.4}
\end{equation}
where ${\sf I}$ is the unit $3\times 3$ dyadic or identity matrix.
 From (\ref{eq:5.3}) it follows that
\begin{equation}
\frac{dH_{gv}}{dt}=0, \label{eq:5.5}
\end{equation}
i.e. $H_{gv}$ is conserved for a volume $V_m$ moving with the flow, 
where it is assumed that  
 $B_n={\bf B\cdot n}$, vanishes on the boundary $\partial V_m$. 
The Godbillon-Vey helicity conservation laws (\ref{eq:5.3}) and (\ref{eq:5.5})
only hold if one uses the advected ${\bf A}$ gauge for ${\bf A}$ 
(e.g. \cite{Gordin87}, \cite{Webb14a}). 
Note that $\alpha=0$ for a shear free flow 
for which ${\sf\sigma}=0$. 

In Section 4, an evolution equation for $h_{gv}$ was 
developed for the case where  
$h_m={\bf A\cdot B}\neq 0$, for which the magnetic field does not lie 
on a foliated
family of surfaces. The Godbillon-Vey helicity density $h_{gv}$ 
defined in (\ref{eq:5.2}) satisfies a modified evolution equation of the form:
\begin{equation}
\deriv{h_{gv}}{t}+\nabla{\bf\cdot}\left({\bf u}h_{gv}\right)=Q,
\label{eq:5.6}
\end{equation}
where $Q$  describes the coupling of the  
magnetic helicity density
($h_m$) with the Godbillon-Vey field $\boldsymbol{\eta}$ via the shear tensor of the flow. 

In ideal, barotropic, incompressible  fluid mechanics,  
 the fluid helicity density:
\begin{equation}
h_f={\bf u\cdot}\boldsymbol{\omega}={\bf u\cdot}(\nabla\times {\bf u}), 
\label{eq:5.9}
\end{equation}
is the analogue of the magnetic helicity 
density $h_m={\bf A\cdot}(\nabla\times{\bf A})$, 
but the analogy is not precise (i.e. there are some caveats 
on the Godbillon-Vey helicity in  the ideal fluid 
context). The condition ${\bf u\cdot}(\nabla\times {\bf u})=0$ implies that 
there is a foliation of the flow, such that $\mu {\bf u\cdot}d{\bf x}=d\lambda$
where $\mu$ is an integrating factor. The fluid vorticity 
$\boldsymbol{\omega}$
lies on the foliation surfaces $\lambda =constant.$,
 and ${\bf u}$ is normal to the surfaces. One can define a 
Godbillon-Vey vector field 
$\boldsymbol{\eta}={\bf u}\times\boldsymbol{\omega}/|{\bf u}|^2$ and set 
$h_\eta=\boldsymbol{\eta}{\bf\cdot}(\nabla\times \boldsymbol{\eta})$ as the 
Godbillon-Vey helicity density. However the equation for ${\bf u}$, 
for incompressible fluid flows, is the momentum or Euler equation:
\begin{equation}
{\bf u}_t-{\bf u}\times \boldsymbol{\omega}
+\nabla\left(p+\frac{1}{2}|{\bf u}|^2\right)=0, \label{eq:5.10}
\end{equation}
where $F=(p+|{\bf u}|^2/2)$ is the Bernoulli function. 
The Euler flow has been the 
subject of many investigations on knotted vortex tubes in fluids. 
Steady solutions of (\ref{eq:5.10}) for  $F=const.$ give rise 
to Beltrami flows, which in most cases give rise to chaotic streamlines
(e.g. The ABC flow is an example: \cite{Dombre86}). 
This is not the exact analogue of the magnetic vector potential equation for 
${\bf A}$ in MHD, namely:
\begin{equation}
{\bf A}_t-{\bf u}\times \left(\nabla\times {\bf A}\right) 
+\nabla({\bf u\cdot A})=0, \label{eq:5.11}
\end{equation}
where we use the advected ${\bf A}{\bf\cdot}d{\bf x}$ gauge.  The net upshot of
this analysis is that one can derive an advection type equation for the 
Godbillon-Vey helicity density $h_{\boldsymbol{\eta}}$ for ideal fluids, 
but in general it is not a conservation law. 


Examples of the nonlinear force free magnetic fields of \cite{Low90} and 
\cite{Prasad14} were illustrated in Section 5. The magnetic field induction
${\bf B}=\nabla\times{\bf A}$ were shown to admit a vector potential ${\bf A}$
which satisfies ${\bf A}{\bf\cdot}{\bf B}=0$, which in turn implies that 
the magnetic field ${\bf B}$ lies on a foliation with normal $\hat{\bf A}
={\bf A}/|{\bf A}|$. The Godbillon-Vey helicity density for the \cite{Low90} 
nonlinear force-free magnetic fields in general is non-zero
(Section 5). 
Note that not all force-free magnetic fields 
have ${\bf A}{\bf\cdot B}=0$. 
Force free magnetic fields 
are widely used in modelling solar magnetic fields in the solar 
chromosphere and corona (e.g. \cite{Sakurai79}, \cite{Wiegelmann12}, 
\cite{Prasad14}). 

\cite{Holm91} studied zero helicity Lagrangian 
kinematics for 3D advection. 
Okhitani (2018) has investigated the 3D Euler equation for incompressible 
fluids, using Clebsch potentials  for zero helicity flows, with the aim in mind 
of elucidating singularity formation in ideal fluids (e.g. he studies both the Taylor Green vortex and the Kida vortex). The role of Godbillon-Vey helicity
in these flows is an interesting possibility for further research. 
\cite{Berger18} have investigated the
absolute magnetic helicity, which uses a poloidal and toroidal decomposition
of the field and uses the Gauss-Bonnet theorem.  
These 
problems pose open questions beyond the scope of the present paper.

\leftline{\bf Acknowledgements}
GMW is supported in part by NASA grant NNX15A165G. SCA is supported in part 
by an NSERC research grant. Q.H. and A.P. acknowledge partial support by 
NASA grant 80NSSC17K0016 and NSF award AGS-1650854. 
GMW acknowledges discussions  
with M.A. Berger on absolute magnetic helicity and toroidal and poloidal 
magnetic field decompositions and on possible applications of Godbillon-Vey
helicity.

\appendix
\section*{Appendix A}
\setcounter{section}{1}

In this appendix we derive the Godbillon-Vey helicity conservation 
equation (\ref{eq:god3.29})
for ideal MHD flows with ${\bf A}{\bf\cdot}{\bf B}=0$. The analysis roughly  
follows that of \cite{Tur93} and \cite{Webb14a}. From \cite{Webb14a}, 
equation (4.95), Faraday's equation (\ref{eq:god2.6}) 
can be written in the form:
\begin{equation}
\left(\derv{t}+{\cal L}_{\bf u}\right) d\boldsymbol{\omega}_A^1\equiv 
\left(\derv{t}+{\cal L}_{\bf u}\right) 
\left(\boldsymbol{\omega}_\eta^1\wedge \boldsymbol{\omega}_A^1\right)=0, 
\label{eq:godA1}
\end{equation}
where $d\boldsymbol{\omega}_A^1={\bf B}{\bf\cdot}d{\bf S}$ is the magnetic flux two form, 
and the decomposition 
$d\boldsymbol{\omega}_A^1=\boldsymbol{\omega}_\eta^1\wedge \boldsymbol{\omega}_A^1$ 
from (\ref{eq:god3.11}) implies ${\bf A}{\bf\cdot}{\bf B}=0$.
 Because we use the advected ${\bf A}$ gauge with $\psi={\bf A}{\bf\cdot}{\bf u}$, 
\begin{equation}
\left(\derv{t}+{\cal L}_{\bf u}\right)\boldsymbol{\omega}_A^1
=\frac{d}{dt}\left({\bf A}{\bf\cdot}d{\bf x}\right)  
=\left[\deriv{\bf A}{t}-{\bf u}\times (\nabla\times{\bf A})
+\nabla({\bf u\cdot A})\right]{\bf\cdot}d{\bf x}=0, \label{eq:godA2}
\end{equation}
is equivalent to the un-curled form of Faraday's equation. 
Taking into account (\ref{eq:godA2}),  (\ref{eq:godA1}) simplifies to:
\begin{equation}
\left[\left(\derv{t}+{\cal L}_{\bf u}\right)\boldsymbol{\omega}_\eta^1\right]
\wedge\boldsymbol{\omega}_A^1=0. \label{eq:godA3}
\end{equation}
Equation (\ref{eq:godA3}) implies:
\begin{equation}
\left(\derv{t}+{\cal L}_{\bf u}\right)\boldsymbol{\omega}_\eta^1=\alpha \boldsymbol{\omega}_A^1,
\label{eq:godA4}
\end{equation}
where the function $\alpha({\bf x},t)$ is yet to be determined. (\ref{eq:godA4}) 
may be written as:
\begin{equation}
\deriv{\boldsymbol{\eta}}{t}- {\bf u}\times(\nabla\times\boldsymbol{\eta})+\nabla ({\bf u\cdot}
\boldsymbol{\eta})=\alpha {\bf A}. \label{eq:godA5}
\end{equation}
Taking the scalar product of (\ref{eq:godA5}) with ${\bf A}$ gives the equation:
\begin{equation}
\alpha |{\bf A}|^2={\bf A}{\bf\cdot}
\left[\deriv{\boldsymbol{\eta}}{t}
-{\bf u}\times(\nabla\times\boldsymbol{\eta})
+\nabla ({\bf u\cdot} \boldsymbol{\eta})\right]. \label{eq:godA6}
\end{equation}
To obtain a simpler formula for $\alpha$, we use the fact 
that $\boldsymbol{\omega}_A^1={\bf A}{\bf\cdot}d{\bf x}$ is Lie dragged 
with the flow in (\ref{eq:godA2}). Taking the scalar product of (\ref{eq:godA2}) with 
$\boldsymbol{\eta}$ gives the equation:
\begin{equation}
0=\boldsymbol{\eta}{\bf\cdot}\left[\deriv{\bf A}{t}-{\bf u}\times (\nabla\times{\bf A})
+\nabla({\bf u\cdot A})\right]. \label{eq:godA7}
\end{equation}
Adding (\ref{eq:godA6}) and (\ref{eq:godA7}) and noting that ${\bf A\cdot}\boldsymbol{\eta}=0$
(note $\boldsymbol{\eta}={\bf A}\times{\bf B}/|{\bf A}|^2$), results in the formula:
\begin{equation}
\alpha=\frac{1}{|{\bf A}|^2} \left[ \boldsymbol{\eta}{\bf\cdot}
\left({\bf u}{\bf\cdot}\nabla{\bf A}+{\bf A}{\bf\cdot}\nabla {\bf u}\right)
+{\bf A}{\bf\cdot}
\left({\bf u}{\bf\cdot}\nabla \boldsymbol{\eta}
+ \boldsymbol{\eta}{\bf\cdot}\nabla{\bf u}\right)\right]. \label{eq:godA8}
\end{equation}
Using the result $\boldsymbol{\eta}{\bf\cdot}{\bf A}=0$, (\ref{eq:godA8}) reduces to:
\begin{equation}
\alpha=\frac{1}{|{\bf A}|^2} \left[ A^s \eta^i\left(\nabla_i u^s+\nabla_s u^i\right)\right]. 
\label{eq:godA9}
\end{equation}
Using the Cauchy-Stokes formula (\cite{Mihalas84}, \cite{Webb94}):
\begin{equation}
u^i_{,j}=\frac{1}{2}\omega_{ij}+\sigma_{ij}+\frac{1}{3}\delta_{ij}\nabla{\bf\cdot}{\bf u}, 
\label{eq:godA10}
\end{equation}
where
\begin{align}
\omega_{ij}=&u^i_{,j}-u^j_{,i}, \nonumber\\
\sigma_{ij}=&\frac{1}{2}\left(u^i_{,j}+u^j_{,i} 
-\frac{2}{3}\delta_{ij}\nabla{\bf\cdot}{\bf u}\right) \label{eq:godA11}
\end{align}
are the rotation tensor ($\omega_{ij}$) and shear tensor ($\sigma_{ij}$) of the flow, 
 (\ref{eq:godA8}) for $\alpha$ reduces to:
\begin{equation}
\alpha=\frac{1}{|{\bf A}|^2}\left({\bf A}{\bf\cdot}{\sf\sigma}{\bf\cdot}\boldsymbol{\eta}
+\boldsymbol{\eta}{\bf\cdot}{\sf\sigma}{\bf\cdot}{\bf A}\right)\equiv 
\frac{2({\bf A}{\bf\cdot}{\sf\sigma}{\bf\cdot}\boldsymbol{\eta})}{|{\bf A}|^2}, 
\label{eq:godA12}
\end{equation}
which is the result (\ref{eq:god3.30}) for $\alpha$. 

Next we show that the Godbillon-Vey helicity 3-form:
\begin{equation}
\boldsymbol{\omega}_\eta^3=\boldsymbol{\omega}_\eta^1\wedge d\boldsymbol{\omega}_\eta^1
\equiv \boldsymbol{\eta}{\bf\cdot}(\nabla\times \boldsymbol{\eta})\ d^3x, \label{eq:godA13}
\end{equation}
satisfies the equation:
\begin{equation}
\left(\derv{t}+{\cal L}_{\bf u}\right)\boldsymbol{\omega}_\eta^3
=-d\left(\alpha d\boldsymbol{\omega}_A^1\right). \label{eq:godA14}
\end{equation}
The result (\ref{eq:godA14}) follows by noting that:
\begin{align}
\left(\derv{t}+{\cal L}_{\bf u}\right)\boldsymbol{\omega}_\eta^3=&
\left[\left(\derv{t}+{\cal L}_{\bf u}\right)\boldsymbol{\omega}_\eta^1\right]
\wedge d\boldsymbol{\omega}_\eta^1 
+\boldsymbol{\omega}_\eta^1\wedge 
\left[\left(\derv{t}+{\cal L}_{\bf u}\right) d\boldsymbol{\omega}_\eta^1\right]\nonumber\\
=&\alpha \boldsymbol{\omega}_A^1\wedge d\boldsymbol{\omega}_\eta^1
+\boldsymbol{\omega}_\eta^1\wedge d\left[\left(\derv{t}+{\cal L}_{\bf u}\right)
\boldsymbol{\omega}_\eta^1\right]\nonumber\\
=&0+\boldsymbol{\omega}_\eta^1\wedge d\left(\alpha \boldsymbol{\omega}_A^1\right)
=-d\left(\boldsymbol{\omega}_\eta^1\wedge\alpha\boldsymbol{\omega}_A^1\right)
=-d\left(\alpha d\boldsymbol{\omega}_A^1\right), \label{eq:godA15}
\end{align}
which proves (\ref{eq:godA14}). In the derivation of (\ref{eq:godA15}), 
the result $\boldsymbol{\omega}_\eta^1\wedge d\boldsymbol{\omega}_A^1=0$ was used, 
which implies $d\boldsymbol{\omega}_\eta^1\wedge \boldsymbol{\omega}_A^1=0$, because:
\begin{equation}
dd\boldsymbol{\omega}_A^1=0=d(\boldsymbol{\omega}_\eta^1\wedge \boldsymbol{\omega}_A^1)
=d \boldsymbol{\omega}_\eta^1\wedge \boldsymbol{\omega}_A^1
-\boldsymbol{\omega}_\eta^1\wedge d\boldsymbol{\omega}_A^1, \label{eq:godA16}
\end{equation}
where we used (\ref{eq:god3.11}). 

From (\ref{eq:godA14}) and (\ref{eq:godA13}), (\ref{eq:godA14}) reduces to:
\begin{equation}
\left(\derv{t}+{\cal L}_{\bf u}\right)
\left[\boldsymbol{\eta}{\bf\cdot}(\nabla\times \boldsymbol{\eta})\ d^3x\right]
=-d\left(\alpha {\bf B}{\bf\cdot}d {\bf S}\right)=-\nabla{\bf\cdot}\left(\alpha {\bf B}\right) 
d^3x. \label{eq:godA17}
\end{equation}
Using Cartan's magic formula, gives:
\begin{equation}
{\cal L}_{\bf u}\left(\boldsymbol{\eta}{\bf\cdot}
(\nabla\times \boldsymbol{\eta})\ d^3x\right) 
={\bf u}\lrcorner d\left(h_{gv}d^3 x\right)+d\left({\bf u}\lrcorner h_{gv}d^3 x\right)
= 0+\nabla{\bf\cdot}\left({\bf u}h_{gv}\right)\ d^3x.
 \label{eq:godA18}
\end{equation}
Using (\ref{eq:godA18}), (\ref{eq:godA17}) reduces to the Godbillon-Vey 
helicity conservation law (\ref{eq:god3.29}).

\appendix
\section*{Appendix B}
\setcounter{section}{2}
\setcounter{equation}{0}
In this appendix we provide a proof of proposition (\ref{prop4.1}) on the form of the 
Godbillon-Vey transport equation described in (\ref{eq:god4.1})-(\ref{eq:god4.4}). 
 Faraday's equation takes the form:
\begin{equation}
\derv{t}\left(\beta {\bf A}+\boldsymbol{\eta}\times {\bf A}\right)
-\nabla\times\left[{\bf u}\times\left(\beta {\bf A}+\boldsymbol{\eta}\times {\bf A}\right)\right]
=0. \label{eq:god4.5}
\end{equation}
Equation (\ref{eq:god4.5}) can then be expressed in the form:
\begin{align}
&\boldsymbol{\eta}\times\left[ {\bf A}_{t} -{\bf u}\times(\nabla\times{\bf A}) 
+\nabla({\bf u}{\bf\cdot}{\bf A})\right]\nonumber\\
&+\left[\boldsymbol{\eta}_t-{\bf u}\times(\nabla\times\boldsymbol{\eta})
+\nabla({\bf u}{\bf\cdot}\boldsymbol{\eta})\right]\times{\bf A}\nonumber\\
&+\left[(\beta{\bf A})_t-\nabla\times({\bf u}\times \beta{\bf A})+{\bf u}\nabla{\bf\cdot}
(\beta {\bf A})\right]=0. \label{eq:god4.6}
\end{align}

The un-curled form of Faraday's equation (\ref{eq:god2.10}) can be 
written in the form:
\begin{equation}
{\bf A}_t-{\bf u}\times(\nabla\times {\bf A})+\nabla({\bf u}{\bf\cdot}{\bf A})=-\nabla\zeta, 
\label{eq:god4.7}
\end{equation}
where
\begin{equation}
\zeta=\psi-{\bf A}{\bf\cdot}{\bf u} \label{eq:god4.8}
\end{equation}
For the advected ${\bf A}$ gauge, $\zeta=0$ and $\psi={\bf A\cdot u}$. 
Substitute of (\ref{eq:god4.7}) into (\ref{eq:god4.6}) gives: 
\begin{align}
&-\boldsymbol{\eta}\times\nabla\zeta +\left[\boldsymbol{\eta}_t-{\bf u}\times(\nabla\times\boldsymbol{\eta})+\nabla({\bf u}{\bf\cdot}\boldsymbol{\eta})\right]\times{\bf A}\nonumber\\
&+\left[(\beta{\bf A})_t-\nabla\times({\bf u}\times \beta{\bf A})+{\bf u}\nabla{\bf\cdot}
(\beta {\bf A})\right]=0. \label{eq:god4.9}
\end{align}
In general, the vectors 
\begin{equation}
{\bf e}_1={\bf A},\quad {\bf e}_2=\boldsymbol{\eta},
\quad {\bf e}_3=\boldsymbol{\eta}\times {\bf A}, \label{eq:god4.10}
\end{equation}
are orthogonal vectors, and in principle, could be used to describe ${\bf B}$ and ${\bf u}$. 

To further simplify (\ref{eq:god4.9}) we make use of the magnetic helicity conservation law (\ref{eq:god3.3}) in the form:
\begin{equation}
\deriv{h_m}{t}+\nabla{\bf\cdot}\left({\bf u} h_m+{\bf B}\zeta\right)=0
\quad \hbox{where}\quad h_m={\bf A\cdot B}, \label{eq:god4.11}
\end{equation}
is the magnetic helicity density. We use the notation:
\begin{align}
{\bf F}=&\boldsymbol{\eta}_t-{\bf u}\times(\nabla\times\boldsymbol{\eta})
+\nabla({\bf u}{\bf\cdot}\boldsymbol{\eta})\label{eq:god4.12}\\
{\bf G}=&(\beta{\bf A})_t-\nabla\times({\bf u}\times \beta{\bf A})+{\bf u}\nabla{\bf\cdot}
(\beta {\bf A}). \label{eq:god4.13}
\end{align}
Using this notation, (\ref{eq:god4.9}) may be written as:
\begin{equation}
-\boldsymbol{\eta}\times\nabla\zeta+{\bf F}\times {\bf A}+{\bf G}=0. \label{eq:god4.14}
\end{equation}

By using (\ref{eq:god4.11}), the expression for ${\bf G}$ 
reduces to:
\begin{equation}
{\bf G}=-\frac{\bf A}{|{\bf A}|^2} {\bf B}{\bf\cdot}\nabla\zeta +\frac{h_m}{|{\bf A}|^2} 
\left(\frac{d{\bf A}}{dt}-{\bf A}{\bf\cdot}\nabla {\bf u}
- 2\hat{\bf A}\hat{\bf A}{\bf\cdot}
\frac{d{\bf A}}{dt}\right), \label{eq:god4.15}
\end{equation}
where $\hat{\bf A}={\bf A}/|{\bf A}|$ and 
$d{\bf A}/dt={\bf A}_t+{\bf u}{\bf\cdot}\nabla {\bf A}$.
To further reduce (\ref{eq:god4.15}) we use the identity:
\begin{equation}
\nabla({\bf A\cdot u})={\bf A}{\bf\cdot}\nabla {\bf u}+{\bf u}{\bf\cdot}\nabla{\bf A}
+{\bf u}\times (\nabla\times{\bf A})+{\bf A}\times (\nabla\times{\bf u}), \label{eq:god4.16}
\end{equation}
in the un-curled form of Faraday's equation (\ref{eq:god4.7}) to obtain:
\begin{equation}
\frac{d{\bf A}}{dt}+{\bf A}{\bf\cdot}\nabla {\bf u}+{\bf A}\times (\nabla\times{\bf u})
+\nabla\zeta=0. \label{eq:god4.17}
\end{equation}
 
${\bf G}$ can be split up into components perpendicular and parallel to ${\bf A}$ as
${\bf G}={\bf G}_\parallel +{\bf G}_\perp$, by noting that:
\begin{equation}
{\bf G}= -\frac{\bf A}{|{\bf A}|^2} {\bf B}{\bf\cdot}\nabla\zeta 
+\frac{h_m}{|{\bf A}|^2} 
\left[\left({\sf I}-\hat{\bf A}\hat{\bf A}\right)
\left(\frac{d{\bf A}}{dt} -{\bf A}{\bf\cdot}\nabla {\bf u}\right)
-\hat{\bf A}\hat{\bf A}{\bf\cdot}\left(\frac{d{\bf A}} {dt}
+{\bf A}{\bf\cdot}\nabla {\bf u}\right)\right], \label{eq:god4.18}
\end{equation}
where the projection tensor ${\sf P}_A=({\sf I}-\hat{\bf A}\hat{\bf A})$ annuls 
vectors parallel to ${\bf A}$, i.e. ${\sf P}_A{\bf A}=0$. 
Thus, ${\bf G}_\parallel$ and ${\bf G}_\perp$ are given by:
\begin{align}
{\bf G}_\parallel=&-\frac{\hat{\bf A}}{|{\bf A}|} {\bf B\cdot}\nabla\zeta
- \frac{h_m}{|{\bf A}|^2} \hat{\bf A}\hat{\bf A}{\bf\cdot}
\left(\frac{d{\bf A}}{dt}+{\bf A}{\bf\cdot}\nabla {\bf u}\right), \label{eq:god4.19}\\
{\bf G}_\perp=&\frac{h_m}{|{\bf A}|^2}\left({\sf I}-\hat{\bf A}\hat{\bf A}\right)
{\bf\cdot}\left(\frac{d{\bf A}}{dt} -{\bf A}{\bf\cdot}\nabla {\bf u}\right). 
\label{eq:god4.20}
\end{align}

Using $d{\bf A}/dt$ from (\ref{eq:god4.17}) in (\ref{eq:god4.19}) gives:
\begin{equation}
{\bf G}_\parallel= -\frac{\bf A}{|{\bf A}|^2} {\bf B}_\perp {\bf\cdot}\nabla\zeta
=-\frac{\bf A}{|{\bf A}|^2}(\boldsymbol{\eta}\times{\bf A}) {\bf\cdot}\nabla\zeta. 
\label{eq:god4.21}
\end{equation}
Similarly, (\ref{eq:god4.20}) reduces to:
\begin{equation}
{\bf G}_\perp= \frac{h_m}{|{\bf A}|^4} {\bf A}
\times\left[{\bf A}\times\left(2 {\bf A}{\bf\cdot}\nabla{\bf u}
+{\bf A}\times\boldsymbol{\omega}+\nabla\zeta\right)\right], \label{eq:god4.22}
\end{equation}
where $\boldsymbol{\omega}=\nabla\times{\bf u}$ is the fluid vorticity. Using the 
Cauchy-Stokes formula (\ref{eq:godA10}) results in the formula:
\begin{equation}
{\bf A}{\bf\cdot}\nabla {\bf u}=\frac{1}{2}\boldsymbol{\omega}\times {\bf A}
+\sf{\sigma}{\bf\cdot}{\bf A} +\frac{1}{3} (\nabla{\bf\cdot}{\bf u}) {\bf A}. 
\label{eq:god4.23}
\end{equation}
Substituting (\ref{eq:god4.23}) in (\ref{eq:god4.22}) gives the formula:
\begin{equation}
{\bf G}_\perp=\frac{h_m}{|{\bf A}|^4} {\bf A}
\times\left[{\bf A}\times\left(2 {\sf\sigma}{\bf\cdot}{\bf A}+\nabla\zeta\right)\right]. 
\label{eq:god4.24}
\end{equation}

Taking the scalar product of (\ref{eq:god4.14}) with $\hat{\bf A}$ results in the equation:
\begin{equation}
\left({\bf G}_\parallel-\boldsymbol{\eta}\times \nabla\zeta\right){\bf\cdot}\hat{\bf A}=0. 
\label{eq:god4.25}
\end{equation}
Using (\ref{eq:god4.21}) for ${\bf G}_\parallel$ in (\ref{eq:god4.25}), results in the 
balance equation:
\begin{equation}
-\frac{\nabla\zeta}{|{\bf A}|} {\bf\cdot}\left(\boldsymbol{\eta}\times {\bf A}
+{\bf A}\times \boldsymbol{\eta}\right)=0, \label{eq:god4.26}
\end{equation}
which is identically satisfied. 

The component of (\ref{eq:god4.14}) perpendicular to ${\bf A}$ gives the vector equation:
\begin{equation}
{\bf G}_\perp+{\bf F}\times {\bf A}+ \frac{\bf A}{|{\bf A}|^2} 
\times \left[{\bf A}\times 
(\boldsymbol{\eta}\times\nabla\zeta)\right]=0, 
\label{eq:god4.27}
\end{equation}
which reduces to the equation:
\begin{equation}
\left\{{\bf F}-\frac{h_m}{|{\bf A}|^4} {\bf A}\times \left[ 2({\sf\sigma}{\bf\cdot}{\bf A})
+\nabla\zeta\right] -\frac{{\bf A}\times(\boldsymbol{\eta}\times\nabla\zeta)}
{|{\bf A}|^2}\right\}\times{\bf A}=0. 
\label{eq:god4.28}
\end{equation}
Equation (\ref{eq:god4.28}) is satisfied if:
\begin{equation}
{\bf F}-\frac{h_m}{|{\bf A}|^4} {\bf A}\times \left[ 2({\sf\sigma}{\bf\cdot}{\bf A})
+\nabla\zeta\right] -\frac{{\bf A}\times(\boldsymbol{\eta}\times\nabla\zeta)}
{|{\bf A}|^2}=\alpha{\bf A}, \label{eq:god4.29}
\end{equation}
where $\alpha({\bf x},t)$ is a scalar function of ${\bf x}$ and $t$, which is yet to 
be determined. 

Using (\ref{eq:god4.29}) 
and (\ref{eq:god4.12}) for ${\bf F}$ 
gives the equation:
\begin{equation}
{\bf A}{\bf\cdot}\left[\boldsymbol{\eta}_t-{\bf u}\times(\nabla\times\boldsymbol{\eta})
+\nabla({\bf u}{\bf\cdot}\boldsymbol{\eta})\right]=\alpha |{\bf A}|^2. 
\label{eq:god4.30}
\end{equation}
To obtain a more useful form for $\alpha$ take the 
dot product of (\ref{eq:god4.7})  with $\boldsymbol{\eta}$ gives the 
equation:
\begin{equation}
\boldsymbol{\eta}{\bf\cdot}\left[{\bf A}_t-{\bf u}\times(\nabla\times{\bf A})
+\nabla({\bf u\cdot A})+\nabla\zeta\right]=0. \label{eq:god4.31}
\end{equation}
Adding (\ref{eq:god4.30}) and (\ref{eq:god4.31}), and using  
 ${\bf A\cdot}\boldsymbol{\eta}=0$, we obtain the 
equation:
\begin{equation}
\alpha=\frac{1}{|{\bf A}|^2}\left[ \boldsymbol{\eta}{\bf\cdot}
\left({\bf u}{\bf\cdot}\nabla{\bf A}+{\bf A}{\bf\cdot}\nabla{\bf u}\right)
+{\bf A}{\bf\cdot}\left({\bf u}{\bf\cdot}\nabla\boldsymbol{\eta}
+\boldsymbol{\eta}{\bf\cdot}\nabla{\bf u}\right)+\boldsymbol{\eta}{\bf\cdot}\nabla\zeta\right].
\label{eq:god4.32}
\end{equation}
Using $\boldsymbol{\eta}{\bf\cdot}{\bf A}=0$, (\ref{eq:god4.32}) can be written in the form:
\begin{equation}
\alpha=\frac{1}{|{\bf A}|^2}\left[{\bf A}\boldsymbol{\eta}
:\left(\nabla {\bf u}+(\nabla {\bf u})^T\right)+\boldsymbol{\eta}{\bf\cdot}\nabla\zeta\right].
\label{eq:god4.33}
\end{equation}
Using the Cauchy-Stokes formula (\ref{eq:godA10}) then gives expression (\ref{eq:god4.2}) 
for $\alpha$, namely:
\begin{equation}
\alpha=\frac{\left(2 {\bf A}{\bf\cdot}{\sf\sigma}{\bf\cdot}\boldsymbol{\eta}
+\boldsymbol{\eta}{\bf\cdot}\nabla\zeta\right)}
{|{\bf A}|^2}. \label{eq:god4.34}
\end{equation}

Next, using (\ref{eq:god4.12}) for ${\bf F}$, (\ref{eq:god4.29}) reduces to the equation:
\begin{equation}
\boldsymbol{\eta}_t-{\bf u}\times(\nabla\times\boldsymbol{\eta})
+\nabla({\bf u}{\bf\cdot}\boldsymbol{\eta})={\bf S}, \label{eq:god4.35}
\end{equation}
where the source term ${\bf S}$ is given by (\ref{eq:god4.2}). 

To obtain (\ref{eq:god4.1}), take the curl of 
(\ref{eq:god4.35}) to obtain the equation:
\begin{equation}
(\nabla\times\boldsymbol{\eta})_t-\nabla\times\left[{\bf u}
\times(\nabla\times\boldsymbol{\eta})\right]=\nabla\times{\bf S}. \label{eq:god4.36}
\end{equation}
Take the scalar product of (\ref{eq:god4.35}) with $\nabla\times\boldsymbol{\eta}$ and add 
the resultant equation to the scalar product of (\ref{eq:god4.36}) with $\boldsymbol{\eta}$ 
to obtain the Godbillon-Vey helicity transport equation (\ref{eq:god4.1}).

\appendix
\section*{Appendix C}
\setcounter{section}{3}
\setcounter{equation}{0}
 
In this appendix we discuss Clebsch potentials representations for 
${\bf A}$ in calculating the Godbillon-Vey helicity
$h_{\boldsymbol{\eta}}=\boldsymbol{\eta}{\bf\cdot}(\nabla\times
\boldsymbol{\eta})$ of Section 3.
 If we choose the Clebsch representation (\ref{eq:god3.25}):
\begin{equation}
{\bf A}=\nu\nabla\lambda+\nabla\phi,\quad {\bf B}=\nabla\times {\bf A}
=\nabla\nu\times\nabla\lambda, \label{eq:C1}
\end{equation}
we obtain:
\begin{equation}
{\bf A\cdot B}=\left(\nu\nabla\lambda+\nabla\phi\right){\bf\cdot}
(\nabla\nu\times\nabla\lambda) = \nabla\phi{\bf\cdot}
(\nabla\nu\times\nabla\lambda)
=\frac{\partial(\phi,\nu,\lambda)}{\partial(x,y,z)}. \label{eq:C2}
\end{equation}
Thus, ${\bf A\cdot B}=0$ if $\phi=\phi(\nu,\lambda)$. 

The Godbillon-Vey field $\boldsymbol{\eta}$ defined in (\ref{eq:god3.20}) 
is given by:
\begin{equation}
\boldsymbol{\eta}=\frac{{\bf A}\times {\bf B}}{|{\bf A}|^2}
=(\nu\nabla\lambda+\nabla\phi)\times(\nabla\nu\times\nabla\lambda)/A^2
=\eta_1{\bf e}^1+\eta_2{\bf e}^2, \label{eq:C3}
\end{equation}
where we use the notation:
\begin{equation}
{\bf e}^1=\nabla\nu,\quad {\bf e}^2=\nabla\lambda.
 \label{eq:C4}
\end{equation}
Below, we obtain a third independent Clebsch variable $\gamma$. 
 The Clebsch variables
$\nu$, $\lambda$ and $\gamma$ are independent Lagrange labels. 

From \cite{Golovin11} the Lie derivative operators:
\begin{equation}
X_1=\frac{d}{dt}=\derv{t}+{\bf u\cdot}\nabla, \quad X_2={\bf b}\equiv
\frac{\bf B}{\rho}{\bf\cdot}\nabla, \label{eq:C4a}
\end{equation}
commute  because of the frozen in field theorem
and  the mass continuity equation. Thus:
\begin{equation}
\left[\frac{d}{dt},{\bf b}\right]
=\left[\derv{t}+{\bf u}{\bf\cdot}\nabla, 
{\bf b}{\bf\cdot}\nabla\right]\equiv [X_1,X_2]=0.  \label{eq:C4b}
\end{equation}
Condition (\ref{eq:C4b}) implies that $X_1$ and $X_2$  form a 2D 
Lie algebra. The integrabilty conditions (\ref{eq:C4b})  by Frobenius theorem,
implies that $X_1$ and $X_2$ have the representations:
\begin{equation}
X_1=\frac{d}{dt}\equiv \left(\derv{t}\right)_{{\bf x}_0}, \quad 
X_2=\left(\derv{\gamma}\right)_t, \label{eq:C4c}
\end{equation}
where ${\bf x}_0$ correspond to  $\nu$, 
$\lambda$, and $\gamma$ which are  advected with the flow, i.e. 
\begin{equation}
\frac{d\nu}{dt}=\frac{d\lambda}{dt}=\frac{d\gamma}{dt}=0. \label{eq:C4d}
\end{equation}
Using the Lagrangian map:
\begin{equation}
x^s=x^s\left(t,\nu,\lambda,\gamma\right)=(t,x,y,z), \quad s=0,1,2,3, 
\label{eq:C4e}
\end{equation}
and using the notation:
\begin{equation}
(\xi^1,\xi^2,\xi^3)=(\nu,\lambda,\gamma), \label{eq:C4ea}
\end{equation}
for the independent Lagrange labels $\nu$, $\lambda$, $\gamma$,
it follows that:
\begin{equation}
{\bf e}^i\times {\bf e}^j=\frac{\epsilon_{ijk}}{\sqrt{g}} {\bf e}_k, 
\quad {\bf e}_i\times{\bf e}_j=\sqrt{g}\epsilon_{ijk} {\bf e}^k, \label{eq:C4f}
\end{equation}
where ${\bf e}_i=\partial {\bf x}/\partial \xi^i$ is the basis that is dual
to the base $\{{\bf e}^i\}$, 
i. e. $\langle {\bf e}^i, {\bf e}_j\rangle=\delta^i_j$. The metric tensor 
${\sf g}$ has covariant ($g_{ij}$) and contravariant ($g^{ij}$)  components
defined by 
\begin{align}
g_{ij}=&{\bf e}_i{\bf\cdot}{\bf e}_j, 
\quad g^{ij}={\bf e}^i{\bf\cdot}{\bf e}^j,
\quad g=\det\left(g_{ij}\right)=J^2, 
\nonumber\\
J=&\det\left(\frac{\partial x^i}{\partial\xi^j}\right)
={\bf e}_1 {\bf\cdot}\left({\bf e}_2\times {\bf e}_3\right)=\sqrt{g}, 
\label{eq:C4g}
\end{align}
(e.g. \cite{Boozer04}).

Note from (\ref{eq:C4f}) that:
\begin{equation}
{\bf B}={\bf e}^1\times{\bf e}^2=\frac{{\bf e}_3}{\sqrt{g}}. \label{eq:C5}
\end{equation}
The coefficients $\eta_1$ and $\eta_2$ in (\ref{eq:C3}) are given by:
\begin{align}
\eta_1=&\left[(\nu+\phi_\lambda) g^{22}
+\phi_\nu g^{12}+\phi_{\gamma} g^{32}\right]/|{\bf A}|^2, 
\nonumber\\
\eta_2=&-\left[(\nu+\phi_\lambda)g^{21}+\phi_\nu g^{11}+\phi_\gamma g^{31}
\right]/|{\bf A}|^2. \label{eq:C6}
\end{align}
Note that the Lagrangian mass continuity equation
$\rho d^3x=\rho_0 d^3x_0$ reduces to $\rho J\equiv \rho\sqrt{g}=\rho_0$.
Choosing $\rho_0=1$ we find:
\begin{equation}
X_2=\frac{{\bf B}}{\rho}{\bf\cdot}\nabla=\frac{{\bf e}_3}
{\sqrt{g}\rho}{\bf\cdot}\nabla={\bf e}_3{\bf\cdot}\nabla=\derv{\gamma}, 
\label{eq:C7}
\end{equation}
which verifies (\ref{eq:C4c}). 

A straightforward calculation gives:
\begin{equation}
h_{\boldsymbol{\eta}}=\boldsymbol{\eta}{\bf\cdot}
(\nabla\times\boldsymbol{\eta}) =\left({\bf e}^1\times{\bf e}^2\right)
{\bf\cdot}\left[\eta_2\nabla\eta_1-\eta_1\nabla\eta_2\right]
={\bf B}{\bf\cdot} \left(\eta_2\right)^2\nabla\left(\eta_1/\eta_2\right). 
\label{eq:C8}
\end{equation}
In the special case where $\phi=0$ (\ref{eq:C8}) simplifies to:
\begin{equation}
h_{\boldsymbol{\eta}}=-\left(\eta_2\right)^2 
{\bf B\cdot}\nabla\left(g^{22}/g^{21}\right). \label{eq:C9}
\end{equation}

If $\phi_\gamma=0$, i.e. $\phi=\phi(\nu,\lambda)$ then ${\bf A\cdot B}=0$ 
and the space is then foliated 
(\cite{Reinhart73} and \cite{Rovenski18, Rovenski19}). 

\appendix
\section*{Appendix D}
\setcounter{section}{4}
\setcounter{equation}{0}
From (\ref{eq:god3.25}), the condition
\begin{equation}
\tilde{\bf A}={\bf A}+\nabla\phi=\nu\nabla\lambda+\nabla\phi=\tilde{\nu}\nabla\tilde{\lambda}, 
\label{eq:D1}
\end{equation}
for a gauge transformation will be satisfied (we assume $\tilde{\lambda}$ 
and $\tilde{\nu}$ are functions of $\lambda$ and $\nu$) if: 
\begin{equation}
\tilde{\nu}\deriv{\tilde\lambda}{\lambda}=\nu+\phi_{\lambda}, \quad 
\tilde{\nu}\deriv{\tilde\lambda}{\nu}=\phi_\nu. \label{eq:D2}
\end{equation}
The integrability conditions of (\ref{eq:D2}) are: 
\begin{equation}
\frac{\partial^2{\tilde\lambda}}{\partial\lambda\partial\nu}=
\frac{\partial^2{\tilde\lambda}}{\partial\nu\partial\lambda}. \label{eq:D3}
\end{equation}
 The integrability equations (\ref{eq:D3}) are satisfied
if $\tilde{\nu}$ satisfies the first order partial differential equation:
\begin{equation}
\deriv{\tilde\nu}{\lambda}\deriv{\phi}{\nu} +\deriv{\tilde\nu}{\nu}
\left(-\deriv{\phi}{\lambda}-\nu\right)+\tilde{\nu}=0. 
\label{eq:D4}
\end{equation}
The first order partial differential equation for $\tilde{\nu}$ may be solved 
in principle by integrating the characteristics:
\begin{equation}
\frac{d\lambda}{\phi_{\nu}}=\frac{d\nu}{-\phi_\lambda-\nu}
=-\frac{d\tilde{\nu}}{\tilde{\nu}}, \label{eq:D5}
\end{equation}
(\cite{Sneddon57}). After the solution of (\ref{eq:D4})-(\ref{eq:D5}) 
 is established, 
the solution for $\tilde{\lambda}$ can be obtained by integrating, 
the guaranteed integrable equation system  (\ref{eq:D2}). 

\appendix
\section*{Appendix E}
\setcounter{section}{5}
\setcounter{equation}{0}

In this appendix we obtain the \cite{Reinhart73} 
version of the Godbillon Vey helicity of a co-dimension 1 foliation 
in 3D geometry (see also \cite{Rovenski18,Rovenski19}). 
We use both differential forms and more classical approaches to 
the geometry of foliations in our analysis. 
The \cite{Reinhart73} formula  
could in principle  be obtained by using the method of moving frames 
(e.g. \cite{Flanders63} Chapter 4).

The Godbillon-Vey invariant $h_{gv}$ is defined as:
\begin{equation}
H_{gv}=\int_{V_m} \eta\wedge d\eta, \label{eq:E1}
\end{equation}
where $\eta$ is the Godbillon-Vey 1-form  defined below.

The Serret-Frenet equations for the normal curve to the 
surface $\Phi({\bf x})=const.$ of the foliation have the form:
\begin{equation}
\nabla_{\bf T}{\bf T}=\kappa {\bf N}, \quad 
\nabla_{\bf T}{\bf N}= -\kappa {\bf T}
+\tau {\bf B}, \quad \nabla_{\bf T}{\bf B}=-\tau {\bf N}, \label{eq:E2}
\end{equation}
where
\begin{equation}
\nabla_{\bf T}=\frac{d}{ds}={\bf T}{\bf\cdot}\nabla, \label{eq:E3}
\end{equation}
 is the directional derivative along the tangent vector to the 
normal curve (i.e. ${\bf T}$ is the normal to each of the surfaces 
of the foliation $\Phi({\bf x})=const.)$.
 Here we assume ${\bf A}{\bf\cdot}\nabla\times{\bf A}=0$ from which 
it follows that ${\bf A}{\bf\cdot}d{\bf x}=0$
is integrable, i.e. there exists an integrating factor $\mu$ where 
$\mu {\bf A}=\nabla\Phi$ and 
\begin{equation}
{\bf T}=\hat{\bf A}=\frac{\bf A}{A}
\equiv \frac{\nabla{\Phi}}{|\nabla{\Phi}|} \quad\hbox{and}\quad A=|{\bf A}|.
\label{eq:E4}
\end{equation}
The base vectors $({\bf T},{\bf N}, {\bf B})$ form an orthonormal triad 
where
\begin{equation}
{\bf T}\times{\bf N}={\bf B}, \label{eq:E5}
\end{equation}
where ${\bf N}$ is the principal normal and ${\bf B}$ is the binormal 
to the curve.
$\kappa$ and $\tau$ are the curvature and torsion of the curve. 
To simplify the notation 
 we write (\ref{eq:E2}) in the form:
\begin{equation}
\nabla_{{\bf e}_3} {\bf e}_3=\kappa {\bf e}_1,
\quad \nabla_{{\bf e}_3} {\bf e}_1=-\kappa {\bf e}_3+\tau{\bf e}_2, 
\quad \nabla_{{\bf e}_3} {\bf e}_2=-\tau {\bf e}_1, \label{eq:E6}
\end{equation}
where $({\bf e}_1,{\bf e}_2,{\bf e}_3)\equiv ({\bf N},{\bf B},{\bf T})$. 
We could use a more general orthonormal triad 
$({\bf d}_1,{\bf d}_2,{\bf d}_3)$ to frame the curve, which does not have 
ambiguity if the curve is a straight line (e.g. \cite{Bishop75}). 
We use the standard notation 
\begin{equation}
g_{ij}={\bf e}_i{\bf\cdot}{\bf e}_j=g({\bf e}_i,{\bf e}_j), \quad
g^{ij}={\bf e}^i{\bf\cdot}{\bf e}^j, \label{eq:E7}
\end{equation}
for the covariant ($g_{ij}$) and contravariant ($g^{ij}$) components of the 
metric tensor. $\{{\bf e}^i\}$ is dual to the base 
$\left\{{\bf e}_i\right\}$, 
i.e. $\left\langle{\bf e}^i,{\bf e}_j\right\rangle =\delta^i_j$ where 
$\delta^i_j$ is the Kronecker-delta symbol. We also use the affine connection 
formulae:
\begin{equation}
\deriv{{\bf e}_i}{q^j}\equiv ({\bf e}_j{\bf\cdot}\nabla){\bf e}_i
=\Gamma^s_{ij}{\bf e}_s, \quad \deriv{{\bf e}^i}{q^j}=-\Gamma^i_{sj} {\bf e}^s,   \label{eq:E8}
\end{equation}
where $\Gamma^s_{ij}$ are the affine connection coefficients. Because  
 $({\bf e}_1,{\bf e}_2,{\bf e}_3)$ are orthonormal we obtain:
\begin{equation}
g_{ij}=\delta_{ij}={\bf e}_i{\bf\cdot}{\bf e}_j,
\quad g^{ij}={\bf e}^i{\bf\cdot}{\bf e}^j=\delta^{ij}, \label{eq:E9}
\end{equation}
The coordinates $\{q^i\}$ are local and not global 
coordinates, but they suffice for the local description of the 
foliation.
Differentiation of (\ref{eq:E9}) with respect to the $q^a$ gives the equations:
\begin{equation}
\left(\nabla_{{\bf e}_a}{\bf e}_i\right){\bf\cdot}{\bf e}_j
+{\bf e}_i{\bf\cdot}\left(\nabla_{{\bf e}_a}{\bf e}_j\right)=0, \label{eq:E10}
\end{equation}
which using (\ref{eq:E8}) reduces to the relations:
\begin{equation}
\Gamma^j_{ia}+\Gamma^i_{ja}=0. \label{eq:E11}
\end{equation}

Using (\ref{eq:E8})-(\ref{eq:E11}) we obtain the results:
\begin{align}
\Gamma^1_{33}=&\kappa,\quad \Gamma^2_{33}=\Gamma^3_{33}=0, \nonumber\\
\Gamma^3_{13}=&-\kappa,\quad \Gamma^2_{13}=\tau,\quad \Gamma^1_{13}=0, 
\nonumber\\
\Gamma^1_{23}=&-\tau,\quad \Gamma^2_{23}=0. \label{eq:E12}
\end{align}
The second fundamental form ${\rm II}$ for the surface is given by:
\begin{align}
{\rm II}=&d^2{\bf x}{\bf\cdot}{\bf e}_3=\Gamma^3_{\alpha\beta}dq^\beta dq^\alpha\nonumber\\
=&\Gamma^3_{11}(dq^1)^2+\Gamma^3_{22} (dq^2)^2 +\left(\Gamma^3_{12}
+\Gamma^3_{21}\right) dq^1 dq^2\nonumber\\
\equiv& h_{11}(dq^1)^2 +h_{22}(dq^2)^2+(h_{21}+h_{12})dq^1 dq^2, \label{eq:E13}
\end{align}
where 
\begin{equation}
h_{ij}=\Gamma^3_{ji}\equiv g(\nabla_{{\bf e}_i} {\bf e}_j, {\bf e}_3), \quad (i,j=1,2), \label{eq:E14}
\end{equation}
define the coefficients for the second fundamental form 
(e.g. \cite{Lipschutz69}). 

The Godbillon-Vey one form is given by:
\begin{equation}
\eta=\kappa {\bf e}^1, \label{eq:E15}
\end{equation}
(e.g. \cite{Rovenski18,Rovenski19}). 
Taking the exterior derivative of (\ref{eq:E15}) gives:
\begin{equation}
d\eta=-\kappa\Gamma^1_{ps}
 {\bf e}^p\wedge {\bf e}^s+\deriv{\kappa}{q^s} {\bf e}^1\wedge {\bf e}^s. 
\label{eq:E16}
\end{equation}
The Godbillon-Vey 3-form is given by
\begin{align}
h_{gv}^{RW} dV_g=&\eta\wedge d\eta=
-\kappa^2 \Gamma^1_{ps}{\bf e}^1\wedge {\bf e}^p\wedge {\bf e}^s
=-\kappa^2\left(\Gamma^1_{23}-\Gamma^1_{32}\right)dV_g\nonumber\\
=&\kappa^2\left(\tau-h_{21}\right) dV_g
\equiv \kappa^2\left(\tau-h_{BN}\right) dV_g, \label{eq:E17}
\end{align}
where the superscript ${RW}$ in (\ref{eq:E17}) refers to \cite{Reinhart73}, 
and  
\begin{equation}
dV_g=dq^1\wedge dq^2\wedge dq^3
\equiv {\bf e}^1\wedge{\bf e}^2\wedge {\bf e}^3, \label{eq:E18}
\end{equation}
is the volume element for the 3-form (\ref{eq:E17}). The Godbillon-Vey 3-form
(\ref{eq:E17}) is the formula given by \cite{Reinhart73} and 
\cite{Rovenski18,Rovenski19} 
 (note $h_{21}=h_{BN}$ in \cite{Rovenski18, 
Rovenski19}). 

The Godbillon-Vey 3-form (\ref{eq:E17}) is  equivalent 
to the Godbillon 
Vey 3-form used in the present paper in the sense that the helicity density 
$h_{gv}^{RW}\equiv h_{gv}$ where $h_{gv}$ is the Godbillon-Vey helicity 
density used in the present paper, modulo a pure divergence 
term, i.e., $h_{gv}^{RW}=h_{gv}+\nabla{\bf\cdot R}$ (see below). 
The differences of these 2 forms are described below.   
Following \cite{Reinhart73} and \cite{Rovenski18, Rovenski19} 
we first identify a one form $\omega$ that is dual to the normal
 ${\bf T}$ to the foliation, such that
\begin{equation}
\boldsymbol{\omega}({\bf T})\equiv{\bf T}\lrcorner \boldsymbol{\omega}=1. 
 \label{eq:E19}
\end{equation}
The analog of the Serret-Frenet equation for ${\bf T}$ in (\ref{eq:E2})
using the dual one-form $\boldsymbol{\omega}$ is given by Cartan's 
magic formula:
\begin{equation}
{\cal L}_{\bf T}(\boldsymbol{\omega})
={\bf T}\lrcorner d\boldsymbol{\omega} 
+d ({\bf T}\lrcorner \boldsymbol{\omega})\equiv {\bf T}\lrcorner 
d\boldsymbol{\omega}, \label{eq:E20}
\end{equation}
because ${\bf T}\lrcorner \boldsymbol{\omega}=1$. 

There is some freedom in the 
 choice of $\omega_i$ and $T^i$ in (\ref{eq:E19}). 
For example if we choose:
\begin{equation}
{\bf T}=\frac{{\bf A}}{A^2},\quad \hbox{then}\quad 
\boldsymbol{\omega}={\bf A}{\bf\cdot}d{\bf x}. \label{eq:E21}
\end{equation}
Here $A^2=|{\bf A}|^2$. We find:
\begin{align}
d\boldsymbol{\omega}=&{\bf B}{\bf\cdot}d{\bf S}=B_x dy\wedge dz
+B_y dz\wedge dx+B_z dx\wedge dy, \nonumber\\
{\cal L}_{\bf T}(\boldsymbol{\omega})=& \frac{\bf A}{A^2} 
\lrcorner ({\bf B}{\bf\cdot}d{\bf S})=-\frac{({\bf A}\times{\bf B})
{\bf\cdot}d{\bf x}}{A^2}
=-\boldsymbol{\eta}{\bf\cdot}d{\bf x}, \label{eq:E22}
\end{align}
where
\begin{equation}
\boldsymbol{\eta}=\frac{({\bf A}\times{\bf B})}{A^2}, 
\label{eq:E23}
\end{equation}
which is the form of the Godbillon-Vey vector field used in the present paper. 

Alternatively if we use the usual Serret-Frenet equations involving
 $({\bf e}_1,{\bf e}_2,{\bf e}_3)$, we set
\begin{equation}
{\bf T}=\hat{\bf A}=\frac{\bf A}{A},\quad 
\boldsymbol{\omega}=\hat{\bf A}{\bf\cdot}d{\bf x}, \label{eq:E24}
\end{equation}
and a similar calculation to that in (\ref{eq:E22}) gives:
\begin{equation}
{\cal L}_{\hat{\bf A}}\left(\hat{\bf A}{\bf\cdot}d{\bf x}\right)
=-\hat{\bf A}\times 
(\nabla\times\hat{\bf A}){\bf\cdot}d{\bf x}=-\hat{\boldsymbol{\eta}}{\bf\cdot}
d{\bf x}, \label{eq:E25}
\end{equation}
where
\begin{equation}
\hat{\boldsymbol{\eta}}=\hat{\bf A}\times
(\nabla\times\hat{\bf A})\equiv -\hat{\bf A}{\bf\cdot}\nabla \hat{\bf A}. 
\label{eq:E26}
\end{equation}
In (\ref{eq:E26}) 
$\hat{\boldsymbol{\eta}}=-\kappa {\bf e}_1$
which is $-\nabla_{\bf T} {\bf T}$  where ${\bf T}\equiv \hat{\bf A}$ 
(We could have chosen $\hat{\boldsymbol{\eta}}$ to be $\nabla_{\bf T} {\bf T}$
which corresponds to the \cite{Reinhart73} formulation).
From (\ref{eq:E23}) and (\ref{eq:E26}) we obtain:
\begin{equation}
\hat{\boldsymbol{\eta}}=\boldsymbol{\eta}-{\bf w}, \label{eq:E27}
\end{equation}
where
\begin{equation}
{\bf w}=\hat{\bf A}\times(\nabla A\times \hat{\bf A})/A\equiv 
\left({\sf I}-\hat{\bf A}\hat{\bf A}\right){\bf\cdot}\nabla\ln A. \label{eq:E28}
\end{equation}
Using the fact that ${\bf A}{\bf\cdot}{\bf B}=0$ in the analysis, and using 
(\ref{eq:E27}) and (\ref{eq:E28}) we obtain:
\begin{equation}
\hat{\boldsymbol{\eta}}{\bf\cdot}\nabla\times \hat{\boldsymbol{\eta}}
=\boldsymbol{\eta}{\bf\cdot}\nabla\times\boldsymbol{\eta}+
\nabla{\bf\cdot}{\bf R}, \label{eq:E29} 
\end{equation}
where
\begin{equation}
{\bf R}=\frac{{\bf B\cdot}\nabla A}{A^3}{\bf A}-2\lambda {\bf B} 
+\lambda \nabla\ln A\times{\bf A}\quad  
\hbox{and}\quad \lambda=\frac{{\bf A\cdot}\nabla A}{A^3}. \label{eq:E30}
\end{equation}
Assuming, that $\nabla{\bf\cdot R}$ on 
the right handside of (\ref{eq:E29})
when integrated over the whole volume $V_g$ vanishes, we 
obtain the Godbillon-Vey invariant:
\begin{equation}
\hat{H}_{gv}=\int_{V_m} \hat{\boldsymbol{\eta}}{\bf\cdot}\nabla\times 
\hat{\boldsymbol{\eta}}\ d^3{\bf x}
= \int_{V_m} \boldsymbol{\eta}{\bf\cdot}\nabla\times\boldsymbol{\eta}
\ d^3{\bf x}. \label{eq:E31}
\end{equation}
The Godbillon-Vey helicity integral $\hat{H}_{gv}$  in (\ref{eq:E31}) 
is that of 
\cite{Reinhart73} and \cite{Rovenski18,Rovenski19},  
which is equivalent to the Godbillon-Vey helicity 
$H_{gv}$
used in the present paper, provided ${\bf R\cdot n}=0$ on the 
boundary $\partial V_m$ of the volume $V_m$. 
 
\appendix
\section*{Appendix F}
\setcounter{section}{6}
\setcounter{equation}{0} 
In this appendix we derive the magnetic field representation (\ref{eq:ll22})-(\ref{eq:ll23}) for ${\bf A}$ 
and ${\bf B}$ for the \cite{Low90} nonlinear force free magnetic fields. 
The condition ${\bf A\cdot B}=0$ implies that the 
Pfaffian ${\bf A\cdot}d{\bf x}$ is integrable, which in turn implies 
${\bf A}$ can be written in the form:
\begin{equation}
{\bf A}=\chi\nabla\Phi,\quad  {\bf B}=\nabla\chi\times\nabla\Phi. \label{eq:F1}
\end{equation}
From (\ref{eq:F1}) we obtain the equations:
\begin{equation}
A_r=\chi\deriv{\Phi}{r}=0,\quad \frac{\chi}{r}\deriv{\Phi}{\theta}= A_\theta, 
\quad \frac{\chi}{r\sin\theta}\deriv{\Phi}{\phi}=A_\phi, \label{eq:F2}
\end{equation}
where $A_\theta$ and $A_\phi$ are given by (\ref{eq:ll16}). The integrability 
conditions for (\ref{eq:F2}), i.e. $\Phi_{\theta\phi}=\Phi_{\phi\theta}$, 
implies that $\chi$ must satisfy the first order, linear partial differential
equation:
\begin{equation}
r\sin\theta \deriv{\chi}{\theta}-r A_\theta\deriv{\chi}{\phi}
+\chi\left[\derv{\phi}(rA_\theta)-\derv{\theta}\left(r\sin\theta A_\phi\right)
\right]=0. \label{eq:F3}
\end{equation}
The characteristics of (\ref{eq:F3}) are given by:
\begin{equation}
\frac{dr}{ds}=0,\quad \frac{d\theta}{ds}=r\sin\theta A_\phi, 
\quad \frac{d\phi}{ds}= -r A_\theta, \quad \frac{d\chi}{ds} =-W\chi, 
\label{eq:F4}
\end{equation}
where $s$ is the affine parameter along the characteristics,
\begin{equation}
W=\derv{\phi}\left(r A_\theta\right)
-\derv{\theta}\left(r\sin\theta A_\phi\right)
\equiv -r^2\sin\theta B_r
=\frac{\sin\theta (dP/d\mu)}{r^n}. \label{eq:F5}
\end{equation}
Integrating the characteristics (\ref{eq:F4}) gives the integrals:
\begin{equation}
r=c_1,\quad \phi+\frac{a}{n}\int^\mu \frac{d\mu}{(1-\mu^2)} P(\mu)^{1/n}
=c_2\equiv \xi, \quad \frac{\chi}{P(\mu)}=c_3, \label{eq:F6}
\end{equation}
where $c_1$, $c_2$, and $c_3$ are integration constants. Thus, using the theory of characteristics for first order partial differential equations, the solution of (\ref{eq:F3}) for $\chi$ have the form:
\begin{equation}
\chi=g(\xi,r) P(\mu), \quad 
\xi=\phi+\frac{a}{n}\int^\mu \frac{d\mu}{(1-\mu^2)} P(\mu)^{1/n}, 
\label{eq:F7}
\end{equation}
where $g(\xi,r)$, for the moment is an arbitrary function of $\xi$ and $r$. 

Returning to (\ref{eq:F1})-(\ref{eq:F2}) 
we require that $\Phi$ satisfy the equations:
\begin{align}
\deriv{\Phi}{r}=&\frac{1}{\chi} A_r=0, \nonumber\\
\deriv{\Phi}{\mu}=&-\frac{1}{\sin\theta}\deriv{\Phi}{\theta}=\frac{1}{g(\xi,r)}
\frac{a P(\mu)^{1/n}}{r^n (1-\mu^2)}, \nonumber\\
\deriv{\Phi}{\phi}=&\frac{1}{r^n g(\xi,r)}. \label{eq:F8}
\end{align}
Integrating (\ref{eq:F8}) gives the solution for $\Phi$ of the form:
\begin{equation}
\Phi=\int^\xi \frac{d\xi'}{G(\xi')}
\quad\hbox{where}\quad G(\xi)=r^n g(\xi,r). \label{eq:F9}
\end{equation}
In (\ref{eq:F9}) the form of $g(\xi,r)=r^{-n}G(\xi)$ is required because 
$A_r=0$ and $\partial\Phi/\partial r=0$ where $G(\xi)$ is an arbitrary
function of $\xi$. 

To summarize, the above analysis implies  
the solutions (\ref{eq:ll23})-(\ref{eq:ll24}) for ${\bf A}$, 
${\bf B}$, $\Phi$, $\chi$ and $\xi$ given in the text. 

\appendix
\section*{Appendix G}
\setcounter{section}{7}
\setcounter{equation}{0}

In this appendix we compute the curvature $\kappa$, the torsion $\tau$ 
of the normal curve to the foliation $\xi=const.$ for the \cite{Low90}
nonlinear force free magnetic field for which the magnetic vector potential
has the form:
\begin{equation}
{\bf A}=\chi\nabla\xi, \label{eq:G1}
\end{equation}
where
\begin{align}
\xi=&\phi+\frac{a}{n} \int^\mu \frac{P(\mu')^{1/n}}{1-\mu^{'2}}\ d\mu', 
\label{eq:G2}\\
\chi=&\frac{P(\mu)}{r^n}, \label{eq:G3}
\end{align}
(see (\ref{eq:ll22}) et seq.).

Using  (\ref{eq:G1})-(\ref{eq:G3}) we obtain (after some algebra, described 
below), the formulae:
\begin{align}
\kappa=&\frac{\left(\zeta^2+\sin^2\theta\right)^{1/2}}{r\sin\theta}, 
\label{eq:G4}\\
\tau=&\frac{\gamma\sin^3\theta}{(\gamma^2+1)^{1/2}  
r\left(\zeta^2+\sin^2\theta\right)}
\frac{d}{d\mu}\left(\frac{\zeta}{\sin\theta}\right), \label{eq:G5}\\
h_{21}=&-\frac{\zeta\sin\theta}{r \left(\zeta^2+\sin^2\theta\right)}
\frac{d}{d\mu}\left(\frac{\gamma\sin\theta}{(\gamma^2+1)^{1/2}}\right), 
\label{eq:G6}\\
\hat{h}_{gv}=&\kappa^2(\tau-h_{21})=\frac{1}{r^3}
\frac{d}{d\mu}\left(\frac{\gamma\zeta}{(\gamma^2+1)^{1/2}}\right) 
\label{eq:G7}
\end{align}
where
\begin{equation}
\zeta=\frac{\left(\mu+\gamma\gamma_\mu (1-\mu^2)/(\gamma^2+1)\right)}
{(\gamma^2+1)^{1/2}},\quad \gamma=\frac{a}{n} P(\mu)^{1/n}. \label{eq:G8}
\end{equation}
Here  $h_{21}$ is the component of the second fundamental form for the 
foliation $\xi=const.$ described in Appendix E 
in (\ref{eq:E13})-(\ref{eq:E14}) and (\ref{eq:E17}), and $\hat{h}_{gv}$
is the Godbillon-Vey helicity density described by the \cite{Reinhart73} 
form (\ref{eq:E17}). 
The derivation of the formulae (\ref{eq:G1})-(\ref{eq:G8}) 
are described below. 

The basis of the above results (\ref{eq:G1})-(\ref{eq:G8}) is the moving 
trihedron $({\bf T},{\bf N},{\bf B})$ describing the curve normal to 
the foliation with tangent vector ${\bf T}=\hat{\bf A}$ where 
${\bf A}$ is the magnetic vector potential (\ref{eq:ll27}) which is normal
to the foliation. Here ${\bf N}$ is the principal normal to the 
curve and ${\bf B}$ is the bi-normal to the curve. $({\bf T},{\bf N},{\bf B})$
satisfy the Serret Frenet equations (\ref{eq:E2}).  The tangent vector 
${\bf T}$ is given by:
\begin{equation}
{\bf T}=\hat{\bf A}=\frac{\nabla\xi}{|\nabla\xi|}=\frac{({\bf e}_{\phi}
-\gamma {\bf e}_\theta)}{(\gamma^2+1)^{1/2}}. \label{eq:G9}
\end{equation}
Calculating $d{\bf T}/ds={\bf T}{\bf\cdot}\nabla {\bf T}$ we obtain:
\begin{equation}
\frac{d{\bf T}}{ds}=\kappa {\bf N}=-\frac{{\bf e}_r}{r} 
-\frac{({\bf e}_\theta+\gamma {\bf e}_\phi)\zeta}{(\gamma^2+1)^{1/2}\sin\theta}, \label{eq:G10}
\end{equation}
from which we identify:
\begin{equation}
{\bf N}=-\frac{[\sin\theta {\bf e}_r+\zeta[{\bf e}_\theta+\gamma {\bf e}_\phi]
/(\gamma^2+1)^{1/2}]}
{(\zeta^2+\sin^2\theta)^{1/2}}, \label{eq:G11}
\end{equation} 
as the principal normal to the curve and 
\begin{equation}
\kappa=\frac{(\zeta^2+\sin^2\theta)^{1/2}}{r\sin\theta}, \label{eq:G12}
\end{equation}
as the principal curvature of the curve. 
The bi-normal to the curve is given by
\begin{equation}
{\bf B}={\bf T}\times{\bf N}= \frac{\left\{\zeta {\bf e}_r
-\sin\theta({\bf e}_\theta+\gamma {\bf e}_\phi)/(\gamma^2+1)^{1/2}\right\}}
{(\zeta^2+\sin^2\theta)^{1/2}}. \label{eq:G13}
\end{equation}
In the above calculations we used the formulas:
\begin{align}
\frac{d}{ds}=&{\bf T}{\bf\cdot}\nabla=\frac{1}{(\gamma^2+1)^{1/2} r\sin\theta}
\left[\derv{\phi}-\gamma\sin\theta\derv{\theta}\right], \nonumber\\
{\bf e}_r=&\left(\sin\theta\cos\phi,\sin\theta\sin\phi,\cos\theta\right), 
\nonumber\\
{\bf e}_\theta=&(\cos\theta\cos\phi,\cos\theta\sin\phi,-\sin\theta),\nonumber\\
{\bf e}_\phi=&(-\sin\phi,\cos\phi,0), \label{eq:G14}
\end{align}
The derivatives of the spherical polar unit vectors  ${\bf e}_r$, 
${\bf e}_\theta$ and ${\bf e}_\phi$ are:
\begin{align} 
\deriv{{\bf e}_r}{r}=&0,\quad \deriv{{\bf e}_r}{\theta}={\bf e}_\theta, \quad 
\deriv{{\bf e}_r}{\phi}=\sin\theta{\bf e}_\phi, \nonumber\\
\deriv{{\bf e}_\theta}{r}=&0,\quad \deriv{{\bf e}_\theta}{\theta} =-{\bf e}_r,
\quad \deriv{{\bf e}_\theta}{\phi}=\cos\theta{\bf e}_\phi, \nonumber\\
\deriv{{\bf e}_\phi}{r}=&0, \quad \deriv{{\bf e}_\phi}{\theta}=0, \quad
\deriv{{\bf e}_\phi}{\phi}=-[\sin\theta{\bf e}_r+\cos\theta{\bf e}_\theta].
\label{eq:G15}
\end{align}

The torsion $\tau$ in (\ref{eq:G5}) follows by noting that:
\begin{equation}
\tau=-{\bf N\cdot}({\bf T\cdot}\nabla {\bf B})\equiv -{\bf e}_1
{\bf\cdot}(\nabla_{{\bf e}_3} {\bf e}_2)=-\Gamma^1_{23}, \label{eq:G16}
\end{equation}
Also note that the coefficient $h_{21}$ in the \cite{Reinhart73} formula
in (\ref{eq:G6})-(\ref{eq:G7}) is given by:
\begin{equation}
h_{21}=\Gamma^3_{12}={\bf e}_3{\bf\cdot}(\nabla_{{\bf e}_2} {\bf e}_1)
={\bf T\cdot}(\nabla_{\bf B} {\bf N}). \label{eq:G17}
\end{equation}
It is straightforward to calculate the other coefficients $h_{ij}$ 
($i,j=1,2$) defining the second fundamental form of the surface.

\medskip
\medskip
\end{document}